\newtheorem{example}{Example}
\newtheorem{theorem}{Theorem}
\newtheorem{definition}{Definition}
\newtheorem{corollary}[theorem]{Corollary}
\newtheorem{observation}[theorem]{Observation}
\newtheorem{proposition}[theorem]{Proposition}
\begin{document}

\title{Online Housing Market}

\author{\name Julien Lesca \email julien.lesca@dauphine.fr \\
       \addr Universit\'e Paris-Dauphine, Universit\'e PSL, CNRS, LAMSADE, 75016, Paris, France}

% For research notes, remove the comment character in the line below.
% \researchnote

\maketitle

\begin{abstract}
This paper studies an online variant of the celebrated housing market problem \cite{SHAPLEY197423}, where each agent owns a single house and seeks to exchange it for another based on her preferences. In this online setting, agents may arrive and depart at any time, meaning that not all agents are present on the housing market simultaneously. We extend the well-known serial dictatorship and Gale’s top trading cycle mechanisms to this online scenario, aiming to retain their desirable properties such as Pareto efficiency, individual rationality, and strategy-proofness. These extensions also seek to prevent agents from strategically delaying their arrival or advancing their departure. We demonstrate that achieving all of these properties simultaneously is impossible in the online context, and we present several variants that achieve different subsets of these properties.\end{abstract}

\section{Introduction}

Allocating indivisible resources to agents is a fundamental problem in computational social choice, which lies at the intersection of economics \cite{THOMSON2011393} and computer science \cite{Klaus_Manlove_Rossi_2016,manlove2013algorithmics}. The decision maker in such problems must consider both the preferences of the agents over the resources and their strategic behavior. In this paper, we focus on a specific problem known as the housing market in the matching theory literature \cite{SHAPLEY197423}, where each agent is endowed with a single resource that they are willing to exchange for another. Furthermore, we assume that no monetary compensation is allowed to offset any unfavorable exchanges during the process. Despite its simplicity, this problem has many applications, including the exchange of dormitory rooms for students \cite{RePEc:aea:aecrev:v:92:y:2002:i:5:p:1669-1686}, the trade of used items \cite{swapz}, kidney exchanges where incompatible donor/recipient pairs exchange viable organs for transplants \cite{ferrari2015kidney}, and many more (see, e.g., \cite{BiroTRENDS2017} for additional applications).

In this paper, we assume that the preferences over resources are ordinal, which is a standard assumption for the housing market. By doing so, we implicitly assume that the decision maker cannot assess the magnitude of preferences between resources, either because this information is unavailable or because it is too costly to obtain through interaction with the agents. The procedure for reallocating resources among agents is centralized, and the decision maker aims to produce an allocation that is as efficient as possible

Under ordinal preferences, Pareto optimality is an appropriate efficiency measure, aiming to find allocations where no agent can be made better off without harming another. In the standard offline setting, this requirement can be satisfied by the serial dictatorship (SD) procedure, sometimes called picking-sequence \cite{BouveretL14}. In this procedure, agents are arranged in a specific order, and each agent, one by one, selects their most preferred resource from the pool of remaining resources. However, this process is not individually rational, as some agents may end up with resources less desirable than their initial endowments. This is undesirable because agents may be incentivized to avoid participating in the trade for fear of being worse off. The well-known Gale's top trading cycle (TTC) procedure \cite{SHAPLEY197423} circumvents this drawback by ensuring individual rationality while computing an allocation that is Pareto optimal.

Since the procedure is centralized, the decision maker must interact with agents to gather their preferences over resources. However, during this process, agents may misreport their preferences in an attempt to manipulate the procedure and achieve a more favorable outcome. Mechanism design, a subfield of game theory, seeks to create procedures where truthfully revealing preferences is a dominant strategy for agents \cite{Hurwicz_Reiter_2006}. Both serial dictatorship and TTC procedures possess this property. Moreover, it has been proven that the TTC procedure is the only mechanism that is simultaneously individually rational, Pareto efficient, and strategy-proof \cite{ma1994strategy}.

The standard offline setting assumes that all agents participating in the exchange are available at the same time, and that the exchange procedure is conducted during this period. However, in many contexts, this requirement is unrealistic or too demanding, making it difficult to gather a larger set of agents for the exchange. In more realistic scenarios, agents are only available during restricted time periods, meaning that some may not be able to participate in the exchange simultaneously. This is the case, for example, with online exchange websites \cite{swapz}, where agents do not arrive at the marketplace at the same time and cannot remain indefinitely before their swaps take place. This type of scenario is referred to as online \cite{albers2003online}, or dynamic, in the context of matching \cite{baccara2021dynamic}. In this paper, we show how the standard procedures, serial dictatorship and TTC, can be adapted to fit an online environment.

\subsection*{Outline}

Section~\ref{secRelatedWork} provides a non-exhaustive list of related works connected either to the housing market problem or to online versions of social choice problems. Section~\ref{secPrem} presents the main definitions of the key properties that we seek to achieve for our mechanisms. Sections~\ref{secStatSerDict} to~\ref{secSafeSerDict} describe online mechanisms based on the famous serial dictatorship procedure. More specifically, Section~\ref{secStatSerDict} first considers a static variant of this procedure, where assignments to non-leaving agents are irrevocable. Section~\ref{secDynSerDict} relaxes this requirement by considering a dynamic version of the serial dictatorship. Section~\ref{secSafeSerDict} considers a safe version in which no agent can be allocated an item she prefers less than her own. Finally, Section~\ref{secTTClike} introduces multiple online variants of the TTC procedure.

\section{Related Works}\label{secRelatedWork}

Related works encompass studies on variants of the housing market problem and investigations into online versions of social choice problems.

\subsection{Housing Market Problem Variants}

Multiple variants of the standard housing market have been explored in the literature. First, the extension to exchanges involving multiple items has been extensively studied. Fujita et al. \cite{fujita2018complexity} proposed a TTC-like procedure that selects a Pareto optimal allocation (which belongs to the core) under restricted preferences over sets of resources. This work was extended to problems where multiple copies of items exist, and similar results were found under slightly different restrictions on preferences \cite{SikdarAX19}. When agents can exchange multiple resources, the problem of computing an individually rational and Pareto efficient allocation is known to be NP-hard, even for additive preferences \cite{DBLP:conf/atal/AzizBLLM16}. A sufficient condition to achieve strategy-proofness has also been provided \cite{Aziz20}.

Other extensions have considered exchange situations where agents are embedded in a social network. In some cases, the possible exchanges are constrained by the structure of the social network \cite{damamme2015power,ijcai2017p31,saffidine2018constrained,huang2019object}. Others have viewed the social network as a means to advertise the exchange procedure, designing mechanisms to incentivize agents to invite their neighbors to participate \cite{kawasaki2021mechanism,you2022strategy}.

Finally, other works have explored extensions of the preference model used to describe agents' preferences. Standard algorithms have been adapted to handle cases with indifferences \cite{DBLP:conf/aaai/AzizK12,saban2013house}. In another direction, the scope of the preference model can extend beyond simply considering the resource allocated to an agent. For instance, some works \cite{lesca2018service,DBLP:conf/atal/0001L20} have introduced preference models that also take into account the identity of the agent receiving an individual's initial endowment.

\subsection{Online Problems in Social Choice}

Multiple social problems related to our exchange problem have been examined through the lens of online procedures. In fair division, which aims to allocate resources fairly, several online variants have been explored \cite{DBLP:conf/aaai/AleksandrovW20,SankarLNN21,hosseini2024class}. For instance, strategy-proofness and Pareto efficiency have been studied in this context \cite{DBLP:conf/ijcai/AleksandrovAGW15}, alongside envy-freeness. Various extensions of the serial dictatorship procedure have been proposed to achieve these properties \cite{aleksandrov2019strategy}. The online electric vehicle charging problem \cite{ijcai2019p773}, which focuses on scheduling charging for customers arriving in an online fashion, also shares similarities with our problem. However, in both cases, the online setting differs from mine, as resources do not arrive dynamically and are known to the procedure from the outset.  

The online setting has also been explored in the context of stable matching, where preferences are two-sided and pairs of agents are matched to achieve stability, with applications such as student-university allocations. The case where students or universities arrive in an online fashion has been studied \cite{doval2022dynamically}, and mechanisms based on the standard deferred acceptance algorithm have been proposed to address this issue. Online updates to the instance may also arise from the fact that the matching process consists of multiple rounds, during which agents may unexpectedly or strategically alter their revealed preferences between rounds \cite{ijcai2019p19,RePEc:spr:joecth:v:68:y:2019:i:2:d:10.1007_s00199-018-1133-9,DBLP:conf/atal/BampisEY23}. In the same vein, but closer to our setting as it addresses the one-sided problem of assigning resources to agents, an online version where preferences are elicited incrementally by querying agents has been considered \cite{hosseini2021necessarily}. A procedure that elicits preferences to achieve Pareto-optimality while minimizing the number of queries has been proposed.

Finally, numerous works have focused on dynamic kidney exchange \cite{10.1111/j.1467-937X.2009.00575.x,doi:10.1287/mnsc.2020.3954}, where donors and/or recipients arrive in an online fashion. Most of these studies consider compatibility (0-1 preference models) rather than ordinal preferences. Furthermore, they often assume the allocation process can be probabilistic, rely on probabilistic assumptions about arrival and/or departure, and aim to reduce the expected waiting time before a transplant \cite{10.1257/mic.20150183,RePEc:inm:oropre:v:65:y:2017:i:6:p:1446-1459,https://doi.org/10.3982/TE3740}. Others focus on maximizing the expected number of matched pairs \cite{awasthi2009online,dickerson2012dynamic}. The design of strategy-proof and individually rational mechanisms has also been considered for transplant centers participating in national exchange programs \cite{RePEc:eee:gamebe:v:91:y:2015:i:c:p:284-296,DBLP:conf/aaai/HajajDHSS15}. 

It is worth noting that there is a substantial body of literature on online algorithms \cite{borodin2005online}, which aim to maximize the sum of the weights of matched pairs in an online assignment problem where weights are assigned to pairs to be matched \cite{mehta2013online}. However, in our context, these approaches assume that preferences are cardinal and aim to maximize the agents' social welfare, i.e., the sum of their utilities. Their goal is to design online algorithms with a high competitive ratio, where the competitive ratio is the smallest ratio between the value of the computed allocation and the value of the optimal offline allocation.

\section{Preliminary}\label{secPrem}

Let $ N = \{ 1, \ldots, n \} $ denote the set of agents. Each agent $ i $ owns a single good\footnote{The resources in this paper are interchangeably referred to as goods or items} $ e_i $, called her initial endowment. Let $ \mathcal{T} = [t^-, t^+] $ denote the timeline during which the market is open, where $ t^- $ and $ t^+ $ represent the opening and closing times of the market, respectively. For each agent $ i $, $ a_i $ and $ d_i $ represent her arrival and departure times in the market, with $ a_i, d_i \in \mathcal{T} $ such that $ t^- \leq a_i < d_i \leq t^+ $. To simplify the setting, we assume that no two arrival or departure times occur simultaneously.\footnote{Note that when the exchange market takes place in a web-based application, times are recorded precisely, and equal timestamps are highly unlikely. Furthermore, if such an event occurs, random tie-breaking rules may be used.} Let $ E = \{ e_1, \ldots, e_n \} $ represent the set of items belonging to the agents in $ N $. Each agent $ i $ has a strict ordinal preference $ \succ_i $ over the items in $ E $, such that $ e_j \succ_i e_k $ means agent $ i $ strictly prefers $ e_j $ over $ e_k $. Furthermore, symbol $\succeq_i$ stands for $\succ_i$ or $=$. An instance of the online exchange problem is represented by the set of tuples $ \{ (i, e_i, a_i, d_i, \succ_i) \}_{i \in N} $. Without loss of generality, we assume that the agent indices are ordered by increasing arrival times, i.e., such that $a_1 < a_2 < \ldots < a_n$ hold. Let $ \mathcal{I} $ denote the entire set of possible instances.

Our goal is to allocate to each agent a single item such that each item is assigned only once. In other words, we search for an allocation $ M: N\rightarrow E $ such that $ M(i)\neq M(j) $ for each $ i\neq j $. The timeline constraints are such that each agent will leave the market with an item which belongs to an agent that arrived in the market earlier than her departure time. For any $ t\in \mathcal{T} $, let $ N_{<t} $ denote the subset of agents arriving before $ t $, i.e., containing any agent $ i\in N $ such that $ a_i<t $. For a given instance $ I\in \mathcal{I} $, an allocation $ M $ is $ I $-compatible if $ M(i)\in E_{<d_i} $ holds for each agent $ i $, where $ E_{<t}=\{ e_i : i\in N_{<t}\} $ for each $ t\in \mathcal{T} $. For any instance $I$, the set of $ I $-compatible allocations is denoted $ \mathcal{M}(I) $.

An exchange algorithm $ \mathcal{A} $ returns an $ I $-compatible allocation for each instance $ I $. By abuse of notation, we assume that $ \mathcal{A}_i(I) $ denotes the item allocated to agent $ i $ by algorithm $ \mathcal{A} $ for instance $ I $. In the online version of the problem, the algorithm should make a decision on the allocation for an agent when she leaves the market, without knowing the agents that arrive after her departure. In other words, the mechanism should make a decision only based on the agents that already visited the market by the departure time of the agent. For any $ t \in \mathcal{T} $, let $ I_{<t} $ denote a truncated copy of $ I $ restricted to the agents of $ N_{<t} $.
\begin{definition}
An exchange algorithm $ \mathcal{A} $ is online if for each instance $ I $ and for each agent $ i $, $ \mathcal{A}_i(I) = \mathcal{A}_i(I_{<d_i}) $ holds.
\end{definition}

The choice between online exchange algorithms should be guided by the properties fulfilled by the allocation that they compute. To compare the online exchange algorithms, we consider multiple standard desiderata properties that could be considered as desirable or necessary. The first one is a standard definition of efficiency in multi-agent decision problems.

\begin{definition}\label{defPO}
Allocation $ M' $ Pareto-dominates allocation $ M $ if for each agent $ i $, $ M'(i) \succeq_i M(i) $ holds, and for at least one agent $ j $, $ M'(j) \succ_j M(j) $ holds. For a given instance $ I $, let $ \mathcal{S} $ denote a given subset of $ \mathcal{M}(I) $. Allocation $ M $ is $ \mathcal{S} $-Pareto optimal ($ \mathcal{S} $-$ PO $), if there is no allocation $ M' $ of $ \mathcal{S} $ that Pareto-dominates it.
\end{definition}

The standard notion of Pareto optimality corresponds to $ \mathcal{M}(I) $-$ PO $, as illustrated by the following example.

\begin{example}\label{exSD}
    Consider instance $I$ described in Figure \ref{figOHMexample}. The timeline, containing the arrival and departure times of the agents, as well as their preferences are provided in the left part of the figure. Three different allocations are graphically described in the right part of the figure. For example, allocation $M$ is such that $M(1) = e_2$, $M(2) = e_3$, and $M(3) = e_1$. Note that $M$ is not $I$-compatible, as agent 2 receives an item from an agent who arrives later than $d_2$. Therefore, $M$ does not belong to $\mathcal{M}(I)$. On the other hand, allocations $M'$ and $M''$ are both $I$-compatible. It is also easy to verify that both $M'$ and $M''$ are $\mathcal{M}(I)$-$PO$.

    \begin{figure}[t]
\includegraphics[width=\textwidth]{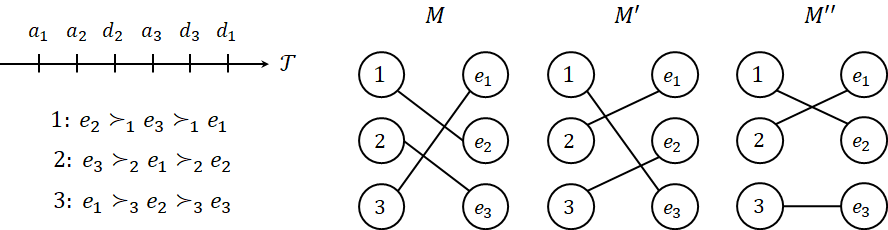}
\caption{Online housing market example with three agents.}\label{figOHMexample}
\end{figure}
\end{example}

As suggested in Example \ref{exSD}, restricting comparisons to $I$-compatible allocations allows for the existence of efficient solutions, even if better ones exist but are incompatible with the online setting.

\begin{definition}
An exchange algorithm $ \mathcal{A} $ is individually rational ($ IR $), if for each instance $ I $, $ \mathcal{A}_i(I) \succeq_i e_i $ holds for any agent $ i $.
\end{definition}

The individual rationality property belongs to the set of properties that incentivize agents to participate in the exchange algorithm, as no agent will receive an item that is less desirable than her initial endowment. Another standard property related to the manipulative power of agents, through misreporting their preferences, is called incentive compatibility. We generalize the notion of incentive compatibility to also take into account manipulations related to arrival and departure times.

\begin{definition}\label{defIncentiveCompatible}
An exchange algorithm is strongly incentive compatible ($ SIC $), if for each agent $ i $ and for each pair of instances $ I $ and $ I' $ such that $ I' = I \setminus \{i, e_i, a_i, d_i, \succ_i\} \cup \{i, e_i, a'_i, d'_i, \succ'_i\} $ and $ a_i \leq a'_i < d'_i \leq d_i $, $ \mathcal{A}_i(I) \succeq_i \mathcal{A}_i(I') $ holds. It is $ a $-$ IC $ (respectively $ d $-$ IC $) if the above inequality holds only when $ d_i = d'_i $ (respectively $ a_i = a'_i $).\footnote{Note that $ a $-$ IC $ is close to the notion of participation that appears in \cite{DBLP:conf/ijcai/MatteiSW17}, except that $a$-$IC$ requires also incentive compatibility on revealed preferences.} Finally, it is weakly incentive compatible ($ WIC $), if the above inequality holds only when both $ a_i = a'_i $ and $ d_i = d'_i $.
\end{definition}

In Definition~\ref{defIncentiveCompatible}, instance~$I$ refers to the setting in which all agents truthfully report their preferences and arrival/departure times, while $I'$ is identical except that agent~$i$ may misreport her preferences and/or availability. We aim for mechanisms in which such misreports do not yield a more favorable outcome for agent~$i$. The notion denoted $WIC$ in our setting, where agent~$i$ cannot misreport her arrival or departure time, corresponds to the standard notion of incentive compatibility in the offline setting. The arrival time $a_i$ represents the earliest possible time at which agent~$i$ can enter the market—she cannot arrive earlier due to external constraints (e.g., unavailability, continued use of the item). Similarly, $d_i$ denotes the latest possible departure time. This is why agent~$i$ cannot arrive before $a_i$ or stay beyond $d_i$.\footnote{We implicitly assume it is possible to verify an agent’s presence in the market during her declared interval. For instance, arrival can be confirmed by collecting the item she offers, and continued presence until $d_i$ can be enforced by delivering her assigned item only at that time.} However, agent~$i$ may still strategically choose to arrive later than $a_i$ or leave earlier than $d_i$ in an attempt to manipulate the outcome. Our goal is to design mechanisms that incentivize agents to remain in the market throughout their true availability window, from $a_i$ to $d_i$. The different notions of incentive compatibility aim to ensure that agents are incentivized to arrive as early as possible ($a$-IC), as late as possible ($d$-IC), or both (SIC).

\section{Static Serial Dictatorship Procedure}\label{secStatSerDict}

In this section, we consider an exchange algorithm based on the serial dictatorship procedure, which is described in Algorithm \ref{algoPickingSequence}. The standard version of the serial dictatorship procedure is based on a permutation of the agents defining the order in which each agent will choose her item among the remaining ones. We slightly generalize these permutations by considering permutation functions whose order depends on the instance. More formally, $\Pi: \mathcal{I} \rightarrow N^N$ denotes this permutation function, which is such that for any instance $I$ and for any $i\in \{1, \ldots , n\}$, $\Pi_i(I)$ denotes the agent choosing at position $i$. Despite the fact that this permutation theoretically may depend on the full instance, we assume in this paper that it only depends on the arrival and departure times of the agents. An example of such a permutation is the ascending departure permutation $\delta$ which ranks the agents according to their departure times, and more specifically by increasing departure times, i.e., such that $d_{\delta(1)} < d_{\delta(2)} < \ldots < d_{\delta(n)}$.\footnote{Note that, to simplify notation, we omit the input instance in the notation of the concrete permutation functions ($\delta$ and $\alpha$, as well as the permutation $\pi$ appearing in Algorithms~\ref{algoPickingSequence}, \ref{algoOnlinePickingSequence}, and \ref{algoIROnlinePickingSequence}). Moreover, an agent's position in these permutations is indicated in parentheses rather than as a subscript.}

\begin{algorithm}[t]
\caption{Static online serial dictatorship procedure}
\textbf{Input}:  Permutation function $\Pi$.

%\textbf{Output}:  $\mathcal{A}$-optimal matching $M$.

\begin{algorithmic}[1]\label{algoPickingSequence}
\STATE Initialize $M$ as an empty matching.
\STATE $A\leftarrow \emptyset$. \COMMENT{Items already assigned.}
\STATE $B\leftarrow \emptyset$ \COMMENT{Agents already matched.}
\STATE $j\leftarrow 1$ \COMMENT{Position in $\Pi$ of the first remaining agent to choose.}
 \FOR{$i=1$ \textbf{to} $n$}
 \STATE \COMMENT{Iteration occurring at $d_{\delta(i)}$.}
   \IF{$\delta(i)\not\in B$} 
     \STATE Let $\pi$ denotes $\Pi(I_{< d_{\delta(i)}})$.
     \WHILE{$\pi(j)\neq \delta(i)$}
       \STATE $M(\pi(j))\leftarrow best_{\pi(j)}(E_{< d_{\delta(i)}}\setminus A)$.
      \STATE $A \leftarrow A\cup \{ M(\pi(j))\}$.
       \STATE $B\leftarrow B\cup \{ \pi(j)\}$
       \STATE $j\leftarrow j + 1$.
     \ENDWHILE
     \STATE $M(\delta(i))\leftarrow best_{\delta(i)}(E_{< d_{\delta(i)}}\setminus A)$.
     \STATE $A \leftarrow A\cup \{ M(\delta(i))\}$.
     \STATE $B\leftarrow B\cup \{ \delta(i)\}$
     \STATE $j\leftarrow j + 1$.
   \ENDIF
 \ENDFOR
\STATE \textbf{return} $M$.
\end{algorithmic}
\end{algorithm}

In Algorithm \ref{algoPickingSequence}, every iteration of the "for" loop corresponds to the departure of an agent, specifically agent $\delta(i)$, whose departure time is the $i^{th}$ earliest. However, the order in which the agents choose their items is determined by the permutation function $\Pi$. Therefore, before agent $\delta(i)$ chooses her item, all agents that are ranked before her according to $\Pi$ must choose their items. In this static (or greedy) version of the algorithm, once an agent has chosen an item, she is permanently matched to it, even if more desirable items arrive later. Note that the procedure $best_i$ returns the most preferred item of agent $i$ from a given set of items. Algorithm \ref{algoPickingSequence} is designed to be online, as only the part of the instance corresponding to agents who have already arrived, $I_{< d_{\delta(i)}}$, is used at iteration $i$, during the departure of agent $\delta(i)$. However, the permutation must be prefix-consistent i.e., the order of agents already matched to items must not change when new agents arrive. Otherwise, an agent might be matched multiple times or ignored by the algorithm.

\begin{definition}\label{defPS}

Permutation function $\Pi$ is prefix-stable ($PS$), if for each agent $i$, and for any positions $i'$ and $j'$ such that $\Pi_{i'}(I)=i$ and $j'\leq i'$, $\Pi_{j'}(I)=\Pi_{j'}(I_{< d_i})$ holds.

\end{definition}

This definition stipulates that, for a permutation function to be prefix-stable, any agents ranked before agent~$i$ when she leaves the market should retain their positions afterward. The following proposition shows that Algorithm~\ref{algoPickingSequence} is online and weakly incentive compatible when the permutation used is prefix-stable.

\begin{proposition}\label{propOnlineWIC}
    Algorithm \ref{algoPickingSequence} is both online and $WIC$ if the input permutation function is $PS$.
\end{proposition}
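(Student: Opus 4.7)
The plan is to prove both properties simultaneously via a single structural observation: the trajectory of Algorithm~\ref{algoPickingSequence} up to the iteration at which agent $i$ is matched depends only on $I_{<d_i}$.

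For the \textbf{online} part, I would fix agent $i$ and run the algorithm in parallel on $I$ and on $I_{<d_i}$, then argue by induction on the outer-loop iteration $k$ that the two executions agree on the state (matching $M$, assigned items $A$, matched agents $B$, pointer $j$) through iteration $k^*$ where $\delta(k^*)=i$. The inductive step rests on three observations: first, the $k$-th earliest-departing agent agrees in $N$ and in $N_{<d_i}$ as long as $k\leq k^*$, since agents with departure strictly less than $d_i$ belong to both; second, the truncated instance $I_{<d_{\delta(k)}}$ coincides in the two runs because $d_{\delta(k)}\leq d_i$, so the permutation $\pi=\Pi(I_{<d_{\delta(k)}})$ used inside the iteration is identical; third, the candidate item pool $E_{<d_{\delta(k)}}$ is likewise identical. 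Hence the inner \emph{while} loop updates the state in lockstep, and agent $i$ ends up matched to the same item in both runs.

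For \textbf{WIC}, let $I'$ denote the misreport in which agent $i$ replaces $\succ_i$ by some $\succ'_i$ while keeping $a_i$ and $d_i$ unchanged. The online property reduces the problem to comparing the outcomes on $I_{<d_i}$ and $I'_{<d_i}$, which differ only in agent $i$'s preference relation. Since $\Pi$ depends only on arrival and departure times, at every iteration the permutation used is the same in the two executions; in particular agent $i$ occupies the same position $p$. A second induction over positions $1,\ldots,p-1$ then shows that every earlier agent in the permutation makes the same choice (identical preferences and identical pools of available items), so the set of items offered to $i$ at her turn is the same in both runs. She therefore picks her $\succ_i$-best element of this set under the truthful report, which by definition is weakly $\succ_i$-preferred to whatever she obtains under $\succ'_i$.

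The subtlety I expect to be the main obstacle is that agent $i$ may actually be matched at some iteration $k$ strictly earlier than $k^*$, namely whenever her position in $\Pi(I_{<d_{\delta(k)}})$ lies before that of $\delta(k)$. This is precisely where prefix-stability does the work: applied to the instance $I_{<d_i}$ together with the agent $\delta(k)$, PS yields $\Pi_{j'}(I_{<d_i})=\Pi_{j'}(I_{<d_{\delta(k)}})$ for every position $j'$ up to the position of $\delta(k)$. Chaining these equalities across consecutive iterations ensures that the pointer $j$ always advances into consistent permutations, that no agent is matched twice or skipped, and that the position at which $i$ is actually processed is the same across the $I$ and $I_{<d_i}$ runs (and similarly across $I_{<d_i}$ and $I'_{<d_i}$), which is exactly what both induction arguments require.
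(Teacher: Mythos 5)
Your proposal is correct and follows essentially the same route as the paper: the online part uses prefix-stability to argue that the permutations (and hence the executions) agree on the prefix of iterations up to agent $i$'s departure, and the $WIC$ part reduces to the standard serial-dictatorship argument that each agent's preferences are consulted exactly once, over a pool of items unaffected by her own report. Your write-up is in fact more explicit than the paper's (notably in tracking the algorithm state by induction and in handling the case where $i$ is matched before her own departure), but it introduces no new ideas beyond the paper's proof.
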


\begin{proof}
Let us first show that it is online. Let $I$ be an instance and $i$ be an agent. We need to show that Algorithm \ref{algoPickingSequence} applied to both $I$ and $I_{< d_i}$ returns the same outcome to agent $i$. To simplify notations, we denote by $I'$ instance $I_{< d_i}$. Let $i'$ denote the position of agent $i$ in $\Pi(I)$, i.e., $\Pi_{i'}(I)=i$ holds. Note first that since $\Pi$ is $PS$, we know that for any position $j'$ such that $j'\leq i'$, we have $\Pi_{j'}(I)=\Pi_{j'}(I')$. For the same reason, we know that for any agent $j$ and positions $j'$ and $k'$ such that $k'\leq j'\leq i'$ and $\Pi_{j'}(I)=\Pi_{j'}(I')=j$, we have $\Pi_{k'}(I)=\Pi_{k'}(I_{< d_{j}})$ and $\Pi_{k'}(I')=\Pi_{k'}(I'_{< d_{j}})$. But since $I'=I_{< d_i}$, this implies $I_{< d_{j}}=I'_{< d_{j}}$ and $\Pi_{k'}(I_{< d_{j}})=\Pi_{k'}(I'_{< d_{j}})$ hold for any $k'\leq j'\leq i'$. Therefore, the first iterations of the ``for'' loop of Algorithm \ref{algoPickingSequence}, until agent $i$ leaves the market at time $d_i$, makes the same assignment to the same agents for $I$ and $I'$. Therefore, agent $i$ receives the exact same item in both cases.

We now show that it is $WIC$. The proof relies on the fact that the standard serial dictatorship procedure is incentive compatible in the offline setting. This is due to the fact that the preferences of an agent are only considered once to assign to this agent her most favorite item among the remaining ones (by function $best$). Therefore, an agent has no incentive to misreport her preferences since the item that she will receive cannot be more preferred by doing so.
\end{proof}

\subsection{The Ascending Departure Permutation}

As suggested by the following proposition, the ascending departure permutation $\delta$, that orders the agents by increasing departure time, is a good candidate to be used with Algorithm \ref{algoPickingSequence}.

\begin{proposition}\label{propDeltaOnline}
The ascending departure permutation $\delta$ is $PS$.
\end{proposition}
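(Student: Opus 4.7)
The plan is to unfold the definition of $PS$ directly and perform a bookkeeping argument on departure-time orderings. Fix an instance $I$, an agent $i$, and positions $i',j'$ with $\delta_{i'}(I)=i$ and $j'\leq i'$. The required identity $\delta_{j'}(I)=\delta_{j'}(I_{<d_i})$ reduces to checking that the first $i'$ entries of $\delta(I)$ and $\delta(I_{<d_i})$ coincide, so I aim to characterize those first $i'$ entries on both sides and show they match as ordered tuples.

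The first step is to observe that every agent $k$ occupying one of the first $i'$ positions of $\delta(I)$ actually appears in the truncated instance $I_{<d_i}$. By definition of $\delta$, such an agent satisfies $d_k \leq d_i$, and the timeline constraint $a_k < d_k$ then gives $a_k < d_i$, so $k \in N_{<d_i}$. Hence the set of the first $i'$ agents in $\delta(I)$ is a subset of $N_{<d_i}$.

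The second step is to argue that this set is precisely the set of $i'$ agents of $N_{<d_i}$ with the smallest departure times. For any other agent $m \in N_{<d_i}$, certainly $a_m < d_i$; but if in addition $d_m \leq d_i$, then $m$ would lie among the first $i'$ positions of $\delta(I)$, contradicting the choice of $m$. So every remaining agent of $N_{<d_i}$ has $d_m > d_i$, hence a strictly larger departure time than any agent at a position $\leq i'$ in $\delta(I)$. Since both $\delta(I)$ and $\delta(I_{<d_i})$ list their agents in ascending order of departure time, and since departure times themselves do not change when passing to the truncated instance, the two permutations must agree on their first $i'$ positions, which yields the desired equality.

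The whole argument is short and essentially bookkeeping, so I do not anticipate a technical obstacle; the only delicate point worth stressing is the asymmetry between ``arrives before $d_i$'' (the condition defining $N_{<d_i}$) and ``departs no later than $d_i$'' (the condition characterizing the first $i'$ positions of $\delta(I)$). The implication from the latter to the former, which is what makes the two prefixes align, uses in an essential way the timeline hypothesis $a_k < d_k$ imposed on every agent.
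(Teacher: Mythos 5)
Your proof is correct and follows essentially the same route as the paper: both arguments show that the first $i'$ agents by departure time coincide in $I$ and $I_{<d_i}$ (you argue that early departers must have arrived before $d_i$ via $a_k<d_k$; the paper argues the contrapositive, that agents absent from $I_{<d_i}$ must depart after $d_i$), and then conclude that their relative order is unchanged since departure times are preserved under truncation. No gap.
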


\begin{proof}
We must show that for each agent $i$, and for any positions $i'$ and $j'$, such that $\Pi_{i'}(I)=i$ and $j'\leq i'$, the equality $\Pi_{j'}(I)=\Pi_{j'}(I_{< d_i})$ holds when $\Pi=\delta$. Note first that agents $\Pi_1(I')$, $\Pi_2(I')$, $\ldots$, $\Pi_{i'}(I')$ are the $i'$ agents with the smallest departure times in $I'$. Instance $I$ is an upper set of $I'$. Furthermore, any agent $k$ that belongs to $I$ but not to $I'$ must have an arrival time later than $d_i$, since $d_k > a_k > d_i$. Therefore, the $i'$ agents with the smallest departure times are the same in both $I$ and $I'$. Finally, their relative order remains consistent in both $\Pi(I)$ and $\Pi(I')$ because their departure times are identical in both instances.
\end{proof}

\begin{example}
    Let us run Algorithm \ref{algoPickingSequence} with the ascending departure permutation $ \delta $ on the instance described in Figure \ref{figOHMexample}. At the departure of agent 2, she selects item $ e_1 $ that she prefers to $ e_2 $. Then, at the departure of agent 3, she selects item $ e_2 $ that she prefers to $ e_3 $. Finally, agent 1 leaves the market with $ e_3 $, which is the only remaining item. The resulting allocation corresponds to $ M' $ described in Figure \ref{figOHMexample}.
\end{example}

Note that when Algorithm \ref{algoPickingSequence} is used with the ascending departure permutation $\delta$, the algorithm allows each agent to choose the best remaining item upon leaving the market. The following proposition demonstrates that selecting the ascending departure permutation $\delta$ as the permutation function in Algorithm \ref{algoPickingSequence} results in a Pareto efficient outcome for any instance.

\begin{proposition}\label{propParetoOpt} For any instance $I\in \mathcal{I}$, Algorithm \ref{algoPickingSequence}, using the ascending departure permutation $\delta$ as input, returns a $\mathcal{M}(I)$-$PO$ allocation. \end{proposition}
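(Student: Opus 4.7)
The plan is to prove this by induction on the order in which agents leave the market, which is exactly the order $\delta$ in which they pick under Algorithm~\ref{algoPickingSequence} when $\Pi = \delta$. The key observation is that because $\Pi=\delta$, the inner ``while'' loop is vacuous: at the $i$th iteration, agent $\delta(i)$ herself picks, selecting her favorite item from $E_{< d_{\delta(i)}} \setminus A$, where $A = \{M(\delta(j)) : j < i\}$. Any $I$-compatible allocation $M'$ must assign $\delta(i)$ an item lying in $E_{< d_{\delta(i)}}$, which is the same pool from which the algorithm greedily picks.

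Suppose, for contradiction, that some $M' \in \mathcal{M}(I)$ Pareto-dominates $M$. I would prove by induction on $i \in \{1,\ldots,n\}$ that $M'(\delta(i)) = M(\delta(i))$. For $i=1$, the algorithm assigns $\delta(1)$ her most preferred item in $E_{< d_{\delta(1)}}$, while $I$-compatibility forces $M'(\delta(1)) \in E_{< d_{\delta(1)}}$, so $M(\delta(1)) \succeq_{\delta(1)} M'(\delta(1))$. Combined with the Pareto-dominance inequality $M'(\delta(1)) \succeq_{\delta(1)} M(\delta(1))$ and the strictness of $\succ_{\delta(1)}$, we get $M(\delta(1)) = M'(\delta(1))$.

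For the inductive step, assume $M(\delta(j)) = M'(\delta(j))$ for all $j < i$. At iteration $i$ the algorithm sets $M(\delta(i)) = best_{\delta(i)}(E_{< d_{\delta(i)}} \setminus A)$ with $A = \{M(\delta(j)) : j < i\}$. By the inductive hypothesis this equals $\{M'(\delta(j)) : j < i\}$, and since $M'$ is an allocation (hence injective) we have $M'(\delta(i)) \notin A$; moreover $I$-compatibility gives $M'(\delta(i)) \in E_{< d_{\delta(i)}}$. Thus $M'(\delta(i)) \in E_{< d_{\delta(i)}} \setminus A$, so the greedy choice yields $M(\delta(i)) \succeq_{\delta(i)} M'(\delta(i))$. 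Pareto-dominance gives the reverse inequality, and strictness forces equality. Therefore $M = M'$, contradicting the existence of some agent strictly better off in $M'$.

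The argument is conceptually simple; the only subtle point, which deserves explicit justification, is that the set $A$ of already-assigned items coincides with $\{M'(\delta(j)) : j<i\}$ under the inductive hypothesis, so that injectivity of $M'$ together with the $I$-compatibility condition $M'(\delta(i)) \in E_{< d_{\delta(i)}}$ actually places $M'(\delta(i))$ in the pool $E_{< d_{\delta(i)}} \setminus A$ from which the algorithm greedily picks. Everything else is a direct combination of the greedy choice with the strictness of preferences.
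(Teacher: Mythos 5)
Your proof is correct and follows essentially the same argument as the paper's: both show, agent by agent in the order $\delta$, that any $I$-compatible Pareto-dominating allocation must coincide with the greedy one, so the only candidate dominator is $M$ itself. Your write-up is somewhat more explicit than the paper's (in particular, spelling out the injectivity of $M'$ and the role of $I$-compatibility in placing $M'(\delta(i))$ in the remaining pool), but the underlying idea is identical.
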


\begin{proof}
The main argument of the proof is that the matching $M$ returned by Algorithm \ref{algoPickingSequence} must be lexicographically optimal among the matchings of $\mathcal{M}(I)$ according to the order provided by the ascending departure permutation $\delta$ applied to $I$. Let $\mathcal{P}$ denote the subset of allocations that may Pareto-dominate $M$. According to Algorithm \ref{algoPickingSequence}, $M(\delta(1))$ must be the most favored item of agent $\delta(1)$ in $E_{< d_{\delta(1)}}$. Therefore, all matchings in $\mathcal{P}$ must allocate $M(\delta(1))$ to agent $\delta(1)$ to Pareto-dominate $M$. Provided that $M(\delta(1))$ is allocated to agent $\delta(1)$, Algorithm \ref{algoPickingSequence} allocates to agent $\delta(2)$ her most favored item in $E_{< d_{\delta(2)}} \setminus \{M(\delta(1))\}$. Thus, all matchings in $\mathcal{P}$ must allocate $M(\delta(2))$ to agent $\delta(2)$ to Pareto-dominate $M$. By applying the same reasoning to the remaining agents in the order given by $\delta$, we conclude that $\mathcal{P}$ can only contain a single allocation, which corresponds to $M$. Since a matching cannot Pareto-dominate itself, no matching in $\mathcal{M}(I)$ Pareto-dominates $M$.
\end{proof}

The following proposition shows that there is no other online exchange algorithm that always returns a Pareto-efficient outcome.

\begin{proposition}\label{propOnlyPO}
Algorithm \ref{algoPickingSequence} using the ascending departure permutation $\delta$ is the only online exchange algorithm that returns for any instance $I$ an $\mathcal{M}(I)$-$PO$ allocation.
\end{proposition}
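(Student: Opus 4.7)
The plan is to prove, by induction on $k$, that any online exchange algorithm $\mathcal{A}$ that always returns an $\mathcal{M}(I)$-PO allocation must satisfy $\mathcal{A}_{\delta(k)}(I) = \mathcal{A}^*_{\delta(k)}(I)$ for every instance $I$, where $\mathcal{A}^*$ denotes Algorithm~\ref{algoPickingSequence} run with $\delta$. The inductive hypothesis is that this equality holds for all $j<k$ and all instances. Throughout, I would rely on Proposition~\ref{propDeltaOnline} (so that $\delta$ is prefix-stable) and Proposition~\ref{propOnlineWIC} (so that $\mathcal{A}^*$ is itself an online algorithm).

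For the inductive step I would fix an instance $I$ and set $e^* := \mathcal{A}^*_{\delta(k)}(I)$, which by construction of Algorithm~\ref{algoPickingSequence} is the $\succ_{\delta(k)}$-maximum of $R := E_{<d_{\delta(k)}}\setminus\{\mathcal{A}^*_{\delta(j)}(I) : j<k\}$. Assume for contradiction that $\mathcal{A}_{\delta(k)}(I)=e'\neq e^*$; by $I$-compatibility of $\mathcal{A}$ and by the induction hypothesis applied to earlier picks, $e'\in R$, so $e^*\succ_{\delta(k)} e'$. I would then pass to the truncated instance $\tilde I := I_{<d_{\delta(k)}}$: since $\tilde I_{<d_{\delta(j)}} = I_{<d_{\delta(j)}}$ for each $j\le k$, the online property applied to both $\mathcal{A}$ and $\mathcal{A}^*$, combined with the induction hypothesis, yields $\mathcal{A}_{\delta(j)}(\tilde I)=\mathcal{A}^*_{\delta(j)}(\tilde I)$ for $j<k$ and $\mathcal{A}_{\delta(k)}(\tilde I)=e'$. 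The allocation $M := \mathcal{A}(\tilde I)$ is $\mathcal{M}(\tilde I)$-PO by hypothesis, and because the items of $\delta(1),\ldots,\delta(k)$ in $M$ do not include $e^*$, injectivity of $M$ forces $e^*$ to be held by some late leaver $l$ of $\tilde I$ (an agent with $a_l<d_{\delta(k)}<d_l$). The natural candidate Pareto-dominator of $M$ is the allocation $M'$ obtained by swapping the items of $\delta(k)$ and $l$; it is $\tilde I$-compatible because both $e'$ and $e^*$ belong to $E_{<d_{\delta(k)}}\subseteq E_{<d_l}$, and agent $\delta(k)$ strictly improves under $M'$.

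The hard part will be closing the argument when $l$ strictly prefers $e^*$ to $e'$, so the 2-swap alone is not Pareto improving. In that case I would either (a) chase the ``wants'' graph among the late leavers of $\tilde I$ to exhibit a longer trading cycle along which every agent weakly improves and at least one strictly improves---exploiting the key fact that every item in the cycle lies in $E_{<d_{\delta(k)}}$, hence in $E_{<d_{l'}}$ for every late leaver $l'$, so no compatibility constraint can block the cycle---or (b) enrich $\tilde I$ into an auxiliary instance $\hat I$ with $\hat I_{<d_{\delta(k)}}=\tilde I$ by adjoining agents arriving strictly after $d_{\delta(k)}$ whose preferences are engineered to rank $e'$ strictly first and $e^*$ strictly second, so that any $\mathcal{M}(\hat I)$-PO allocation must transfer $e'$ out of $\delta(k)$'s hands via a two-cycle with an adversarial agent, contradicting the value $\mathcal{A}_{\delta(k)}(\hat I)=e'$ forced by the online property from $\hat I_{<d_{\delta(k)}}=I_{<d_{\delta(k)}}$. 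Exhibiting this Pareto-improving swap or cycle in the presence of late leavers whose preferences are inherited verbatim from $I$, while simultaneously respecting $\tilde I$-compatibility and preserving the online-imposed equalities $\mathcal{A}_{\delta(j)}(\tilde I)=\mathcal{A}^*_{\delta(j)}(I)$ for $j\le k$, is the technical crux and the main obstacle I expect to spend most of the effort on.
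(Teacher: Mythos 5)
Your skeleton (minimal counterexample over the departure order, truncation to $\tilde I$, using the online property to pin down $\mathcal{A}_{\delta(k)}(\tilde I)=e'$, then seeking a Pareto improvement) matches the paper's strategy, and your option (b) is in the same spirit as the paper's actual argument. But the proposal has a genuine gap exactly where you flag it, and one of your two proposed escapes cannot work. Option (a) --- chasing a trading cycle inside $\tilde I$ itself --- is a dead end: $\mathcal{M}(\tilde I)$-Pareto optimality does \emph{not} single out the allocation produced by $\delta$ (the paper's Example~\ref{exSD} exhibits two distinct $\mathcal{M}(I)$-$PO$ allocations, $M'$ and $M''$), so $\mathcal{A}(\tilde I)$ can give $\delta(k)$ the item $e'$ while being genuinely Pareto optimal in $\mathcal{M}(\tilde I)$; there is simply no improving cycle to find, and no contradiction arises without leaving $\tilde I$.

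The resolution must therefore go through option (b), and the missing ingredient is precisely how to neutralize the late leavers of $I$ who may hold $e^*$ and strictly prefer it. The paper's construction of $I'$ does this by adjoining, for \emph{every} regular agent $k$ with $d_k>d_{\delta(k)}$, a fresh dummy $k'$ arriving just after $d_{\delta(k)}$, and by making $e_{k'}$ agent $k$'s new top item and $\mathcal{A}_k(I)$ dummy $k'$'s top item; Pareto optimality on $I'$ then forces every late leaver $k$ to take $e_{k'}$ and every dummy $k'$ to take $\mathcal{A}_k(I)$, so that $e^*$ ends up in the hands of the single engineered dummy $i'$ whose preferences are $e'\succ_{i'}e^*\succ_{i'}\cdots$. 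Only then is the two-swap between $\delta(k)$ and $i'$ strictly improving for \emph{both} parties. Your sketch, which keeps the late leavers' preferences ``inherited verbatim from $I$'' and adds adversaries ranking $e'$ first and $e^*$ second, does not control who holds $e^*$ in $\mathcal{A}(\hat I)$, which is exactly the obstacle you name; the fix is that the extended instance is free to \emph{reassign} the late leavers' preferences (only the truncation $I_{<d_{\delta(k)}}$ is constrained by the online property), and to use one dummy per late leaver rather than adversaries of a single type. As written, the proof is not complete without this step.
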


\begin{proof}
    By contradiction, let $\mathcal{A}$ denote an online exchange algorithm that behaves differently than Algorithm \ref{algoPickingSequence} using the ascending departure permutation $\delta$ as input. This means that there exists an instance $I$ and an agent $j$ such that $\mathcal{A}_j(I)$ does not correspond to the assignment made by Algorithm \ref{algoPickingSequence} with the ascending departure permutation $\delta$. In other words, agent $j$ does not receive her most favorite item among those in $E_{< d_{\delta(j)}}$ minus the items already assigned to agents leaving earlier than her. Furthermore, we assume without loss of generality that, among the set of agents in this situation, agent $j$ is the one who leaves the earliest. Let $M$ denote the allocation obtained by Algorithm \ref{algoPickingSequence} with the ascending departure permutation $\delta$ applied to $I$. Let $i$ denote the agent who receives $M(j)$ instead of agent $j$ in $\mathcal{A}(I)$, i.e., such that $\mathcal{A}_i(I)=M(j)$. By assumption, $d_j < d_i$ must hold since, otherwise, an agent arriving earlier than agent $j$ would not receive the same item as in $M$.
    
    To show the contradiction, we construct a new instance $I'$ by adding dummy agents to $I$. For any agent $k$ leaving after $d_j$, let $k'$ be a dummy agent such that $a_{k'} = d_{j} + k\epsilon$, with $\epsilon > 0$ small enough for all these agents to arrive earlier than any other arrival or departure, and $d_{k'}$ large enough to leave after any other agent of $I$. The preferences of the agents in $I'$ are as follows. For any agent $k$ that was already in $I$ and leaving after $d_j$, her most favorite item is $e_{k'}$, followed by the items of $I$ ranked in the same order, and finally all the other items of $I'$ ranked in arbitrary order. Dummy agent $i'$'s preferences are $\mathcal{A}_j(I)\succ_{i'} M(j)\succ_{i'} \ldots$, where all the other items are ranked in arbitrary order. Finally, the most favorite item of any other dummy agent $k'$ is $\mathcal{A}_k(I)$, and all other items are ranked arbitrarily.
    
    Note first that since $\mathcal{A}$ is online, both $\mathcal{A}_j(I')=\mathcal{A}_j(I'_{< d_j})$ and $\mathcal{A}_j(I_{< d_j})=\mathcal{A}_j(I)$ hold. Furthermore, since all dummy agents arrive later than $d_j$, $I_{< d_j}=I'_{< d_j}$ holds, which implies with the former equalities that $\mathcal{A}_j(I')=\mathcal{A}_j(I)$. Now concerning the assignment of the other agents, based on their preferences and since $\mathcal{A}$ is $\mathcal{M}(I')$-$PO$, we can say that for each dummy agent $k'$, we have $\mathcal{A}_{k'}(I')=\mathcal{A}_k(I)$, and for each regular agent $k$ leaving after $d_j$, we have $\mathcal{A}_k(I')=e_{k'}$. More precisely for $i'$ we have $\mathcal{A}_{i'}(I')=\mathcal{A}_i(I)=M(j)$. Note that except for agents $i'$ and $j$, any agent receives her most favorite item.
    
    But now consider the almost same allocation, say $M'$, as the one constructed by $\mathcal{A}$, except that agent $j$ receives $M(j)$ and agent $i'$ receives $\mathcal{A}_j(I)$. Note that by construction, $M'$ should belong to $\mathcal{M}(I')$ since $\mathcal{A}$ is an online exchange algorithm, and therefore $\mathcal{A}(I_{<d_j})=\mathcal{A}(I'_{<d_j})$ belongs to $\mathcal{M}(I'_{<d_j})$ (for agent $j$ and the ones leaving after her), $\mathcal{A}(I')$ belongs to $\mathcal{M}(I')$ (for all other agents except agent $i'$), and $a_{i'}<d_i$ holds (for agent $i'$). Agent $j$ and dummy agent $i'$ receive in $M'$ a strictly better item than the one offered by algorithm $\mathcal{A}$. Therefore, $M'$ Pareto dominates $\mathcal{A}(I')$, leading to a contradiction since $\mathcal{A}$ is $\mathcal{M}(I')$-$PO$.

\end{proof}

\subsection{The Ascending Arrival Permutation}\label{secAscendingArrival}

Propositions \ref{propParetoOpt} and \ref{propOnlyPO} tell us that when we want an online algorithm that always returns an efficient outcome, we should focus on Algorithm \ref{algoPickingSequence} using the ascending departure permutation $\delta$ as input. Furthermore, Proposition \ref{propOnlineWIC} and \ref{propDeltaOnline} attest that this algorithm is $WIC$. However, it is easy to check that Algorithm \ref{algoPickingSequence} using the ascending departure permutation $\delta$ as input is neither $a$-$IC$, $d$-$IC$ nor $IR$.\footnote{\label{footnotePOnotIR}Proposition~\ref{alphaOnlyDIC} shows that the only permutation function that makes Algorithm~\ref{algoPickingSequence} $d$-IC differs from $\delta$. Furthermore, Proposition~\ref{propSerDictIncomAIC} states that no permutation function makes Algorithm~\ref{algoPickingSequence} $a$-IC. Finally, Example~\ref{exSafeSerialDictator} presents an instance where the matching returned by Algorithm~\ref{algoPickingSequence} using $\delta$ as input is not $IR$.} Therefore, we should consider other permutation functions as input of Algorithm \ref{algoPickingSequence}. Another candidate is the ascending arrival permutation $\alpha$, which ranks the agents by increasing arrival time i.e., such that $a_{\alpha(1)}<a_{\alpha(2)}<\ldots < a_{\alpha(n)}$. Note that this permutation corresponds to the identity because we have assumed that $a_1<a_2<\ldots < a_n$. The following proposition shows that  the ascending arrival permutation $\alpha$ is prefix-stable.

\begin{proposition}\label{propAlphaOnline}
The ascending arrival permutation $\alpha$ is $PS$.
\end{proposition}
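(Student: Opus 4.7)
The plan is to follow the same template as the proof of Proposition~\ref{propDeltaOnline}, replacing "departure time" with "arrival time" throughout, and showing that truncating $I$ to $I_{< d_i}$ does not disturb the first $i'$ positions of the ascending-arrival order when $\alpha_{i'}(I) = i$.

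First, I fix an agent $i$ and a position $i'$ with $\alpha_{i'}(I) = i$, and write $I' = I_{< d_i}$. I need to show that for every $j' \leq i'$, $\alpha_{j'}(I) = \alpha_{j'}(I')$. By definition of $\alpha$, the agents occupying positions $1, 2, \ldots, i'$ in $\alpha(I)$ are exactly the $i'$ agents of $N$ with the smallest arrival times, so each such agent $k$ satisfies $a_k \leq a_i < d_i$ and therefore still belongs to $I'$. Conversely, any agent removed when passing from $I$ to $I'$ has $a_k \geq d_i > a_i$, so such an agent cannot occupy any of the first $i'$ positions of $\alpha(I)$.

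Second, I observe that the restriction to $I'$ does not change the arrival times of the agents that remain. Hence the $i'$ agents with smallest arrival times in $I$ coincide, as a set, with the first $i'$ agents of $\alpha(I')$ (noting that $I'$ contains at least these $i'$ agents, since agent $i$ is among them), and their relative order is identical in both instances because arrival times are preserved. This gives $\alpha_{j'}(I) = \alpha_{j'}(I')$ for every $j' \leq i'$, which is exactly the prefix-stability condition of Definition~\ref{defPS}.

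There is no real obstacle here: the argument is essentially a line-by-line transcription of the proof for $\delta$, relying on the simple fact that arrival times of the earliest $i'$ agents are all strictly smaller than $d_i$ and therefore unaffected by the truncation $I_{< d_i}$.
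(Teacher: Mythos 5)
Your proof is correct and follows essentially the same argument as the paper's: the first $i'$ agents in the ascending-arrival order all arrive before $d_i$ and hence survive the truncation to $I_{<d_i}$, any removed agent arrives after $d_i > a_i$ and so cannot occupy one of those positions, and the preserved arrival times fix the relative order. No meaningful difference from the paper's own proof.
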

    
\begin{proof}
Note first that for any instance $I$ and for any agent $i$ and $j$ such that $a_j < a_i$, $a_i < d_i$ implies that $a_j < d_i$, and therefore agent $j$ belongs to $I_{< d_i}$. Furthermore, for any agent $k$ that belongs to $I \setminus I_{<d_i}$, $a_i < d_i < a_k$ hold, and therefore $k$ cannot be ranked before agent $i$ according to the ascending arrival permutation $\alpha$ applied to $I$. Finally, the arrival time of each of these agents is the same in $I$ and $I_{< d_i}$. Therefore, the positions of the $i$ first ranked agents according to  the ascending arrival permutation $\alpha$ are the same for $I$ and $I_{< d_i}$.
\end{proof}

\begin{example}
    Let us run Algorithm \ref{algoPickingSequence} with  the ascending arrival permutation $ \alpha $ on the instance described in Figure \ref{figOHMexample}. At the departure of agent 2, agent 1 selects first and picks $ e_2 $, which she prefers to $ e_1 $. Then, agent 2 selects the last remaining item at $ d_2 $, which is $ e_1 $. Finally, at the departure of agent 3, she picks the last remaining item at $ d_3 $, which is $ e_3 $. The resulting allocation corresponds to $ M'' $ described in Example \ref{exSD}.
\end{example}

The following proposition shows that the choice of the ascending arrival permutation $\alpha$ as input of Algorithm \ref{algoPickingSequence} incentivizes agents to truthfully report their departure time.

\begin{proposition}\label{propAlphaDIC}
Algorithm \ref{algoPickingSequence} using  the ascending arrival permutation $\alpha$ as input is $d$-$IC$.
\end{proposition}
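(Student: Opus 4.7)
The plan is to first apply Proposition~\ref{propOnlineWIC} in order to reduce the statement to the case where agent~$k$ reports her true preferences, and then to show that reporting an earlier departure cannot help. Given a manipulation instance~$I'$ with $a'_k = a_k$, I will introduce the intermediate instance~$I''$ obtained from~$I'$ by replacing the declared $\succ'_k$ by the truthful~$\succ_k$. Since $I''$ and $I'$ differ only in agent~$k$'s reported preferences, Proposition~\ref{propOnlineWIC} yields $\mathcal{A}_k(I'') \succeq_k \mathcal{A}_k(I')$, so it will suffice to prove $\mathcal{A}_k(I) \succeq_k \mathcal{A}_k(I'')$. The case $d'_k = d_k$ being trivial, I may assume $d'_k < d_k$.

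To analyze the two runs of Algorithm~\ref{algoPickingSequence} with~$\alpha$ on~$I$ and on~$I''$, I rely on the fact that each agent~$j$ eventually picks her preferred item from $E_{<\tau_j}$ minus the items already selected by the agents that precede her in~$\alpha$, where $\tau_j = \min\{d_\ell : a_\ell \geq a_j\}$ depends on the departure times of the considered instance. Writing $\mu_j = \min\{d_\ell : a_\ell \geq a_j,\ \ell \neq k\}$, one checks that: every agent~$r$ with $a_r > a_k$ has $\tau_r(I) = \tau_r(I'') = \mu_r \geq \mu_k$ and, being after~$k$ in~$\alpha$, picks after~$k$ in both runs; the agents with $a_j < a_k$ split into $P = \{j : \mu_j < d'_k\}$, for which $\tau_j(I) = \tau_j(I'') = \mu_j$, and $Q = \{j : \mu_j \geq d'_k\}$, for which $\tau_j(I'') = d'_k$ while $\tau_j(I) > d'_k$; and $\tau_k(I'') = d'_k$. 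A straightforward induction on~$P$ in $\alpha$-order shows that these agents select identical items in the two runs.

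The crux of the argument is then an induction on the agents of~$Q$ in $\alpha$-order, establishing that after the first $m$ of them have picked in each run (following the picks of~$P$), every item of $E_{<d'_k}$ already selected in~$I$ is also selected in~$I''$. For the inductive step at the next $j \in Q$, her option set in~$I''$ (namely $E_{<d'_k}$ minus the items already taken there) is contained in her option set in~$I$, thanks to the inductive hypothesis combined with $E_{<d'_k} \subseteq E_{<\tau_j(I)}$. The main obstacle is that~$j$ may pick different items in the two runs; a direct case analysis handles this by observing that her choice~$j_I$ in~$I$ either lies outside $E_{<d'_k}$ (and is thus irrelevant to the claim), or has already been taken in~$I''$ before~$j$'s turn, or necessarily coincides with her choice in~$I''$ (being the top-ranked item of the larger set while lying in the smaller one, hence top-ranked there as well). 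Taking $m = |Q|$ and using $E_{<d'_k} \subseteq E_{<\tau_k(I)}$, the claim yields that~$k$'s option set in~$I''$ is contained in her option set in~$I$, and the best item of the smaller set cannot be strictly preferred by~$k$ to the best of the larger, which concludes the proof.
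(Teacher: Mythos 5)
Your proof is correct, but it takes a genuinely different route from the paper's. The paper first reduces to truthful preferences via Proposition~\ref{propOnlineWIC} (as you do), and then argues \emph{locally}: it compares the run with reported departure $d'_k$ to the run where the departure is pushed just past the single next event, distinguishing whether that event is another agent's departure or a new arrival, and in the latter case uses a ``cascade'' argument (each earlier picker either keeps her old item, or takes the new item or an item abandoned by a predecessor, so the leaving agent's old pick survives); the full claim follows by chaining these one-event steps. You instead compare the truthful run and the manipulated run \emph{globally} in one shot, by first reformulating the algorithm so that agent $j$ picks her best item of $E_{<\tau_j}$ net of her $\alpha$-predecessors' picks, and then propagating the set-inclusion invariant ``every item of $E_{<d'_k}$ taken in $I$ is also taken in $I''$'' along the agents of $Q$ in arrival order. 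Your three-way case analysis in the inductive step is exactly what is needed, and the final containment of option sets delivers the conclusion. What each approach buys: the paper's event-by-event argument is more intuitive and reuses the same machinery in the proof of Proposition~\ref{propdICalphadyn}, but its case~(ii) cascade and the implicit induction over skipped events are left somewhat informal; your invariant makes the monotonicity precise and avoids tracking which item was ``abandoned'' by whom. Two small things to tighten: the characterization $\tau_j=\min\{d_\ell : a_\ell\geq a_j\}$ and the fact that $\alpha$-successors of $k$ never pick before $k$ deserve a short justification from the pseudocode (both are true), and the assertion $\tau_k(I'')=d'_k$ fails in the degenerate case $\mu_k<d'_k$; there $Q=\emptyset$ and both runs coincide up to and including $k$'s pick, so the conclusion holds trivially, but the case should be dispatched explicitly.
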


\begin{proof}
Note first that the ascending arrival permutation $\alpha$ is $PS$ according to Proposition \ref{propAlphaOnline}, and therefore by Proposition \ref{propOnlineWIC}, we know that no agent has an incentive to misreport her preferences. So we only need to show that no agent has an incentive to misreport her departure time. Note also that as soon as the arrival time is not modified, changing the departure time of an agent has no impact on the ranking dictated by the ascending arrival permutation $\alpha$. We are going to show that if the reported departure time of an agent, say $i$, is delayed after a new event, which may be the departure of another agent or the arrival of a new agent, then her most favorite remaining item is either the same or even better. More formally, let $I$ be an instance and $I'$ be the same instance except that the new departure time $d'_i$ of agent $i$ is delayed to be later than the arrival time $a_j$ or the departure time $d_j$ corresponding to the event occurring right after $d_i$ i.e., either \textit{(i)} $d_i<d_j<d'_i$ or \textit{(ii)} $d_i<a_j<d'_i$. In both cases \textit{(i)} and \textit{(ii)}, all iterations of the ``for'' loop of Algorithm \ref{algoPickingSequence} applied to instance $I$ and $I'$ before reaching $d_i$ are the same, since only the departure time of agent $i$ has changed. Note that if agent $i$ has to choose during one of these iterations for $I$, agent $i$ chooses the same item for $I'$ and is ultimately assigned the same item. Therefore, we can focus on the case where agent $i$ does not choose until it leaves the market at $d_i$ in $I$.

Let us first consider case \textit{(i)}, where agent $j$ leaves earlier than $i$ in $I'$ whereas she left later in $I$. We consider at the same time the two iterations of the ``for'' loop corresponding to $d_i$ and $d_j$ in $I$, and $d_j$ and $d'_i$ in $I'$, which are the first two iterations that differ between $I$ and $I'$. Note that since there is no event occurring between $d_i$ and $d_j$ in $I$, and between $d_j$ and $d'_i$ in $I'$, at the end of these two iterations, the same subset of agents picks an item for $I$ and $I'$, which should be a subset of agents ranked before either agent $i$ or $j$ in the ascending arrival permutation $\alpha$. Furthermore, they choose their items according to the same order (the one dictated by the ascending arrival permutation $\alpha$), and from the same subset of items. Therefore, each one of them chooses the exact same item for both $I$ and $I'$, including agent $i$ who has no incentive to delay her arrival time.

Finally, let us consider case \textit{(ii)}, where $d_i<a_j<d'_i$. Note that a new item arrives to the market, namely $e_j$, before agent $i$ leaves the market. This new item may have consequences on the choices made by the agents who choose before $i$ according to the ascending arrival permutation $\alpha$. More specifically, one of them may choose $e_j$ instead of the item they selected for $I$. Once item $e_j$ has been chosen, say by agent $k$, it is no longer available, but the item that agent $k$ chose in $I$ at the same phase is now available and takes the place of $e_j$ as an additional item to choose from. So each agent choosing before $i$ either picks the same item as in $I$, or picks $e_j$, or another item abandoned by an agent choosing earlier. Therefore, when it is the turn of agent $i$ to choose, the item assigned to her in $I$ is available, as well as an additional item, which may be $e_j$ or another item abandoned by an agent choosing earlier. Consequently, she picks either the same item as in $I$ or an item that she prefers. 
 
 In summary, if agent $i$ delays her departure time, then she receives either the same item or a new one that she prefers. In other words, she has no incentive to report an earlier departure time than her true one, and the algorithm is $d$-$IC$.

\end{proof}

The following proposition shows that no other permutation function, apart from the ascending arrival permutation $\alpha$, makes Algorithm \ref{algoPickingSequence} $d$-$IC$.

\begin{proposition}\label{alphaOnlyDIC}
The ascending arrival permutation $\alpha$ is the only input that ensures Algorithm \ref{algoPickingSequence} is both online and $d$-$IC$.
\end{proposition}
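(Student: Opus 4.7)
The plan is to argue by contradiction. I would suppose some permutation function $\Pi$ distinct from $\alpha$ also makes Algorithm \ref{algoPickingSequence} online and $d$-$IC$; by the reasoning underlying Proposition \ref{propOnlineWIC}, $\Pi$ must then be $PS$, since otherwise the successive iterations of the algorithm would produce inconsistent placements and fail to yield a well-defined matching. I will then construct an instance on which some agent strictly benefits by shortening her declared stay, contradicting $d$-$IC$. The central idea is to exploit an inversion of $\Pi$ against $\alpha$: whenever $\Pi$ ranks some agent $j$ ahead of an earlier-arriving agent $i$, I can tailor preferences so that $j$ grabs the very item $i$ needs under truthful reporting, while under a shortened-stay report by $i$ the disruptive agent $j$ disappears from the subinstance used when $i$ picks.

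Concretely, I would fix any instance $I_0$ with $\Pi(I_0) \neq \alpha(I_0)$ and let $k$ be the smallest position of disagreement. Then $\Pi(I_0)$ and $\alpha(I_0)$ coincide on positions $1, \ldots, k-1$, both listing the $k-1$ earliest arrivers $S = \{\alpha_1(I_0), \ldots, \alpha_{k-1}(I_0)\}$. Setting $i = \alpha_k(I_0)$ and $j = \Pi_k(I_0)$, one has $j \notin S \cup \{i\}$, whence $a_i < a_j$; applying $PS$ to agent $i$ at her position $k^+ > k$ in $\Pi(I_0)$ further forces $j$ to lie in $(I_0)_{< d_i}$, so $a_j < d_i$. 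I would then build instance $I$ with exactly the same arrivals and departures as $I_0$ (so that $\Pi(I) = \Pi(I_0)$, since $\Pi$ depends only on times) and with the following tailored preferences: every $q \in S$ ranks her own $e_q$ on top, both $i$ and $j$ rank $e_i$ on top, and every remaining agent ranks each item of $\{e_q : q \in S\} \cup \{e_i\}$ strictly below some other item. Tracing the algorithm on $I$, the $PS$ property guarantees that the first $k^+$ placements follow $\Pi(I)$: the $S$-agents are placed first and each takes her own item; then $j$ is placed with $e_i$ still in the pool (since $a_i < a_j$ and the $S$-agents touched only their own items), so $j$ takes $e_i$; finally $i$ is placed at position $k^+$ with $e_i$ unavailable, and she receives some strictly less preferred item.

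I would next consider the misreport $I'$ in which $i$ declares $d'_i = a_i + \epsilon$ for an $\epsilon > 0$ so small that no event of $I$ lies in $(a_i, d'_i)$. Then $I'_{< d'_i} = S \cup \{i\}$, and at $i$'s new iteration the permutation $\Pi(I'_{< d'_i})$ ranges over $S \cup \{i\}$ only. Regardless of $i$'s position in it, every $S$-agent processed ahead of her can only take her own top $e_q$, so $e_i$ remains untouched when $i$'s turn comes and she picks $e_i$. Thus agent $i$ strictly prefers the outcome under $I'$ to that under $I$, contradicting $d$-$IC$.

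The main technical obstacle is handling $S$-agents whose declared departures are so small that their iterations fire before $d'_i$ (or even before $a_i$). For any such $q$, the subinstance $I'_{< d_q}$ excludes $i$ — by the choice of $\epsilon$, necessarily $d_q \leq a_i$ — so $I'_{< d_q} = I_{< d_q}$ and $\Pi$ returns the same permutation there. These earlier iterations therefore play out identically in $I$ and in $I'$: each placed $S$-agent still grabs her own item, and $e_i$, being absent from the pool at these early times, cannot be taken. The tailored preferences thus decouple the $S$-agent picks from $i$'s fate and leave the contradiction cleanly in place.
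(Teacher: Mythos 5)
Your proof is correct and follows essentially the same route as the paper's: both exploit an inversion of $\Pi$ relative to $\alpha$, make the earlier-arriving but later-ranked agent's own item the top choice of both her and her rival, let her truncate her departure so the rival is excluded and she keeps that item, and then use onlineness together with $d$-$IC$ to force the same assignment in the truthful instance, contradicting the fact that the rival, ranked ahead of her by $\Pi$, grabs the item first. The extra bookkeeping you add (selecting the first position of disagreement with $\alpha$, tracking iterations that fire before the shortened departure) is sound but not essential to the core argument.
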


\begin{proof}
    Let $\Pi$ be an online permutation function that is different from the ascending arrival permutation $\alpha$. This means that there is an instance $I$ with two agents $i$ and $j$, whose positions according to $\Pi(I)$ are $i'$ and $j'$, respectively, such that both $a_i > a_j$ and $i' < j'$ hold. In other words, agent $i$ arrives later but is ranked before agent $j$ according to $\Pi$. By contradiction, assume that $\Pi$ makes Algorithm \ref{algoPickingSequence} $d$-$IC$. Let $I'$ be a copy of instance $I$ with the exception of the departure time $d'_j$ of agent $j$, which happens earlier than $d_j$. More precisely, we assume that $a_j < d'_j < a_i$ hold. Since Algorithm \ref{algoPickingSequence} applied with $\Pi$ is $d$-$IC$, the item assigned to agent $j$ for $I'$ should be no better than the one assigned to her for $I$. Furthermore, since Algorithm \ref{algoPickingSequence} is online, agent $j$ should receive the exact same item for instances $I'_{< d'_j}$ and $I'$. Assume that the preferences of the agents are such that $e_j$ is the most favorite item of agents $i$ and $j$, and any other agent $k$ prefers her own item $e_{k}$ to any other one. It is clear that in that case, each agent receives her own item during Algorithm \ref{algoPickingSequence} applied to $I'_{< d'_j}$ since each agent prefers her own item (note that agent $i$ is not part of $I'_{< d'_j}$). It implies that agent $j$ receives also $e_j$ when Algorithm \ref{algoPickingSequence} is applied to $I'$ with the ascending arrival permutation $\alpha$ as input. Since agent $j$ receives her most favorite item for $I'$, she should receive the exact same item in $I$. But since agent $i$ is ranked before agent $j$ in  the ascending arrival permutation $\alpha$, she picks first and chooses $e_j$, which is also her most favorite item. Therefore, agent $i$ should also receive item $e_j$ for $I$, leading to a contradiction since $e_j$ can be assigned only once.
\end{proof}

Note that this statement is not as strong as Proposition \ref{propOnlyPO} since it does not rule out any other $d$-$IC$ online exchange algorithm than Algorithm \ref{algoPickingSequence} using the ascending arrival permutation $\alpha$ as input. In the next sections, we will see another $d$-$IC$ algorithm based on the top trading cycle procedure.

With Algorithm \ref{algoPickingSequence} using  the ascending arrival permutation $\alpha$, we obtain the property of $d$-$IC$. It is easy to check that the property of $a$-$IC$ is not achieved with the ascending arrival permutation $\alpha$,\footnote{This is in fact a direct consequence of Proposition~\ref{propSerDictIncomAIC}.} and we wonder whether there is a permutation function that achieves this property with Algorithm \ref{algoPickingSequence}. The answer provided by the following proposition rules out the existence of such a permutation function.

\begin{proposition}\label{propSerDictIncomAIC}
There is no online permutation function that makes Algorithm \ref{algoPickingSequence} $a$-$IC$.
\end{proposition}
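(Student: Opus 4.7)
The plan is a proof by contradiction: I would assume that some online (hence $PS$) permutation function $\Pi$ makes Algorithm~\ref{algoPickingSequence} $a$-$IC$, and then exhibit a single three-agent instance and preferences where, no matter how $\Pi$ behaves, some agent strictly benefits by reporting a later arrival. Concretely, I would fix arrivals and departures $a_1=0,\ a_2=1,\ d_2=2,\ a_3=3,\ d_1=4,\ d_3=5$, so that $\delta=(2,1,3)$, $E_{<d_2}=\{e_1,e_2\}$ (in particular $e_3$ is unavailable at $d_2$), and $I_{<d_1}=I_{<d_3}=I$; and preferences $e_3\succ_1 e_1\succ_1 e_2$, $e_1\succ_2 e_2\succ_2 e_3$, and $e_1\succ_3 e_3\succ_3 e_2$.

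Next I would use $PS$ to pin down $\Pi$'s behavior on this instance. Since $I_{<d_2}$ has only agents~1 and~2, $\Pi(I_{<d_2})\in\{(1,2),(2,1)\}$, and applying $PS$ to agents~1 and~2 inside $\Pi(I)$ leaves exactly three cases: \emph{(i)} $\Pi(I_{<d_2})=(1,2)$, which forces $\Pi(I)=(1,2,3)$; \emph{(ii)} $\Pi(I_{<d_2})=(2,1)$ with $\Pi(I)=(2,1,3)$; and \emph{(iii)} $\Pi(I_{<d_2})=(2,1)$ with $\Pi(I)=(2,3,1)$. Tracing Algorithm~\ref{algoPickingSequence} in each case would yield the allocations $(e_1,e_2,e_3)$, $(e_3,e_1,e_2)$, and $(e_2,e_1,e_3)$, respectively.

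For cases~(i) and~(iii) I would let agent~1 report $a'_1\in(d_2,a_3)$: then $I'_{<d_2}=\{2\}$, so agent~2 alone takes $e_2$ at $d_2$; $PS$ forces $\Pi_1(I'_{<d_1})=2$ but leaves the last two positions free. I would verify that in both sub-cases $\Pi(I'_{<d_1})=(2,1,3)$ and $(2,3,1)$ agent~1 ends with $e_3$---directly from $\{e_1,e_3\}$ in the first, and after agent~3 takes her favorite $e_1$ in the second---which contradicts $a$-$IC$ since $e_3\succ_1 e_1$ and $e_3\succ_1 e_2$. For case~(ii) I would instead let agent~3 report $a'_3>d_1$: then $I'_{<d_1}$ coincides as an instance with $I_{<d_2}$, so $\Pi(I'_{<d_1})=(2,1)$, and $PS$ applied to both agents~1 and~2 in $\Pi(I'_{<d_3})$ then pins it uniquely to $(2,1,3)$. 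The algorithm on $I'$ then allocates $e_1$ to agent~2 at $d_2$, $e_2$ to agent~1 at $d_1$, and finally $e_3$ to agent~3 at $d_3$, so agent~3 strictly improves from $e_2$ to $e_3$.

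The hard part will be the $PS$-bookkeeping: checking that only these three cases of $\Pi$ survive, and verifying in cases~(i) and~(iii) that agent~1 receives $e_3$ in \emph{both} sub-cases of $\Pi(I'_{<d_1})$. The preference $e_1\succ_3 e_3$ is critical for the sub-case $\Pi(I'_{<d_1})=(2,3,1)$: without it, agent~3 would grab $e_3$ before agent~1 and agent~1's manipulation would collapse.
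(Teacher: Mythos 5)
Your proof is correct and takes essentially the same route as the paper: the same three-agent preference profile and the same key manipulation (an agent delaying her arrival past $d_2$), with the paper deriving two clashing necessary conditions (agent~1 must receive $e_3$, and $\Pi(I_{< d_2})$ must rank agent~1 first) while you instead enumerate the three $PS$-consistent permutations and exhibit a profitable deviation in each. The only substantive differences are that you swap the order of $d_1$ and $d_3$ and, in your case~(ii), use a deviation by agent~3 rather than agent~1; both variants check out.
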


\begin{proof}
By contradiction, assume that there exists an online permutation $\Pi$ that makes Algorithm \ref{algoPickingSequence} $a$-$IC$. Consider the following instance $I$ with 3 agents, such that $a_1<a_2<d_2<a_3<d_3<d_1$. The preferences of the agents are $e_3\succ_1 e_1\succ_1 e_2$, $e_1\succ_2 e_2\succ_2 e_3$, and $e_1\succ_3 e_3\succ_3 e_2$. Note that since the algorithm is $a$-$IC$, agent 1 should be assigned item $e_3$, which is her most favorite item. Otherwise, by changing her arrival time to $a'_1$ such that $d_2<a'_1<a_3$, leading to a new instance $I'$, she would arrive later than agent 2's departure but before agent 3's departure. Whatever order is dictated by $\Pi(I')$, agent 1 will pick item $e_3$ and agent 3 will pick item $e_1$ (their most favorite items).

 Let us now consider instance $I_{< d_2}$, where agent 3 is suppressed. For this instance, if agent 2 is ranked first in $\Pi(I_{< d_2})$, then during the first iteration of the ``for'' loop of Algorithm \ref{algoPickingSequence}, she chooses item $e_1$, which is her most favorite item. During the next iteration, where agent 1 is alone in the market until she leaves at time $d_1$, she picks the only remaining item, $e_2$. However, since agent 1 prefers item $e_1$, she has an incentive to misreport her true arrival time and declare instead $a'_1$, such that $d_2<a'_1<d_1$, in order to retain item $e_1$. Therefore, since Algorithm \ref{algoPickingSequence} using $\Pi$ is $a$-$IC$, permutation function $\Pi(I_{< d_2})$ should be such that agent 1 chooses first. Hence, $\Pi_1(I_{< d_2})=1$ and $\Pi_2(I_{< d_2})=2$ should hold.

 But since permutation function $\Pi$ is online, we should have $\Pi_2(I) = \Pi_2(I_{< d_2}) = 2$ and $\Pi_1(I) = \Pi_1(I_{< d_2}) = 1$. Now, consider the application of Algorithm \ref{algoPickingSequence} with $\Pi$ to $I$. During the first iteration of the ``for'' loop, agent 1, who is the first to choose according to $\Pi(I)$, should choose an item among $e_1$ and $e_2$, leading to a contradiction with the fact that agent 1 is assigned item $e_3$.

\end{proof}

\section{Dynamic Serial Dictatorship Procedure}\label{secDynSerDict}

In this section, we explore a dynamic version of Algorithm \ref{algoPickingSequence}, where the choice made by the serial dictatorship procedure can be revised when new items arrive to the market. The final decision about the item allocated to an agent is made when this agent leaves the market. As described in Algorithm \ref{algoOnlinePickingSequence}, a serial dictatorship procedure is used to decide which item will be allocated to the agent leaving the market. Each agent appearing before the agent leaving the market in the sequence selects an item among the remaining ones, and after that, the agent leaving the market picks the best remaining item left by the others. Note that, except for the agent that leaves the market, no item is allocated to an agent. We can see the choices made by these agents remaining in the market as reservations that save all selected items to be picked by the leaving agent. However, the remaining agents may leave the market with other items, including those that will appear in the market later on. Note that Algorithm \ref{algoOnlinePickingSequence} makes use of the ascending departure permutation $\delta$ presented in the previous section, in order to treat the agents by increasing departure times, which is required for the algorithm to be online.

\begin{algorithm}[tb]
\caption{Dynamic serial dictatorship procedure}
\textbf{Input}:  Permutation function $\Pi$.

\begin{algorithmic}[1]\label{algoOnlinePickingSequence}
\STATE $A\leftarrow \emptyset$. \COMMENT{Items already assigned.}
\STATE $B\leftarrow \emptyset$ \COMMENT{Agents already matched.}
 \FOR{$i=1$ \textbf{to} $n$}
   \STATE \COMMENT{Iteration occurring at $d_{\delta(i)}$.}
   \STATE Let $\pi$ denote $\Pi(I_{< d_{\delta(i)}})$.
   \STATE $j \leftarrow 1$
   \STATE $T\leftarrow \emptyset$ \COMMENT{Set of items reserved in this phase.}
   \WHILE{$\pi(j) \neq \delta(i)$}
     \IF{$\pi(j)\not\in B$}
       \STATE $T \leftarrow T\cup \{ best_{\pi(j)}(E_{< d_{\delta(i)}}\setminus (A\cup T))\}$.
     \ENDIF
     \STATE $j\leftarrow j+1$
   \ENDWHILE
     \STATE $M(\delta(i))\leftarrow best_{\delta(i)}(E_{< d_{\delta(i)}}\setminus (A\cup T))$.
     \STATE $A \leftarrow A\cup \{ M(\delta(i))\}$.
     \STATE $B \leftarrow B\cup \{ \delta(i)\}$.
 \ENDFOR
\STATE \textbf{return} $M$.
\end{algorithmic}
\end{algorithm}

As with Algorithm \ref{algoPickingSequence}, Algorithm \ref{algoOnlinePickingSequence} should use an online permutation function as input to be online. However, contrary to Algorithm \ref{algoPickingSequence}, Algorithm \ref{algoOnlinePickingSequence} does not require the use of a $PS$ permutation function to run properly or to be online. However, incentive compatibility is more difficult to obtain since the procedure may ask multiple times an agent about her most favorite item among the remaining ones, and agents may have greater incentive to misreport. Therefore, Algorithm \ref{algoOnlinePickingSequence} requires the use of specific permutation functions to incentivize agents to truthfully report their preferences. Here is a definition describing this type of permutation functions.

\begin{definition}
    Permutation function $\Pi$ is prefix-stable extended ($PFE$), if for any three agents $i,j$ and $k$ such that $a_i<d_k<d_i$, $d_k<d_j$, $d_i>a_j$, agent $k$ is ranked behind agent $i$ in $\Pi(I_{< d_k})$, and either $a_j>d_k$ or $j$ is also ranked behind agent $i$ in $\Pi(I_{< d_k})$, then agent $j$ is ranked behind agent $i$ in $\Pi(I_{< \min{\{d_i,d_j\}}})$.
\end{definition}

To put it in simple terms, the order over the agents that are ranked before the leaving agent should not change afterward.\footnote{As the name suggests, the prefix-stable extended concept is closely related to the notion of prefix stability.} This means that an agent $i$ that is ranked in $\Pi(I_{<d_k})$ before the leaving agent $k$, should not be ranked later on after another agent $j$ that was either ranked after $i$ in $\Pi(I_{<d_k})$ or not even present before agent $k$ leaves the market. The intuition behind this formal definition is that the order in which agents reserve their items during Algorithm \ref{algoOnlinePickingSequence} should not change from iteration to iteration, because otherwise they may adopt a strategic behavior. For example, when agent $k$ is leaving the market, agent $i$ may strategically reserve an item for agent $j$ in order to avoid that later on when agent $j$ chooses before agent $i$, she picks or reserves an item that is better for agent $i$\footnote{The full explanation is provided in the proof of Proposition \ref{ICprop}}. The following proposition characterizes the $PFE$ permutation functions as the ones which make Algorithm \ref{algoOnlinePickingSequence} $WIC$.

\begin{comment}

\begin{definition}
A permutation function $\Pi$ is compatible if for any agents $i, j, k\in N$, such that $d_i>d_k$, $d_j>d_k$, $\Pi^{-1}_{I_{< d_k}}(j)>\Pi^{-1}_{I_{< d_k}}(i)$ and $\Pi^{-1}_{I_{< d_k}}(k)>\Pi^{-1}_{I_{< d_k}}(i)$ hold, we have $\Pi^{-1}_{I_{< d_j}}(j)>\Pi^{-1}_{I_{< d_j}}(i)$ if $d_i>d_j$, and $\Pi^{-1}_{I_{< d_i}}(j)>\Pi^{-1}_{I_{< d_i}}(i)$ otherwise.
\end{definition}

\begin{definition}
A permutation function $\Pi$ is compatible if for any agents $i, j, k\in N$, such that $\delta^{-1}(i)>\delta^{-1}(k)$, $\delta^{-1}(j)>\delta^{-1}(k)$, $\Pi^{-1}_{I_{< d_k}}(j)>\Pi^{-1}_{I_{< d_k}}(i)$ and $\Pi^{-1}_{I_{< d_k}}(k)>\Pi^{-1}_{I_{< d_k}}(i)$ hold, we have $\Pi^{-1}_{I_{< d_j}}(j)>\Pi^{-1}_{I_{< d_j}}(i)$ if $\delta^{-1}(i)>\delta^{-1}(j)$, and $\Pi^{-1}_{I_{< d_i}}(j)>\Pi^{-1}_{I_{< d_i}}(i)$ otherwise.
\end{definition}

In other words, a permutation is compatible if for any two agents $i$ and $j$, that are steel in the market after agent $k$ leaves, and such that $i$ appears earlier than $j$, who appears earlier than $k$ in the permutation when $k$ leaves the market, then agent $i$ will steel appear before $j$ according to the permutation function during the earliest departure between $i$ and $j$.

\end{comment}

\begin{proposition}\label{ICprop}
The exchange algorithm described in Algorithm \ref{algoOnlinePickingSequence} is $WIC$ if and only if the permutation function $\Pi$ is $PFE$.
\end{proposition}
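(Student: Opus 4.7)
The plan is to prove the two directions of the equivalence separately.

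For the sufficiency direction ($PFE$ implies $WIC$), I would fix an agent~$i$ with true preferences~$\succ_i$ and any alternative report~$\succ'_i$ (keeping $a_i$ and $d_i$ unchanged), and show that the item~$i$ receives under truthful reporting is $\succeq_i$ to what she receives under the misreport. The approach is to argue, by induction over the iterations of Algorithm~\ref{algoOnlinePickingSequence} in order of departure, that the truthful run can be viewed as a virtual serial dictatorship with a consistent order across iterations. The key use of $PFE$ is that if an unmatched agent~$\ell$ is ranked before~$i$ at one iteration, she remains ranked before~$i$ at every subsequent iteration up to $\min\{d_\ell, d_i\}$; this prevents reshuffling of the virtual picking order. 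Under this invariant, the item assigned to~$i$ at~$d_i$ is the best remaining item after all agents virtually ranked before her have obtained their true top choices. Truthful reporting by~$i$ then makes her own reservations coincide with her best response in this virtual serial dictatorship, so any deviation either leaves the set $E_{<d_i}\setminus(A\cup T)$ unchanged at her departure or removes from it an item she strictly prefers to her truthful pick, yielding a weakly worse outcome.

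For the necessity direction ($\neg PFE$ implies $\neg WIC$), I would build an explicit counterexample from any witness of the $PFE$ violation. Take agents $i, j, k$ and an instance in which $i$ is ranked before $k$ in $\Pi(I_{<d_k})$, agent~$j$ is either absent at $d_k$ or ranked after~$i$ there, and yet $j$ is ranked before~$i$ in $\Pi(I_{<\min\{d_i,d_j\}})$. I would design a three-item preference profile so that, under truthful reporting, $i$'s reservation at~$d_k$ of her top available item~$x$ is wasted because agent~$j$ picks~$x$ at~$d_j$ by virtue of overtaking~$i$, while $k$ picks another item~$y$ that~$i$ also values, leaving~$i$ with her third choice at~$d_i$. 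I would then let~$i$ misreport so that she reserves~$y$ instead of~$x$ at~$d_k$: this redirects~$k$ toward an item that~$i$ does not want, so~$j$ still picks~$x$ at~$d_j$, but~$y$ is preserved for~$i$ at~$d_i$, giving her a strictly better outcome. The same template covers both subcases of the $PFE$ hypothesis (whether~$j$ is present at~$d_k$ or arrives later).

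I expect the sufficiency direction to be the main obstacle. Agent~$i$'s misreports at several iterations prior to~$d_i$ can combine to alter the trajectory of committed items~$A$ through a cascade of departing agents, so one must show that no such combined deviation can improve her situation at~$d_i$. Formalizing the invariant that Algorithm~\ref{algoOnlinePickingSequence} simulates a consistent virtual serial dictatorship, and that deviations by~$i$ propagate in a direction that always hurts her, is where most of the effort lies; $PFE$ is precisely what pins down the permutation-consistency half of this invariant. The necessity direction, once the $PFE$-violating witness is fixed, reduces to checking a concrete three-agent instance and is comparatively mechanical.
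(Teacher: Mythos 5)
Your proposal follows essentially the same route as the paper: the necessity direction uses the same three-agent gadget (agent~$i$'s reservation of $e_k$ being wasted when $j$ overtakes her, repaired by reserving her own second-choice item so that $k$ is redirected), and the sufficiency direction rests on the same invariant that, under $PFE$, the reservations of agents ranked before~$i$ evolve independently of~$i$'s reports, so no deviation can make an item she strictly prefers to her truthful pick available at~$d_i$. The only slight imprecision is your claim that a deviation ``leaves the set $E_{<d_i}\setminus(A\cup T)$ unchanged or removes a preferred item'' --- a deviation can in fact alter that set in both directions through the picks of departing agents ranked after~$i$ --- but since none of the added items can be preferred to the truthful pick, the conclusion is unaffected.
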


\begin{proof}
Assume first that $\Pi$ is not $PFE$. This means that there is an instance $I$ and three agents $i$, $j$, and $k$ such that $k$ leaves earlier than $i$ and $j$, $\Pi(I_{< d_k})$ orders $i$ before $j$ and $k$, whereas either $\Pi(I_{< d_j})$ or $\Pi(I_{< d_i})$ orders $j$ before $i$, depending on who between $j$ and $i$ leaves first. Without loss of generality, we can assume that the preferences of the agents for this instance are as follows.\footnote{Note that we assumed earlier in the paper that the outcome of a permutation function is independent of the preferences of the agents.} The preferences of agent $i$ are $e_k\succ_i e_i \succ_i e_j \succ_i \ldots$, where all the other items are ranked arbitrarily after these three items. The preferences of agent $j$ are $e_k\succ_j e_j \succ_j e_i \succ_j \ldots$, and the preferences of agent $k$ are $e_i\succ_k e_k \succ_k e_j \succ_k \ldots$. Finally, the preferences $\succ_{\ell}$ of any other agent $\ell$ are such that $e_{\ell}$ appears in first position, whereas the other items appear in an arbitrary order.

If we apply Algorithm \ref{algoOnlinePickingSequence} to this instance $I$, then each other agent $\ell$ other than $i$, $j$, and $k$ will always choose her own item. Furthermore, when agent $k$ leaves the market, agent $i$, who chooses before agents $k$ and $j$ (if $j$ has already arrived at $d_k$), reserves $e_k$. Independently of the presence of agent $j$ in $I_{< d_k}$, or her choice if she is ranked before agent $k$ in $\Pi(I_{< d_k})$, agent $k$ selects $e_i$, and $e_i$ is assigned to agent $k$ when she leaves the market. Then, in both cases where either agent $i$ or agent $j$ leaves first, agent $j$ selects before agent $i$ and chooses $e_k$. When agent $i$ leaves the market, she receives $e_j$, her third choice.

Now, assume that agent $i$ reveals other preferences, say $\succ'_i$, instead of her true preferences. These new preferences are such that $e_i \succ'_i e_k \succ'_i e_j \succ'_i \ldots$. Let $I'$ denote the copy of $I$ where agent $i$ reveals $\succ'_i$ instead of $\succ_i$, and the other agents keep the same preferences. If we apply Algorithm \ref{algoOnlinePickingSequence} to $I'$, then agent $i$ reserves $e_i$ instead of $e_k$ when agent $k$ leaves the market, and agent $k$ picks $e_k$ or $e_j$, depending on the position of $j$ in $\Pi(I_{< d_k})$. Then, at time $\min\{d_i,d_j\}$, when either $i$ or $j$ leaves the market, agent $j$ selects either $e_k$ or $e_j$, depending on the item assigned earlier to agent $k$, and agent $i$ receives ultimately $e_i$, which is strictly better than $e_j$ according to her true preferences $\succ_i$. Therefore, agent $i$ has an incentive to misreport her preferences.

Now, we are going to show that if $\Pi$ is $PFE$, then Algorithm \ref{algoOnlinePickingSequence} is $WIC$. To do so, we first show that for any agent $i$, the set of items selected by the agent ranked before $i$ in $\Pi$ will not change if agent $i$ changes her revealed preferences. Let's consider the first iteration of the ``for'' loop of Algorithm \ref{algoOnlinePickingSequence} where agent $i$ has to select an item. Note that this iteration will occur at the same time whatever preferences are revealed by agent $i$. If this is the iteration corresponding to $d_i$, where agent $i$ leaves the market, then she has no incentive to misreport her preferences. Therefore, we can focus on the cases where the iteration does not correspond to $d_i$, and agent $i$ remains in the market in the next iterations. Let $K$ denote the set of items remaining in the market and not selected by one of the agents ranked before agent $i$ in $\Pi$. We are going to show that in the next iteration, no agent ranked before $i$ in $\Pi$ will select an item of $K$.

Let us show this by induction on the relative rank of the agent among the ones reserving before $i$. We start with the first agent, who selects her most favorite item among the remaining ones. During the previous iteration, she reserved an item which is not part of $K$. So this item must still be available. During this new iteration, she may prefer another item which was not present during the previous iteration, and leave unreserved the item that she chose before, but she will not select an item from $K$. 

Now, let us consider that the first $k-1$ ranked before agent $i$ did not select an item from $K$, and let us show that the agent at rank $k$ also does not select an item from $K$. Note that each of the $k-1$ first agents reserved during this iteration either the item that they reserved during the previous iteration, or an item that arrived later than the previous iteration, or an item reserved during the previous iteration by an agent ranked at a better position but who chose another item during this iteration and left the previously reserved item free. This means that the item reserved during the previous iteration by the agent at position $k$ is still available, and she can reserve it again, or reserve any other item that was not available during her turn in the previous iteration that she prefers. Thus, she will not select an item from $K$. 

This completes the induction and shows that no agent ranked before $i$ in $\Pi$ will be interested in selecting an item from $K$ in the next iteration of the ``for'' loop of Algorithm \ref{algoOnlinePickingSequence}. By applying the same reasoning, and by considering for each new iteration a larger set $K$ of items not chosen by the agents ranked before $i$, we can show that from iteration to iteration, no agent ranked before $i$ will select an item accessible to $i$ when it is her turn to reserve. This implies that the choice made by $i$ on the item that she reserves during each iteration has no consequence on the set of items selected by the agents who are ranked before her in $\Pi$. Therefore, she has no incentive to misreport her true preferences in order to reserve another item instead of her most favorite one, because if she does so, another agent ranked after her may pick it and leave the market with it.
\end{proof}

Note that, in essence, the definition of $PFE$ is quite similar to Definition~\ref{defPS} of prefix-stability. However, to establish the equivalence in Proposition~\ref{ICprop}, we had to extend the definition to permutations where the positions of agents who have already left the market are not fixed (since they are ignored by Algorithm~\ref{algoOnlinePickingSequence}). It is therefore easy to verify that the notion of $PS$ is slightly stronger than that of $PFE$.

\begin{observation}\label{obsSOCimpliesDOC}
    Any permutation function that is $PS$ is also $PFE$.
\end{observation}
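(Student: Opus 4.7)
The plan is to set $t = \min\{d_i, d_j\}$ and derive the conclusion via a single careful application of $PS$, exploiting the fact that $d_k < t$ so that $I_{<d_k}$ can be viewed as a ``restriction at $k$'' of $I_{<t}$.

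Concretely, I would first apply $PS$ to the instance $I_{<t}$ at agent $k$: since $a_k < d_k < t$, $k$ belongs to $I_{<t}$, and $(I_{<t})_{<d_k} = I_{<d_k}$. Letting $k'$ denote the position of $k$ in $\Pi(I_{<t})$, $PS$ yields $\Pi_{j'}(I_{<t}) = \Pi_{j'}(I_{<d_k})$ for every $j' \leq k'$. Thus the first $k'$ entries of the two orderings coincide and $k$ sits at position $k'$ in $\Pi(I_{<d_k})$ as well. The hypothesis that $i$ is ranked before $k$ in $\Pi(I_{<d_k})$ then places $i$ at some position $i^* < k'$ in both $\Pi(I_{<d_k})$ and $\Pi(I_{<t})$. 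It remains to show that $j$'s position $j''$ in $\Pi(I_{<t})$ exceeds $i^*$.

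I would then split on the two alternatives of the $PFE$ hypothesis. If $a_j > d_k$, then $j \notin I_{<d_k}$, so $j$ is absent from the first $k'$ entries of $\Pi(I_{<d_k})$, and by the prefix coincidence also from the first $k'$ entries of $\Pi(I_{<t})$; hence $j'' > k' > i^*$. Otherwise $j$ is ranked behind $i$ in $\Pi(I_{<d_k})$ at some position $j^* > i^*$: if $j^* \leq k'$ the prefix coincidence gives $j'' = j^*$, and if $j^* > k'$ then $j'' > k'$ as well (because $j'' \leq k'$ would force $j^* = j'' \leq k'$ via the prefix coincidence, a contradiction). Either way $j'' > i^*$, which is the desired conclusion.

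I do not expect a real obstacle: the argument is essentially mechanical bookkeeping. The only subtlety is to verify that $i$ and $j$ indeed belong to $I_{<t}$ --- which follows from $a_i < d_k < t$ and from either $a_j < d_j = t$ or $a_j < d_i = t$ depending on which of $d_i, d_j$ is smaller --- and to note that applying $PS$ on the restricted instance $I_{<t}$ is legitimate, since $PS$ is required to hold on every instance.
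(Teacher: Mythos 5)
Your argument is correct: the paper states this observation without proof (merely asserting it is ``easy to verify''), and your application of $PS$ to the restricted instance $I_{<t}$ with $t=\min\{d_i,d_j\}$ at agent $k$, using $(I_{<t})_{<d_k}=I_{<d_k}$ and the resulting prefix coincidence up to $k$'s position, is exactly the natural argument and handles both branches of the $PFE$ hypothesis as well as the membership of $i$, $j$, $k$ in $I_{<t}$. No gaps.
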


This implies that all the permutation functions encountered earlier in the paper, including the ascending departure permutation $\delta$ and the ascending arrival permutation $\alpha$, render Algorithm \ref{algoOnlinePickingSequence} $WIC$. This is not surprising for the ascending departure permutation $\delta$ since Algorithm \ref{algoOnlinePickingSequence} coincides with Algorithm \ref{algoPickingSequence} when the ascending departure permutation $\delta$ is used as input. The following property shows that Algorithm \ref{algoOnlinePickingSequence} using the ascending arrival permutation $\alpha$ as input is also $d$-$IC$.

\begin{comment}
    
Let $\alpha$ denote the permutation where agents are ordered according to their arrival time\footnote{Ties are broken arbitrarily.} i.e., such that $a_{\alpha(1)} \leq a_{\alpha(2)} \leq \ldots \leq a_{\alpha(n)}$.

\end{comment}

\begin{proposition}\label{propdICalphadyn}
Algorithm \ref{algoOnlinePickingSequence} is $d$-$IC$ if and only if it uses the ascending arrival permutation $\alpha$ as input.
\end{proposition}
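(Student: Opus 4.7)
The plan is to prove the two directions of the equivalence separately, leaning on Proposition~\ref{ICprop} for the preference-misreporting aspect and adapting the arguments of Propositions~\ref{propAlphaDIC} and~\ref{alphaOnlyDIC} for the departure-manipulation aspect.

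For the sufficiency direction ($\Pi = \alpha$ implies $d$-$IC$), I would first invoke Proposition~\ref{propAlphaOnline} ($\alpha$ is $PS$), Observation~\ref{obsSOCimpliesDOC} ($PS$ implies $PFE$), and Proposition~\ref{ICprop} to rule out profitable preference misreports. It remains to show that agent~$i$ cannot benefit by reporting an earlier departure $d'_i < d_i$. Because $\alpha$ depends only on arrival times, the permutation used at every iteration is unaffected by this misreport. I would then mimic the event-by-event decomposition of Proposition~\ref{propAlphaDIC}: it suffices to compare the truthful instance $I$ with a misreport $I'$ differing by a single event in $(d'_i, d_i)$—either another agent's departure (case i) or an arrival (case ii). In case (i), the same agents reserve the same items, and at $d_i$ in $I$ agent~$i$ faces the same pool as she would at $d'_i$ in $I'$. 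In case (ii), a new item enters the market; I would argue inductively, as in Proposition~\ref{propAlphaDIC}, that the cascading reshuffle of reservations among agents ranked before~$i$ in $\alpha$ can only weakly enlarge the set of items accessible to~$i$ when it is her turn.

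For the necessity direction, assume $\Pi \neq \alpha$. Then there exist an instance $I$ and agents $i, j$ with $a_j < a_i$ yet with $i$ ranked before~$j$ in $\Pi(I)$. Adapting the construction of Proposition~\ref{alphaOnlyDIC}, I would specify preferences in which $e_j$ is the unique top choice of both~$i$ and~$j$, while every other agent ranks her own endowment first. Consider the misreport $I'$ in which $j$ declares $d'_j$ with $a_j < d'_j < a_i$. In $I'_{<d'_j}$ agent~$i$ is absent; by the preferences, all other agents present reserve their own items and $j$ picks her favorite $e_j$. By online-ness, $\mathcal{A}_j(I') = e_j$. If the algorithm were $d$-$IC$, then $\mathcal{A}_j(I) \succeq_j e_j$ would force $\mathcal{A}_j(I) = e_j$. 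I would then contradict this by showing that, in $I$, at the iteration corresponding to~$j$'s true departure, agent~$i$ (ranked before~$j$ in the relevant permutation) reserves~$e_j$, so that~$j$ necessarily receives a strictly worse item when her assignment is finalized at~$d_j$.

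The main obstacle lies in the dynamic reservation mechanism, which complicates the sufficiency direction: unlike Algorithm~\ref{algoPickingSequence}, reservations in Algorithm~\ref{algoOnlinePickingSequence} can shift between iterations as new items appear. I therefore need a careful invariant ensuring that any item released by an earlier-ranked agent (after switching to a newly available item) was in any case not preferred by~$i$ to what she would have obtained under $I'$, so that her accessible pool at the true $d_i$ can only weakly improve. The necessity direction, by contrast, benefits from the fact that each agent's assignment is locked at her own departure iteration, which cleanly isolates the comparison between $I$ and~$I'$ without needing to trace the remainder of the execution.
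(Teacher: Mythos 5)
Your proposal is correct and follows essentially the same route as the paper: sufficiency via Proposition~\ref{propAlphaOnline}, Observation~\ref{obsSOCimpliesDOC}, and Proposition~\ref{ICprop} for preference misreports, then a single-event decomposition (intervening departure vs.\ intervening arrival) with the cascading-reservation invariant showing agent~$i$'s accessible pool only weakly improves; and necessity by the same construction as Proposition~\ref{alphaOnlyDIC}, which is exactly how the paper handles it. The only cosmetic difference is that the paper's case~(i) explicitly splits on whether $a_j \gtrless a_i$, a detail your sketch subsumes.
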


\begin{proof}
Let us first show that Algorithm \ref{algoOnlinePickingSequence} using the ascending arrival permutation $\alpha$ as input is $d$-$IC$. Note first that the ascending arrival permutation $\alpha$ is $PS$, and therefore by Proposition \ref{ICprop} and Observation \ref{obsSOCimpliesDOC}, we know that no agent has an incentive to misreport her preferences. So we only need to show that no agent has an incentive to misreport her departure time. Note also that as soon as the arrival does not change, changing the departure time of an agent has no impact on the ranking dictated by the ascending arrival permutation $\alpha$. We are going to show that if the departure time of an agent, say $i$, is delayed after a new event, which may be the departure of an agent or the arrival of a new agent, then her most favorite available item is either the same or even better.

More formally, let $I$ be an instance and $I'$ be the same instance except that the new departure time $d'_i$ of agent $i$ is postponed to be later than the arrival time $a_j$ or the departure time $d_j$ corresponding to the event occurring right after $d_i$ i.e., either \textit{(i)} $d_i<d_j<d'_i$ or \textit{(ii)} $d_i<a_j<d'_i$. In both cases \textit{(i)} and \textit{(ii)}, all iterations of the ``for'' loop of Algorithm \ref{algoOnlinePickingSequence} applied to instance $I$ and $I'$ before reaching $d_i$ are the same, since only the departure time of agent $i$ has changed.

In case \textit{(i)}, agent $j$ leaves earlier than $i$ in $I'$ whereas she left later in $I$. If $a_j > a_i$, then agent $i$ is ranked before agent $j$ according to the ascending arrival permutation $\alpha$. All the agents ranked before agent $i$ reserve the same items in $I$ and $I'$, so agent $i$ reserves in $I'$ the same item that she picks at iteration $d_i$ in $I$. Finally, all the agents ranked between agent $i$ and agent $j$ reserve the same items in $I'$ as the ones they reserve at $d_j$ in $I$. Then, agent $j$ picks the same item in $I'$ as the one she picks at $d_j$ in $I$. During the next iteration at time $d'_i$, all the agents ranked before agent $i$ reserve the same items in $I'$ as they reserve at $d_i$ in $I$, and agent $i$ selects the exact same item in $I'$ as the one she receives at $d_i$ in $I$. 

Consider now the other case where $a_j < a_i$. In this case, agent $i$ is not allowed to reserve an item during the iteration at $d_j$ in $I'$. Once again, during the iteration corresponding to $d_j$ in $I'$, all agents ranked before agent $j$ reserve the same items as in $I$, and therefore agent $j$ picks the item that she reserved in $I$. During the next iteration at $d'_i$, all agents ranked before agent $i$ reserve the same items in $I'$ as they do at $d_i$ in $I$. Although agent $j$ is no longer present to reserve an item at $d'_i$ in $I'$, she already picked the one she reserved at $d_j$ in $I$. Therefore, the set of items remaining for agent $i$ is the same at $d_i$ in $I$ and at $d'_i$ in $I'$, and she selects the same item.

In case \textit{(ii)}, a new item arrives to the market in $I'$, namely $e_j$, before agent $i$ leaves the market. This new item may affect the choices made by the agents ranked before agent $i$ in the ascending arrival permutation $\alpha$. Specifically, each of these agents may choose to reserve $e_j$ instead of the item they reserved in $I$ during the same step. Once item $e_j$ has been reserved, say by agent $k$, it is no longer available. However, the item that agent $k$ reserved in $I$ during the same iteration is now available and becomes an additional option in $I'$. As a result, each agent ranked before agent $i$ either reserves the same item as in $I$ or reserves $e_j$ (or another item left behind by an agent ranked before them). When it is agent $i$'s turn to select an item, the item assigned to her in $I$ is still available, along with an additional item, which could either be $e_j$ or another item abandoned by a preceding agent. Thus, agent $i$ will either pick the same item as in $I$ or a preferred item that was newly made available.

In summary, if agent $i$ delays her departure time, then she receives either the same item or a new one that she prefers. In other words, she has no incentive to report an earlier departure than her true departure, and the algorithm is $d$-IC.

Regarding the proof showing that the ascending arrival permutation $\alpha$ is the only permutation function rendering Algorithm \ref{algoOnlinePickingSequence} $d$-IC, it is essentially the same as the proof of Proposition \ref{alphaOnlyDIC} for Algorithm \ref{algoPickingSequence}.
\end{proof}

\begin{example}\label{exDynSerDict}
    Consider the instance described in the left part of Figure \ref{figOHMexample3}, and the three allocations graphically described in the right part of the same figure. It is easy to check that the static serial dictatorship procedure with the ascending departure permutation $\delta$ produces allocation $M$, and the same procedure with the ascending arrival permutation $\alpha$ produces allocation $M'$.
    
    Let us now run the dynamic serial dictatorship procedure with the ascending arrival permutation $\alpha$. At $d_2$, agent 1 is the first to choose, and she reserves $e_3$. Then, agent 2 is the second to choose, and she picks $e_1$. Item $e_1$ is therefore allocated to agent 2, and all other items remain unallocated. At $d_1$, agent 1 is the first to choose once again and she picks $e_4$ this time, which was not available at $d_2$. Therefore, item $e_4$ is allocated to agent 1. At $d_3$, agent 3 is the first to choose, and she picks item $e_2$. Finally, at $d_4$, agent 4 picks the only remaining item, $e_3$. The resulting allocation is therefore $M''$, as displayed in Figure \ref{figOHMexample3}. Note that $M''$ Pareto-dominates $M'$, which is the result of the static serial dictatorship procedure using the same ascending arrival permutation $\alpha$. However, this will not necessarily be the case each time, even though the result of the dynamic version of serial dictatorship will never be Pareto-dominated by the result of the static one.

    \begin{figure}[t]
\includegraphics[width=\textwidth]{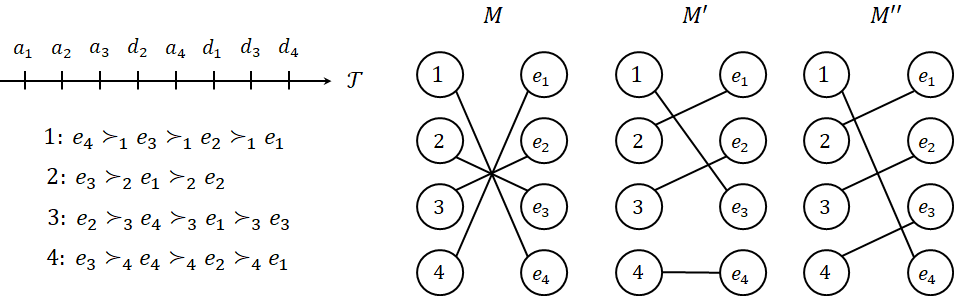}
\caption{Online housing market example with four agents.}\label{figOHMexample3}
\end{figure}
\end{example}

Similarly to Algorithm \ref{algoPickingSequence}, Algorithm \ref{algoOnlinePickingSequence} cannot be made $a$-IC.

\begin{proposition}\label{propDynSerDictIncompAIC}
There is no online permutation that makes Algorithm \ref{algoOnlinePickingSequence} $a$-IC.
\end{proposition}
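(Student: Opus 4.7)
The plan is to mirror the proof of Proposition~\ref{propSerDictIncomAIC}, adapted to the weaker structural constraint that governs Algorithm~\ref{algoOnlinePickingSequence}. Since $a$-IC allows an agent to misreport her preferences while fixing $a'_i=a_i$ and $d'_i=d_i$, it implies $WIC$; hence by Proposition~\ref{ICprop} any candidate $\Pi$ must be $PFE$. Reuse the three-agent instance from Proposition~\ref{propSerDictIncomAIC}, namely $a_1<a_2<d_2<a_3<d_3<d_1$ with preferences $e_3\succ_1 e_1\succ_1 e_2$, $e_1\succ_2 e_2\succ_2 e_3$, and $e_1\succ_3 e_3\succ_3 e_2$.

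The first two deductions transfer almost verbatim. By considering agent~1's delay to $a'_1\in(d_2,a_3)$, in the manipulated instance only agents~1 and~3 remain present after $d_2$ and the items $\{e_1,e_3\}$ are distributed between them, so regardless of the permutation agent~3 obtains her top $e_1$ and agent~1 obtains $e_3$; $a$-IC therefore forces $\mathcal{A}_1(I)=e_3$. Next, applying $a$-IC to the two-agent standalone sub-instance $I_{<d_2}$, the same reasoning as in Proposition~\ref{propSerDictIncomAIC} (if agent~2 were first she would reserve $e_1$, leaving agent~1 with $e_2$, who would then profitably delay past $d_2$ to obtain $e_1$) forces $\Pi(I_{<d_2})=(1,2)$. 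Applying $PFE$ with $i=1$, $k=2$, $j=3$, whose preconditions hold because $a_3>d_2$ and agent~2 is ranked behind agent~1 in $\Pi(I_{<d_2})$, then forces agent~1 to precede agent~3 in $\Pi(I)$.

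The main obstacle is the final step: unlike $PS$, the weaker $PFE$ does not pin agent~1 to the first position of $\Pi(I)$; agent~2 is allowed to precede her. In Algorithm~\ref{algoOnlinePickingSequence}, agent~1 does not commit to anything at $d_2$, only reserves, so the static contradiction (picking from $\{e_1,e_2\}$ at iteration $d_2$) vanishes: agent~1 can reserve $e_3$ at iteration $d_3$ and pick it at $d_1$. Thus this instance alone is insufficient, and the argument must be strengthened. The intended route is to extend the construction with an auxiliary agent whose arrival falls in a strategically chosen interval (e.g.\ between $d_2$ and $d_3$) and whose top choice coincides with an item that $PFE$ already routes to agent~1, so that the $a$-IC constraints induced by this new agent's possible arrival delays force a second, incompatible ordering constraint on $\Pi$ at the sub-instance $\Pi(I_{<d_3})$. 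The hard part will be to pick the auxiliary agent's timing and preferences so that the $a$-IC requirements on every resulting sub-instance jointly rule out every permutation consistent with both $\Pi(I_{<d_2})=(1,2)$ and the $PFE$-implied order; once this is shown, no feasible $\Pi$ remains and the contradiction is complete.
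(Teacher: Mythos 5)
Your proposal contains a genuine gap, which you candidly flag yourself: the argument ends with a plan rather than a proof. The first three deductions are sound ($a$-IC implies $WIC$, hence $\Pi$ must be $PFE$ by Proposition~\ref{ICprop}; $\mathcal{A}_1(I)=e_3$; $\Pi(I_{<d_2})=(1,2)$; and $PFE$ then places agent~1 before agent~3), and your diagnosis of why the static instance no longer yields a contradiction is exactly right --- in Algorithm~\ref{algoOnlinePickingSequence} agent~1 merely reserves at $d_2$, later switches her reservation to $e_3$ at $d_3$, and ends up with her top item, so every constraint you derived is simultaneously satisfiable. But the proposed fix (an auxiliary agent whose arrival-time deviations force an incompatible ordering on $\Pi(I_{<d_3})$) is never instantiated: no timing, no preferences, and no verification that the induced constraints actually clash. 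As written, the statement is not proved.

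The paper closes the gap not by augmenting the static instance but by replacing it: it takes $a_1<a_2<d_1<a_3<d_2<d_3$ with \emph{identical} preferences $e_3\succ e_2\succ e_1$ for all three agents. There, $a$-IC forces agent~3 (who could delay past $d_2$ and keep $e_3$) to receive $e_3$, which in Algorithm~\ref{algoOnlinePickingSequence} requires agent~3 to be ranked \emph{before} agent~2 in $\Pi(I_{<d_2})$ so that she can reserve $e_3$ before agent~2 picks it; meanwhile $a$-IC on $I_{<d_1}$ forces agent~2 to be ranked before the first leaver, agent~1, and $PFE$ (with $i=2$, $k=1$, $j=3$, using $a_3>d_1$) then forces the late-arriving agent~3 to be ranked \emph{behind} agent~2 in $\Pi(I_{<d_2})$ --- a direct contradiction with no auxiliary agents. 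The structural difference from your instance is that the first departure occurs before the third agent arrives, so the $PFE$ constraint pins down the position of the late arriver relative to an agent she still competes with, rather than relative to an agent who has already secured her top item. If you want to salvage your route, you would need your auxiliary agent to recreate precisely this configuration, at which point you have essentially rebuilt the paper's instance.
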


\begin{proof}
By contradiction, assume that there exists a permutation $\Pi$ that makes Algorithm \ref{algoOnlinePickingSequence} $a$-IC. Consider the following instance $I$ with 3 agents, such that $a_1 < a_2 < d_1 < a_3 < d_2 < d_3$ hold. The preferences of the agents are $e_3 \succ_1 e_2 \succ_1 e_1$, $e_3 \succ_2 e_2 \succ_2 e_1$, and $e_3 \succ_3 e_2 \succ_3 e_1$. Note that since the algorithm is $a$-IC, agent 3 should be assigned her own item, which is also her most favorite one, since otherwise, by changing her arrival time to $a'_3$ such that $d_2 < a'_3$, she would be alone in the market at her arrival and would keep $e_3$.

 Let us consider instance $I_{< d_1}$ where agent 3 is removed. Once again, because the algorithm is $a$-$IC$, agent 2 should receive item $e_2$, since otherwise she can delay her arrival time after the departure of agent 1 and keep her item. This implies that item $e_1$ is assigned to agent 1 in $I_{< d_1}$. Note that this implies that $\Pi_{1}(I_{< d_1})=2$ and $\Pi_{2}(I_{< d_1})=1$ hold since $e_2$ is preferred to $e_1$ by agent 1 and would be chosen if it is not reserved by agent 2. Furthermore, since $e_3$ should be assigned to agent 3 in $I$ and $e_3$ is the most favorite item of agent 2, agent 3 should be able to reserve it, and therefore be ranked before agent 2 in $\Pi(I_{< d_2})=\Pi(I)$. But the fact that $\Pi$ is $a$-$IC$ implies that it is also $WIC$, and by Proposition \ref{ICprop} we know that $\Pi$ should be $PFE$. This leads to a contradiction since in the instance described earlier, we had agent 2 ranked before agent 1 in $\Pi_{I_{< d_1}}$, whereas agent 2 is ranked after an agent arriving later than $d_1$, namely agent 3, in $\Pi(I_{< d_2})$.

\end{proof}

\section{Safe Serial Dictatorship Procedure}\label{secSafeSerDict}

So far, we have considered and designed online algorithms consistent with most of the properties presented in Section \ref{secPrem}, with the exception of $IR$. In this subsection, we explore the possibility to achieve $IR$ using the serial dictatorship procedure. Note that Proposition \ref{propOnlyPO} states that the only algorithm achieving Pareto efficiency corresponds to Algorithm \ref{algoPickingSequence} (or equivalently Algorithm \ref{algoOnlinePickingSequence}) using the ascending departure permutation $\delta$ as input. As suggested in Section \ref{secAscendingArrival} (see footnote \ref{footnotePOnotIR}), this algorithm is neither $a$-$IC$, $d$-$IC$ nor $IR$, leading to the following corollary.

\begin{corollary}\label{corPOvsOthers}
There is no online exchange algorithm that both returns a $\mathcal{M}(I)$-Pareto optimal allocation for any instance $I \in \mathcal{I}$, and is either $a$-$IC$, $d$-$IC$ or $IR$.
\end{corollary}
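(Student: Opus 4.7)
The plan is to derive the corollary directly from the uniqueness statement of Proposition~\ref{propOnlyPO} together with the three earlier impossibility/uniqueness results flagged in footnote~\ref{footnotePOnotIR}. Concretely, Proposition~\ref{propOnlyPO} collapses the space of candidate algorithms: any online exchange algorithm that is $\mathcal{M}(I)$-$PO$ for every $I\in\mathcal{I}$ must behave exactly like Algorithm~\ref{algoPickingSequence} run with the ascending departure permutation $\delta$. Hence the whole corollary reduces to verifying that this single, pinned-down algorithm fails each of the three properties $a$-$IC$, $d$-$IC$, and $IR$.

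For the $a$-$IC$ and $d$-$IC$ failures I would simply invoke the earlier characterizations. Failure of $a$-$IC$ is immediate from Proposition~\ref{propSerDictIncomAIC}, which rules out the existence of any permutation function — and in particular $\delta$ — making Algorithm~\ref{algoPickingSequence} $a$-$IC$. Failure of $d$-$IC$ is immediate from Proposition~\ref{alphaOnlyDIC}, which says that the only permutation function giving $d$-$IC$ is $\alpha$; since $\alpha$ and $\delta$ disagree on any instance containing two agents $i,j$ with $a_i<a_j$ and $d_i>d_j$, the ascending departure permutation $\delta$ does not yield $d$-$IC$.

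For the $IR$ failure I would exhibit a minimal explicit instance (this is essentially Example~\ref{exSafeSerialDictator} referenced in the footnote). Take two agents with $a_1<a_2<d_2<d_1$ and preferences $e_1\succ_1 e_2$, $e_1\succ_2 e_2$. Under $\delta$, agent~$2$ leaves first and picks her favorite item $e_1$; agent~$1$ is then left with $e_2$, strictly worse than her endowment $e_1$. Thus $IR$ fails on this instance.

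I do not expect any real obstacle: the corollary is a packaging result, and the only thing to be mindful of is ensuring that Proposition~\ref{propOnlyPO} gives uniqueness of the algorithm (not merely existence), so that a single fixed algorithm has to shoulder all three desiderata simultaneously. Once that reduction is made explicit, the three failures are each a one-line citation or the tiny two-agent witness above.
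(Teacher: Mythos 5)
Your proposal is correct and follows essentially the same route as the paper: the corollary is stated there as an immediate consequence of the uniqueness in Proposition~\ref{propOnlyPO} combined with the failures of $a$-$IC$ (Proposition~\ref{propSerDictIncomAIC}), $d$-$IC$ (Proposition~\ref{alphaOnlyDIC}), and $IR$ (a concrete instance, referenced via footnote~\ref{footnotePOnotIR}). Your two-agent $IR$ witness is a valid, slightly more economical substitute for the four-agent example the paper points to.
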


This incompatibility with $IR$ is mainly due to the fact that in order to ensure Pareto efficiency in all cases, there is a need to risk at some point a violation of $IR$. In other words, among the full set of partial allocations, there are ones which may lead to a violation of $IR$ if no additional agents with appealing items appear later on. This risk can be highly beneficial if these hypothetical agents arrive with the right timing. However, this risk cannot be made in an online context to achieve $IR$ in all cases, since in the worst case, these agents will not come to counterbalance unfavorable exchanges. Therefore, in order to relax the notion of Pareto efficiency, we define a subset of allocations, called \textit{safe allocations}, that will never lead to $IR$ violation in the following way.

\begin{definition}
For any instance $I\in \mathcal{I}$, the set of safe allocations $\mathcal{S}(I)$ contains any allocation $M$ such that $M(i) \succeq_i e_i$ holds for any agent $i$, and there exists an allocation $M'$ for instance $I_{< d_i}$ such that $M'(j)=M(j)$ holds for any agent $j$ with $d_j\leq d_i$, and $M'(k)\succeq_k e_k$ holds for any other agent $k$ of $I_{< d_i}$.
\end{definition}

The definition of safe allocation means that after the departure of any agent, it is always possible to complete the current partial allocation to the agents that remain in the market, and to achieve individual rationality without considering agents that arrive later and their initial endowment. Algorithm \ref{algoIROnlinePickingSequence} is an extension of Algorithm \ref{algoOnlinePickingSequence} that focuses on safe allocations instead of any possible allocations. By doing so, the partial allocation remains safe, and the algorithm is individually rational.\footnote{A similar algorithm was considered in~\cite{BiroKP22}, where agents select their items sequentially, but in an offline setting.}

The procedure $bestSafe$ is similar to the procedure $best$ of Algorithms \ref{algoPickingSequence} and \ref{algoOnlinePickingSequence}, except that the best item is taken among the items whose allocation to the leaving agent does not make the partial allocation violate the property of safeness, i.e., it is always possible to complete this partial allocation to obtain a safe one. Note that this procedure can be computed in polynomial time by considering if each remaining item is safe to be allocated to the leaving agent. The problem of checking safeness can be modeled as the problem of searching a maximal matching in a classical assignment problem of the remaining agents to the remaining items, where an item cannot be assigned to an agent that prefers her initial endowment to it. Such a problem can be solved in polynomial time by using, e.g., the Hungarian algorithm \cite{lovasz1986matching,Burkard08}.

\begin{algorithm}[tb]
\caption{Safe serial dictatorship procedure}
\textbf{Input}:  Permutation function $\Pi$.

\begin{algorithmic}[1]\label{algoIROnlinePickingSequence}
\STATE $A\leftarrow \emptyset$. \COMMENT{Items already assigned.}
\STATE $B\leftarrow \emptyset$ \COMMENT{Agents already matched.}
 \FOR{$i=1$ \textbf{to} $n$}
   \STATE \COMMENT{Iteration occurring at $d_{\delta(i)}$.}
   \STATE Let $\pi$ denote $\Pi(I_{< d_{\delta(i)}})$.
   \STATE $j \leftarrow 1$
   \STATE $T\leftarrow \emptyset$ \COMMENT{Set of reserved items.}
   \WHILE{$\pi(j) \neq \delta(i)$}
     \IF{$\pi(j)\not\in B$}
       \STATE $T \leftarrow T\cup \{ bestSafe_{\pi(j)}(E_{< d_{\delta(i)}}\setminus (A\cup T))\}$.
     \ENDIF
     \STATE $j\leftarrow j+1$
   \ENDWHILE
     \STATE $M(\delta(i))\leftarrow bestSafe_{\delta(i)}(E_{< d_{\delta(i)}}\setminus (A\cup T))$.
     \STATE $A \leftarrow A\cup \{ M(\delta(i))\}$.
     \STATE $B \leftarrow B\cup \{ \delta(i)\}$.
 \ENDFOR
\STATE \textbf{return} $M$.
\end{algorithmic}
\end{algorithm}

By choosing the ascending departure permutation $\delta$ as input, Algorithm \ref{algoIROnlinePickingSequence} achieves a relaxed version of Pareto efficiency, as shown by the following proposition.

\begin{proposition}\label{propSafePO}
An online exchange algorithm is $IR$ and returns an $\mathcal{S}(I)$-$PO$ allocation for any instance $I\in \mathcal{I}$ if and only if it is Algorithm \ref{algoIROnlinePickingSequence} using the ascending departure permutation $\delta$ as input.
\end{proposition}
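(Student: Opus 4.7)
The plan is to prove both directions of this characterization.

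For sufficiency, I first observe that when $\Pi=\delta$, Algorithm~\ref{algoIROnlinePickingSequence} simplifies: at iteration~$i$, the first $i$ positions of $\pi=\delta(I_{<d_{\delta(i)}})$ are exactly $\delta(1),\ldots,\delta(i)$ (as in the proof of Proposition~\ref{propDeltaOnline}), and since $\delta(1),\ldots,\delta(i-1)\in B$, the inner while loop leaves $T$ empty, so agent $\delta(i)$ picks $bestSafe_{\delta(i)}(E_{<d_{\delta(i)}}\setminus A)$ directly. Individual rationality then follows from the semantics of $bestSafe$. For $\mathcal{S}(I)$-$PO$, I would reuse the lexicographic argument of Proposition~\ref{propParetoOpt}: if $M'\in\mathcal{S}(I)$ Pareto-dominates the output $M$, the safeness of~$M'$ at $d_{\delta(1)}$ makes $M'(\delta(1))$ itself a safe choice for $\delta(1)$, so $M(\delta(1))\succeq_{\delta(1)} M'(\delta(1))$, and Pareto-domination then forces equality. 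Induction on~$i$, exploiting safeness of~$M'$ at each $d_{\delta(i)}$, yields $M'=M$, contradicting strict domination.

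For uniqueness, I mirror Proposition~\ref{propOnlyPO}. Suppose $\mathcal{A}$ is online, $IR$, and $\mathcal{S}(I)$-$PO$ on every instance, yet disagrees with Algorithm~\ref{algoIROnlinePickingSequence} using $\delta$ on some instance~$I$. Pick the agent $j$ with the earliest departure among disagreements; since $\mathcal{A}(I)$ is itself safe (each partial restriction equals $\mathcal{A}(I_{<d})$, which is $IR$) and since $\mathcal{A}$ coincides with the algorithm on earlier departures, $\mathcal{A}_j(I)\prec_j M(j)$ where $M$ is the algorithm's output. Let $i$ be the agent holding $M(j)$ under $\mathcal{A}$, so $d_i>d_j$. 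Construct $I'$ by adding, for every real agent $k$ with $d_k>d_j$, a dummy $k'$ arriving just after $d_j$ and departing last, with $e_{k'}$ being the top choice of $k$; set $\mathcal{A}_k(I)$ as the top choice of $k'$ for $k\neq i$, and let $i'$ have preferences $\mathcal{A}_j(I)\succ_{i'} M(j)\succ_{i'}\ldots$ Consider the candidate allocation $M^*$ assigning $M(t)$ to each real $t$ with $d_t\leq d_j$ (so $j$ receives $M(j)$), $e_{k'}$ to each real $k$ with $d_k>d_j$, $\mathcal{A}_k(I)$ to each dummy $k'$ with $k\neq i$, and $\mathcal{A}_j(I)$ to $i'$. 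A direct pointwise comparison, using $\mathcal{A}_j(I')=\mathcal{A}_j(I)$ from onlineness, shows $M^*\succeq\mathcal{A}(I')$ pointwise and $M^*(j)\succ_j\mathcal{A}_j(I')$, so $M^*$ Pareto-dominates $\mathcal{A}(I')$. It remains to prove $M^*\in\mathcal{S}(I')$, which contradicts $\mathcal{S}(I')$-$PO$ of $\mathcal{A}$.

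The main obstacle is this safety verification. For every departure $t\leq d_j$, the partial $M^*$ coincides with $M$ on the agents of $I_{<t}=I'_{<t}$, so the $IR$ completion witnessing safeness of $M$ at $t$ is reused directly. For every later departure $t>d_j$, I would exhibit an explicit $IR$ completion on $I'_{<t}$ in which every still-present real agent $k$ receives her top item $e_{k'}$, every still-present dummy $k'$ with $k\neq i$ receives her top item $\mathcal{A}_k(I)$, and $i'$ receives her top item $\mathcal{A}_j(I)$. A short bookkeeping check, using that $\{\mathcal{A}_k(I)\}_{k\in N}$ is a permutation of the real items and that each $e_{k'}$ appears at most once among the remaining dummy items, confirms these assignments tile the remaining agents without conflict. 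Once this safety check is complete, $M^*$ is a safe allocation strictly Pareto-dominating $\mathcal{A}(I')$, delivering the desired contradiction.
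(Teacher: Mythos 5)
Your proposal is correct and follows essentially the same route as the paper's proof: the sufficiency direction uses the same lexicographic-optimality argument (with the useful explicit observation that safeness of a dominating $M'$ makes $M'(\delta(1))$ an available candidate for $bestSafe$), and the uniqueness direction uses the same dummy-agent construction as Proposition~\ref{propOnlyPO}, augmented by the verification that the modified allocation is safe. Your bookkeeping for that safety check (reusing $M$'s completions before $d_j$ and giving every remaining agent her top item afterwards) matches the paper's argument, just spelled out in more detail.
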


\begin{proof}
Let us first show that Algorithm \ref{algoIROnlinePickingSequence} using the ascending departure permutation $\delta$ as input is $IR$ and returns an $\mathcal{S}(I)$-$PO$ allocation for any instance $I$. The property of $IR$ is easy to show since, after each iteration of the ``for'' loop, the partial allocation $M$ can always be completed to be $IR$. This holds even if new agents arrive between two iterations, as their initial endowments can be assigned to them. Now, let us focus on the proof of the other property. The main argument is that the matching $M$ returned by the algorithm should be lexicographically optimal among the matchings in $\mathcal{S}(I)$, according to the order provided by the ascending departure permutation $\delta$ applied to $I$. Indeed, when the ascending departure permutation $\delta$ is used, there is no iteration during the while loop, and $T$ is empty when an agent chooses her item. Therefore, she chooses the best item among the items of $E_{< d_{\delta(i)}} \setminus A$ that maintains safeness. Let $\mathcal{P}$ denote the subset of safe allocations that may Pareto-dominate $M$. According to Algorithm \ref{algoIROnlinePickingSequence}, $M(\pi(1))$ should be the most favorite item of agent $\pi(1)$ in $E_{< d_{\pi(1)}}$ that maintains safeness. Therefore, all matchings in $\mathcal{P} \subseteq \mathcal{S}(I)$ should allocate $M(\pi(1))$ to agent $\pi(1)$ to Pareto-dominate $M$. Given that $M(\pi(1))$ should be allocated to agent $\pi(1)$, Algorithm \ref{algoIROnlinePickingSequence} allocates to agent $\pi(2)$ her most favorite item in $E_{< d_{\pi(2)}} \setminus \{M(\pi(1))\}$ that maintains safeness. Therefore, all matchings in $\mathcal{P}$ should allocate $M(\pi(2))$ to agent $\pi(2)$ to Pareto-dominate $M$. By applying the same reasoning along the elements of permutation $\pi$, we can reach the conclusion that $\mathcal{P}$ may only contain a single allocation corresponding to $M$. However, a matching cannot Pareto-dominate itself, and therefore no matching of $\mathcal{S}(I)$ Pareto dominates $M$.

Let us show now that any other online exchange algorithm cannot be both $IR$ and returns an $\mathcal{S}(I)$-$PO$ allocation for any instance $I$. We show this statement by contradiction by assuming that there exists another online exchange algorithm $\mathcal{A}$ that is both $IR$ and returns an $\mathcal{S}(I)$-$PO$ allocation for any instance $I$, and which behaves differently than Algorithm \ref{algoIROnlinePickingSequence} using the ascending departure permutation $\delta$ as input. This means that there is an instance $I$ and an agent $j$ such that $\mathcal{A}_j(I)$ does not correspond to the assignment made by Algorithm \ref{algoIROnlinePickingSequence} with the ascending departure permutation $\delta$ as input. In other words, agent $j$ does not receive her most favorite item among the remaining ones of $E_{< d_{\delta(j)}}$ that maintain safeness. Furthermore, we assume without loss of generality that agent $i$ is such an agent with the earliest departure time. Let $M$ denote the allocation obtained by Algorithm \ref{algoIROnlinePickingSequence} applied to $I$ and using the ascending departure permutation $\delta$ as input. Note that if $\mathcal{A}_j(I) \succ_j M(j)$ holds, then $\mathcal{A}_j(I)$ is not safe and algorithm $\mathcal{A}$ is not $IR$ since $\mathcal{A}(I_{< d_j})$ cannot assign to each agent an item as preferred as her initial endowment, concluding the proof. Therefore, we can focus on the case where $M(j) \succ_j \mathcal{A}_j(I)$. Let $i$ denote the agent who receives $M(j)$ instead of agent $j$ in $\mathcal{A}(I)$, i.e., such that $\mathcal{A}_i(I) = M(j)$. By assumption, $d_j < d_i$ should hold, since otherwise an agent arriving earlier than agent $j$ would not receive the same item as in $M$.
 
To show the contradiction, we construct a new instance $I'$ by adding dummy agents to $I$ arriving later than the departure of agent $j$. For each agent $k$ leaving after $d_j$, let $k'$ be a dummy agent such that $a_{k'} = d_{j} + k\epsilon$, with $\epsilon > 0$ small enough for each dummy agent to arrive earlier than any other departure, and $d_{k'}$ large enough to leave after any other agent of $I$. The preferences of the agents are as follows. For any agent $k$ that was already in $I$ and leaving after $d_j$, her most favorite item becomes $e_{k'}$, followed by the items of $I$ ranked in the same order as they were in her preferences in $I$, and finally all the other dummy items of $I'$ ranked in arbitrary order. For dummy agent $i'$ (corresponding to agent $i$), her preferences are $\mathcal{A}_j(I) \succ_{i'} M(j) \succ_{i'} \ldots$, where all the other items are ranked in arbitrary order. Finally, for any other dummy agent $k'$, her most favorite item is $\mathcal{A}_k(I)$, and all other items are ranked arbitrarily.

Note first that since $\mathcal{A}$ is online, both $\mathcal{A}_j(I')=\mathcal{A}_j(I'_{< d_j})$ and $\mathcal{A}_j(I_{< d_j})=\mathcal{A}_j(I)$ hold. Furthermore, since all dummy agents arrive later than $d_j$, $I_{< d_j}=I'_{< d_j}$ holds, which implies with the former equalities that $\mathcal{A}_j(I')=\mathcal{A}_j(I)$. Now concerning the assignment of the other agents, based on their preferences and since $\mathcal{A}$ returns an $\mathcal{S}(I')$-$PO$ allocation, we can say that for each dummy agent $k'$, we have $\mathcal{A}_{k'}(I')=\mathcal{A}_k(I)$, and for each regular agent $k$ leaving after $d_j$, we have $\mathcal{A}_k(I')=e_{k'}$. More precisely for $i'$ we have $\mathcal{A}_{i'}(I')=\mathcal{A}_i(I)=M(j)$. Note that except for agents $i'$ and $j$, any agent receives her most favorite item.

But now consider the almost same allocation, say $M'$, as the one constructed by $\mathcal{A}$, except that agent $j$ receives $M(j)$ and agent $i'$ receives $\mathcal{A}_j(I)$. Note that by construction, $M'$ should belong to $\mathcal{M}(I')$ since $\mathcal{A}$ is an online exchange algorithm, and therefore $\mathcal{A}(I_{<d_j})=\mathcal{A}(I'_{<d_j})$ belongs to $\mathcal{M}(I'_{<d_j})$ (for agent $j$ and the ones leaving after her), $\mathcal{A}(I')$ belongs to $\mathcal{M}(I')$ (for all other agents except agent $i'$), and $a_{i'}<d_i$ holds (for agent $i'$). Furthermore, $M'$ belongs also to $\mathcal{S}(I')$ since agent $j$ and each agent leaving before him receive the same item as in $\mathcal{A}(I)$, which is online, and during each later departure, the most favorite item of each agent remains in the market (corresponding to the ones they will ultimately receive). Agent $j$ and dummy agent $i'$ receive in $M'$ a strictly better item than the one offered by algorithm $\mathcal{A}$, since we have assumed that $M(j) \succ_j \mathcal{A}_j(I)$ holds. Therefore, $M'$ Pareto dominates $\mathcal{A}(I')$, leading to a contradiction since $\mathcal{A}$ should return an $\mathcal{S}(I')$-$PO$ allocation.
\end{proof}

\begin{example}\label{exSafeSerialDictator}
Consider the instance described in the left part of Figure \ref{figOHMexample4}, as well as the allocations graphically described in the right part of the same figure. It is easy to check that the static serial dictatorship procedure (Algorithm \ref{algoPickingSequence}) with the ascending departure permutation $\delta$ produces allocation $M$, whereas the dynamic serial dictatorship procedure (Algorithm \ref{algoOnlinePickingSequence}) with the ascending arrival permutation $\alpha$ returns $M'$.

Let us run the safe serial dictatorship with the ascending departure permutation $\delta$ on the same instance. Note first that this procedure, described in Algorithm \ref{algoIROnlinePickingSequence}, is essentially the same as Algorithm \ref{algoOnlinePickingSequence}, with the exception that the set of items an agent can pick is restricted to those which leave the possibility for the remaining agents to be allocated items at least as favorable as their initial endowment. In other words, the set of items is restricted to those such that, if they are matched to the agent, it is still possible to complete the allocation to obtain a complete individually rational allocation for all agents present in the market. Therefore, at $d_1$, agent 1 is the first to choose, but she cannot pick item $e_2$ because otherwise it would be impossible to allocate the remaining item $e_1$ to agent 2 without violating individual rationality. Hence, item $e_1$ is allocated to agent 1. At $d_3$, agent 2 is the first to choose, and she reserves item $e_3$. Note that she can do so because agent 3 prefers the remaining item $e_2$ to her initial endowment $e_3$. Then, agent 3 picks $e_2$, which is allocated to her. At $d_2$, agent 2 picks item $e_3$ once again, which is allocated to her. Note that she can do so because item $e_4$ remains available for agent 4. Finally, at $d_4$, agent 4 picks the last remaining item $e_4$, which is allocated to her. The resulting allocation $M''$ is graphically described in Figure \ref{figOHMexample4}.

\begin{figure}[t]
\includegraphics[width=\textwidth]{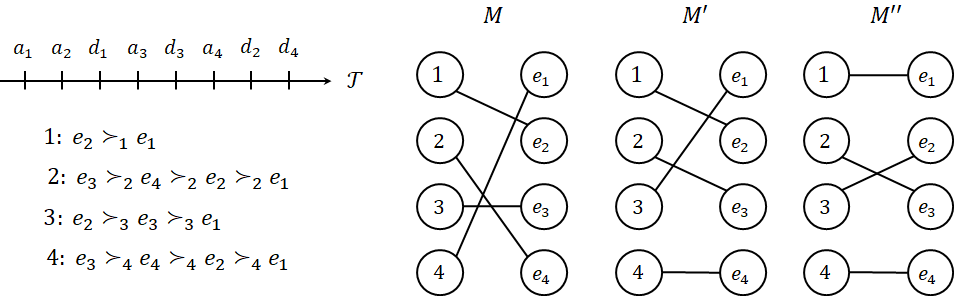}
\caption{Another online housing market example with four agents.}\label{figOHMexample4}
\end{figure}
\end{example}

It remains to explore the compatibility of Algorithm \ref{algoIROnlinePickingSequence} with the different notions of incentive compatibility. The following proposition shows that Algorithm \ref{algoIROnlinePickingSequence} is incompatible with any kind of incentive compatibility because agents may veto assignments by declaring some items as less desirable than their initial endowment.

\begin{proposition}\label{propSafeSerDictNoIC}
There is no permutation that makes Algorithm \ref{algoIROnlinePickingSequence} $WIC$.
\end{proposition}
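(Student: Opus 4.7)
My plan is to prove the proposition by exhibiting a single three-agent instance and a preference misreport by one of its agents that strictly improves her outcome under her true preferences, independently of the choice of $\Pi$. The trick to making the argument uniform in $\Pi$ is to stagger arrivals and departures so that, at each iteration of Algorithm \ref{algoIROnlinePickingSequence}, at most two unmatched agents are simultaneously present, which restricts the orderings produced by $\Pi$ to at most two relevant cases per iteration. Concretely, I take the timeline $a_1 < a_2 < d_1 < a_3 < d_3 < d_2$, so the algorithm iterates at $d_1, d_3, d_2$: at $d_1$ only agents $1$ and $2$ are in $I_{<d_1}$; at $d_3$ agent $1$ is already matched and is therefore skipped by the reservation loop; at $d_2$ only agent $2$ remains unmatched, so $\Pi$ plays no role at all. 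I set $e_2 \succ_1 e_1 \succ_1 e_3$, $e_3 \succ_2 e_2 \succ_2 e_1$, and $e_1 \succ_3 e_3 \succ_3 e_2$. Because $e_1 \prec_2 e_2$ holds, at $d_1$ agent $1$ cannot safely take $e_2$ (the remaining $e_1$ would not be IR for agent $2$), and a short cascade yields the identity allocation $M = (e_1, e_2, e_3)$ regardless of $\Pi$.

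Agent $2$'s manipulation will be the misreport $\succ'_2 : e_3 \succ'_2 e_1 \succ'_2 e_2$, which declares her own item $e_2$ as her least-preferred and thereby admits $e_1$ into her individually-rational set. Re-running the algorithm, the safety check at $d_1$ now allows agent $1$ to take $e_2$, and the same staggered structure forces $M = (e_2, e_3, e_1)$ for every $\Pi$. Since $e_3 \succ_2 e_2$ under her true preferences, agent $2$ strictly improves by misreporting, contradicting WIC.

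The main obstacle is verifying uniformity in $\Pi$: I must check that the two admissible orderings of $\{1,2\}$ at $d_1$ and the two admissible orderings of the unmatched $\{2,3\}$ at $d_3$ each produce the same leaving-agent item, under both the truthful and the misreported profile. This reduces to a short case analysis of the $bestSafe$ computations; the key observation is that whenever an agent reserves ahead of the leaving agent, her best safe reservation coincides with the item that safety would otherwise force the leaving agent to leave unclaimed, so the leaving agent ends up with the same item in either ordering.
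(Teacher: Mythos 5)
Your proof is correct, but your construction is genuinely different from the paper's. The paper uses the timeline $a_1<a_2<a_3<d_1<d_2<d_3$ (all three agents co-present before any departure) and handles the universal quantifier over $\Pi$ by labelling agents according to their rank in $\Pi(I)$ and choosing preferences relative to those ranks; its manipulator is the last-ranked agent, who \emph{raises} her endowment above $e_{(2)}$ so that any allocation giving her $e_{(2)}$ becomes unsafe, thereby vetoing agent $(2)$'s grab of $e_{(1)}$ and capturing $e_{(1)}$ for herself. You instead fix the preferences in advance and neutralize $\Pi$ structurally, by staggering arrivals and departures so that at most two unmatched agents are ever simultaneously relevant, which collapses the permutation to at most two orderings per iteration (and I verified that both orderings give the same item to the leaving agent in each of your six sub-cases). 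Your manipulator works in the opposite direction: she \emph{lowers} her endowment $e_2$ to the bottom, enlarging her individually rational set so that the safety check no longer blocks agent 1 from taking $e_2$ at $d_1$, which cascades into her receiving $e_3$. Both proofs exploit exactly the phenomenon the paper flags before the proposition---that $bestSafe$ turns the reported position of one's own item into a veto---but your version shows the dual manipulation (under-reporting rather than over-reporting the endowment's rank) and is robust even to permutation functions that could depend on preferences, since the outcome is the same for every ordering. The paper's version is shorter to state because all allocations are IR under truthful reports, so the truthful run reduces to the dynamic serial dictatorship; yours requires the short but explicit case analysis you outline.
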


\begin{proof}
Consider the following instance with three agents such that $a_1<a_2<a_3<d_1<d_2<d_3$. By contradiction, we assume that there is a permutation $\Pi$ that makes Algorithm \ref{algoIROnlinePickingSequence} $WIC$. To simplify notation, we denote by $(i)$ the agent of rank $i$ in $\Pi(I)$ (instead of $\Pi_i(I)$). The preferences of the agents are provided depending on their rank in $\Pi(I)$\footnote{Note that we assume that the rank of an agent in a permutation function does not depend on her preferences.}, and are $e_{(3)}\succ_{(1)} e_{(2)}\succ_{(1)} e_{(1)}$, $e_{(1)}\succ_{(2)} e_{(3)}\succ_{(2)} e_{(2)}$, and $e_{(1)}\succ_{(3)} e_{(2)}\succ_{(3)} e_{(3)}$.

Note that the preferences are designed such that all allocations are $IR$, which implies that Algorithm \ref{algoIROnlinePickingSequence} works exactly as Algorithm \ref{algoOnlinePickingSequence} in this specific case. Independently of their departure time, agents $(1)$, $(2)$ and $(3)$ will pick items $e_{(3)}$, $e_{(1)}$ and $e_{(2)}$, respectively. But if agent $(3)$ changes its preferences to $e_{(1)}\succ'_{(3)} e_{(3)}\succ'_{(3)} e_{(2)}$, then all allocations where she may receive item $e_{(2)}$ are considered as unsafe. In that case, agent $(1)$ will still be able to choose item $e_{(3)}$ during Algorithm \ref{algoIROnlinePickingSequence} using $\Pi$, but then agent $(2)$ won't be able to choose item $e_{(1)}$ since otherwise the only remaining item available to agent $(3)$ would be $e_2$. Therefore, agent $(2)$ has to pick item $e_{(2)}$, and agent $(3)$ receives item $e_{(1)}$. This means that agent $(3)$ has an incentive to misreport her true preferences, leading to a contradiction with the fact that Algorithm \ref{algoIROnlinePickingSequence} using permutation function $\Pi$ is $WIC$.
\end{proof}

\section{Top Trading Cycle Based Algorithms}\label{secTTClike}

In this section, we explore the possibility to extend the well-known top trading cycle algorithm to our online problem \cite{SHAPLEY197423}. The top trading cycle (TTC) procedure is a natural candidate to achieve the properties mentioned in Section \ref{secPrem}, since in the offline context it fulfills all of them, and even stronger requirements \cite{RePEc:eee:mateco:v:4:y:1977:i:2:p:131-137,ma1994strategy}. Before providing an extension of TTC to the online setting, we provide a description of the standard TTC procedure in Algorithm \ref{ttc}, which will be used as a subroutine of our online algorithms. This procedure relies on the TTC-graph, which is an oriented graph where agents are vertices. As described in Algorithm~\ref{algoTTCgraph}, which explains its construction, each vertex has out-degree one: the unique outgoing edge from the vertex corresponding to agent~$i$ points to the vertex corresponding to the agent~$j$ whose item~$e_j$ is, according to $\succ_i$, the most preferred among those held by the agents currently in the graph. Such a graph should contain at least one cycle, and items are assigned along these cycles. Each agent contained in the cycle is assigned the item belonging to the agent that she is pointing to and leaves the market with this item. The procedure continues repeatedly with the remaining agents and their initial endowment until each agent is assigned an item.

\begin{algorithm}[t]
\caption{TTC-graph construction}
\label{algoTTCgraph}
\textbf{Input}: Subset of agents $N'$ and their initial endowment.\\
\textbf{Output}: TTC-graph $G(N')$.
\begin{algorithmic}[1]
\STATE $V\leftarrow N'$. \COMMENT{Set of vertices.}
\STATE $H\leftarrow \bigcup_{i\in N'}\{e_i\}$.\COMMENT{Houses initially owned by the agents of $N'$.}
\STATE $E\leftarrow \emptyset$. \COMMENT{Set of edges.}
 \FOR{\textbf{each} $i$ \textbf{in} $N'$}
   \STATE $E\leftarrow E\cup (i, best_{i}(H))$.
 \ENDFOR
\STATE \textbf{return} $G=(V, E)$.
\end{algorithmic}
\end{algorithm}

\begin{algorithm}[tb]
\caption{Top trading cycle algorithm}
\textbf{Input}:  Subset of agents $N'$ and their initial endowments.

\begin{algorithmic}[1]\label{ttc}
\WHILE{$N'\neq \emptyset$}
  \STATE Construct the TTC-graph $G(N')$ using Algorithm \ref{algoTTCgraph}.
  \FOR{\textbf{each} cycle $C=(c_1, c_2, \ldots , c_k)$ \textbf{in} $G(N')$}
    \FOR{$i=1$ to $k-1$}
      \STATE $M'(c_i)\leftarrow e_{c_{i+1}}$.
    \ENDFOR
    \STATE $M'(c_k)\leftarrow e_{c_{1}}$.
    \STATE $N'\leftarrow N'\setminus C$.
  \ENDFOR
\ENDWHILE
\STATE \textbf{return} $M'$.
\end{algorithmic}
\end{algorithm}

It is well known that the TTC procedure is $IR$ in the offline setting, and any online extension of this algorithm should keep this property. However, concerning the other properties, in the online setting not all agents are present at the beginning of the algorithm, and some of them will not arrive before the departure of other agents. Therefore, Algorithm \ref{ttc} should be customized to fit the online setting. We explore a static (or greedy) procedure described in Algorithm \ref{algoGreedyTTC}, which applies TTC to subsets of agents already present in the market, according to an input partition function. As suggested by its name, partition function $\mathcal{P}$ applied to $I$ will partition the agents belonging to $I$ into multiple subsets. More formally, a partition function $\mathcal{P}: \mathcal{I} \to 2^{2^{N}}$ maps any instance $I \in \mathcal{I}$ to a partition $\mathcal{P}(I)$ of the agents in $I$. This means that $\mathcal{P}(I)$ is a collection of subsets of agents such that the union of these subsets is $N$, and no two subsets share an agent. Similarly to the permutation function used in the serial dictatorship procedure, we assume that its output is independent of the agents' preferences and initial endowments, and depends only on their arrival and departure times. To construct the allocation, Algorithm \ref{ttc} is applied independently to each subset of agents returned by this partition function. To simplify notations, we assume that $\mathcal{P}_i(I)$ denotes the subset of $\mathcal{P}(I)$ that contains agent $i$.

The static TTC procedure described in Algorithm \ref{algoGreedyTTC} requires that an agent is not selected multiple times for application of Algorithm \ref{ttc}. This translates into the following property.

\begin{definition}
A partition function $\mathcal{P}$ is progress-preserving ($PP$) if for any instance $I$ and agent $i$, $\mathcal{P}_i(I) = \mathcal{P}_i(I_{< d_i})$ holds.
\end{definition}

Note that this property also implies that for any instance $I$ and for any agent $i$, $\mathcal{P}_i(I)$ contains only agents whose arrival times are no later than $d_i$. Furthermore, $\mathcal{P}_i(I)$ cannot contain an agent $j$ whose departure time $d_j$ is earlier than $a_i$, since otherwise we would have $\mathcal{P}_i(I_{< d_i}) = \mathcal{P}_i(I) = \mathcal{P}_j(I) = \mathcal{P}_j(I_{< d_j})$, which is impossible, as agent $i$ is not part of $I_{< d_j}$. Therefore, $\mathcal{P}_i(I)$ contains only agents that are in the market at time $d_k=\min_{j\in \mathcal{P}_i(I)}d_j$, i.e., for any $j \in \mathcal{P}_i(I)$, $a_j<d_k\leq d_j$ must hold. Following the same line of reasoning, it is easy to check that for any agent $j$ belonging to $\mathcal{P}_i(I_{<d_i})$ and leaving later than $d_i$, we have $\mathcal{P}_j(I_{<d_i})=\mathcal{P}_i(I_{<d_i})=\mathcal{P}_i(I_{<d_j})$, which implies that $\mathcal{P}_j(I_{<d_j})=\mathcal{P}_i(I_{<d_j})$ since $j\in\mathcal{P}_j(I_{<d_i})=\mathcal{P}_i(I_{<d_j})$. Therefore, no agent participates multiple times to the TTC procedure when the partition function used by Algorithm \ref{algoGreedyTTC} is $PP$.
 The following proposition shows that Algorithm \ref{algoGreedyTTC} is online and agents have no incentive to misreport their preferences.

\begin{algorithm}[tb]
\caption{Static online top trading cycle procedure}
\textbf{Input}: Partition function $\mathcal{P}$.

\begin{algorithmic}[1]\label{algoGreedyTTC}
\STATE $B\leftarrow \emptyset$ \COMMENT{Agents already matched.}
 \FOR{$i=1$ \textbf{to} $n$}
   \STATE \COMMENT{Iteration occurring at $d_{\delta(i)}$.}
   \IF{$\delta(i)\not\in B$}
   \STATE Applies Algorithm \ref{ttc} to $\mathcal{P}_i(I_{< d_{\delta(i)}})$ to obtain partial matching $M'$.
   \FORALL{$j\in \mathcal{P}_i(I_{< d_{\delta(i)}})$}
     \STATE $M(j)\leftarrow M'(j)$.
   \ENDFOR
   \STATE $B\leftarrow B \cup \mathcal{P}_i(I_{< d_{\delta(i)}})$
  \ENDIF
 \ENDFOR
\STATE \textbf{return} $M$.
\end{algorithmic}
\end{algorithm}

\begin{proposition}\label{propGreedyIC}
Algorithm \ref{algoGreedyTTC} using a $PP$ partition function as input is both online and $WIC$.
\end{proposition}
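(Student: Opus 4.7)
The plan is to establish the two properties separately, leveraging the structural consequences of $PP$ that the paper has already derived in the paragraph preceding the proposition, namely that every agent participates in exactly one invocation of Algorithm~\ref{ttc} and that invocation is on a well-defined group that does not change as further agents arrive.

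For the online property, I would fix an instance $I$ and an agent $i$, let $I' = I_{<d_i}$, and show that $\mathcal{A}_j(I) = \mathcal{A}_j(I')$ for every agent $j$ in $I'$. Such an agent satisfies $d_j \leq d_i$, so $I_{<d_j} = I'_{<d_j}$. By $PP$ applied to both instances, $\mathcal{P}_j(I) = \mathcal{P}_j(I_{<d_j}) = \mathcal{P}_j(I'_{<d_j}) = \mathcal{P}_j(I')$. Consequently, on the iterations of Algorithm~\ref{algoGreedyTTC} occurring at departure times $d_{\delta(1)}, \dots, d_{\delta(\cdot)}$ that are at most $d_i$, the same groups are handed to the TTC subroutine in the same order, with the same initial endowments, so the subroutine returns the same partial matching. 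In particular, every agent $j$ in $I'$ is matched identically in both runs, which includes $j = i$.

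For the $WIC$ claim, I would fix an agent $i$ and consider two instances $I$ and $I'$ that differ only in $i$'s reported preference relation. Because the partition function depends only on arrival and departure times, $\mathcal{P}_i(I) = \mathcal{P}_i(I')$, and the paper's preceding discussion has already established that agent $i$ participates in exactly one call of Algorithm~\ref{ttc}, namely on this group at the iteration indexed by the agent in $\mathcal{P}_i(I)$ with the smallest departure time. The inputs to that single TTC call are the agents in $\mathcal{P}_i(I)$ with their initial endowments, and the only preference relation that differs between $I$ and $I'$ within this input is $i$'s. Since the standard offline TTC procedure is strategy-proof \cite{RePEc:eee:mateco:v:4:y:1977:i:2:p:131-137,ma1994strategy}, $i$ weakly prefers $\mathcal{A}_i(I)$ to $\mathcal{A}_i(I')$ under her true preferences, giving $WIC$.

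The only subtle point, and the one I would be most careful about, is the invocation of the online property inside the $WIC$ argument in reverse: I need to be sure that the $TTC$ call in which $i$ participates truly sees the same set of agents and the same endowments under $I$ and $I'$. This is why it is essential that the partition function is assumed to depend only on arrival and departure times, and that $PP$ guarantees $\mathcal{P}_i$ is fully determined by $I_{<d_i}$ (so earlier iterations of the algorithm, which have already fixed the matching of other groups, are identical in both instances). Once this is spelled out, the reduction to the offline strategy-proofness of TTC closes the proof cleanly.
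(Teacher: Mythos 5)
Your proposal is correct and follows essentially the same route as the paper: the online property comes from the fact that $PP$ fixes each agent's group by the time of the relevant departure, so the iterations of Algorithm~\ref{algoGreedyTTC} up to $d_i$ coincide for $I$ and $I_{<d_i}$, and $WIC$ reduces to the strategy-proofness of offline TTC because each agent participates in exactly one call on a group that is independent of reported preferences. One minor slip: agents of $I_{<d_i}$ are those with $a_j < d_i$, not $d_j \le d_i$, but since the online definition only concerns agent $i$'s own assignment, restricting the argument to the iterations occurring no later than $d_i$ (for which $I_{<d_j}=I'_{<d_j}$ does hold) suffices, and the paper's own write-up glosses over the same point.
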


\begin{proof}
Let usfirst show that Algorithm \ref{algoGreedyTTC}, denoted $\mathcal{A}$ in this proof, is online when the partition function at use is $PP$. Consider an instance $I$ and an agent $i$. Let $I'$ denote the restricted instance $I_{< d_i}$. We need to show that $\mathcal{A}_i(I)=\mathcal{A}_i(I')$. Since $a_i < d_j$ holds for any agent $j$ of $I'$, it follows that $I_{< d_j}=I'_{< d_j}$. This implies that $\mathcal{P}_j(I_{< d_j})=\mathcal{P}_j(I'_{< d_j})$ holds for any agent $j$ in $I'$. Therefore, the partitions of agents in $I'$ are the same in $\mathcal{P}(I)$ and $\mathcal{P}(I')$, and Algorithm \ref{algoGreedyTTC}, applied to both $I$ and $I'$, executes the same iterations up to the departure time $d_i$ of agent $i$. This implies that $\mathcal{A}_i(I)=\mathcal{A}_i(I')$.

Let us now show that it is $WIC$. It is well known that Algorithm \ref{ttc} is incentive compatible. During Algorithm \ref{algoGreedyTTC}, multiple instances of Algorithm \ref{ttc} are applied, but each agent participates in exactly a single one of them. Furthermore, we assumed that the partition function used to select the agents participating in each instance of Algorithm \ref{ttc} does not depend on the agents' preferences. Therefore, no agent has an incentive to misreport her preferences, and Algorithm \ref{algoGreedyTTC} is $WIC$.
\end{proof}

\subsection{The Departing Agent Excluded Partition}

We explore now if a stronger notion of incentive compatibility can be achieved with particular partition functions. Let us introduce the departing agent excluded partition $\gamma$, whose construction is described in Algorithm \ref{algoParitionGamma}. Intuitively, every time that an unpartitioned agent leaves the market, two new subsets are added to the partition. The first is a singleton containing the agent leaving the market, and the second is the remaining unpartitioned agents that are present in the market. The following proposition shows that this partition function is progress-preserving.

\begin{algorithm}[tb]
\caption{Construction of the departing agent excluded partition $\gamma$}
\textbf{Input}:  Instance $I=\{ (i, e_i, a_i, d_i, \succ_i)\}_{i\in N}$.

\begin{algorithmic}[1]\label{algoParitionGamma}
\STATE $B\leftarrow \emptyset$. \COMMENT{Set of agents already belonging to the partition.}
\STATE $\gamma \leftarrow \emptyset$ \COMMENT{The partition to construct.}
 \FOR{$i=1$ \textbf{to} $n$}
   \STATE \COMMENT{Iteration occurring at $d_{\delta(i)}$.}
   \IF{$\delta(i)\not\in B$}
     \STATE $\gamma \leftarrow \gamma \cup \{\{ \delta(i)\}\} \cup \{ N_{< d_{\delta(i)}} \setminus (B\cup \{ \delta(i)\})\}$
     \STATE $B\leftarrow N_{< d_{\delta(i)}}$
   \ENDIF
 \ENDFOR
\STATE \textbf{return} $\gamma$.
\end{algorithmic}
\end{algorithm}

\begin{proposition}\label{propGammaOnline}
    The departing agent excluded partition $\gamma$ is $PP$
\end{proposition}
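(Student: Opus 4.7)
The plan is to show that Algorithm~\ref{algoParitionGamma} behaves identically when applied to $I$ and to $I_{<d_i}$, up through the iteration at which agent~$i$ is placed into a block of the partition. Once we have this, the equality $\gamma_i(I)=\gamma_i(I_{<d_i})$ follows immediately, since the block that contains $i$ is determined by exactly those iterations.

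First, I would fix an instance $I$ and an agent $i$, and let $j=\delta^{-1}(i)$ denote the position of $i$ in the ascending departure order on $I$. The key structural observation is that for every $k\leq j$ we have $d_{\delta(k)}\leq d_i$, which implies $a_{\delta(k)}<d_{\delta(k)}\leq d_i$, so the first $j$ departing agents in $I$ all belong to $I_{<d_i}$, and they appear in the same relative departure order in both instances. Consequently the outer \textbf{for} loop of Algorithm~\ref{algoParitionGamma} iterates over the same agents $\delta(1),\ldots,\delta(j)$ in the same order when applied to either $I$ or $I_{<d_i}$.

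Next I would establish, by induction on the iteration index $k\in\{1,\ldots,j\}$, that at the beginning of iteration $k$ the set $B$ is identical in both runs, and moreover that the evaluation of $N_{<d_{\delta(k)}}$ is identical in both runs. The latter is the crucial point: an agent of $I$ not belonging to $I_{<d_i}$ must satisfy $a_\ell\geq d_i\geq d_{\delta(k)}$, hence contributes to $N_{<d_{\delta(k)}}$ in neither instance. With $B$ and $N_{<d_{\delta(k)}}$ coinciding, the branch taken (trigger or skip) is the same, and in the trigger case both new blocks $\{\delta(k)\}$ and $N_{<d_{\delta(k)}}\setminus(B\cup\{\delta(k)\})$ are identical, so $B$ is updated to the same value. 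This closes the induction.

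Finally, I would observe that agent $i$'s block is fixed by the end of iteration $j$: either $i\in B$ already (in which case $i$ lies in some $R_{k'}$ added at an earlier trigger $k'<j$), or $i\notin B$ and iteration $j$ itself is a trigger placing $i$ in the singleton $\{i\}$. In either case the containing block is identical in both runs, giving $\gamma_i(I)=\gamma_i(I_{<d_i})$, which is exactly the $PP$ property. The only delicate part of the argument is the bookkeeping that guarantees $N_{<d_{\delta(k)}}$ matches across the two instances for all $k\leq j$; once one formalizes this via the inequality $d_{\delta(k)}\leq d_i$, the rest is a straightforward induction.
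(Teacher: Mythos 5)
Your proposal is correct and follows essentially the same route as the paper: both arguments hinge on the observation that any agent of $I$ absent from $I_{<d_i}$ arrives after $d_i \geq d_{\delta(k)}$ and hence affects neither $N_{<d_{\delta(k)}}$ nor the first $\delta^{-1}(i)$ iterations of Algorithm~\ref{algoParitionGamma}, so the block containing $i$ is built identically in both runs. Your version merely makes explicit, via induction on the iteration index, the bookkeeping that the paper compresses into one sentence.
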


\begin{proof}
    We need to show that for any instance $I$ and for any agent $i$, $\gamma_i(I)=\gamma_i(I_{< d_i})$ holds. Let $i'$ denote the rank of agent $i$ in the ascending departure permutation $\delta$, i.e., such that $\delta(i')=i$ holds. Note that for any position $j\leq i'$, agents in $N_{< d_{\delta(j)}}$ are all contained in $I_{< d_i}$ since $d_i\geq d_{\delta(j)}$. Therefore, the first $i'$ iterations of the ``for'' loop of Algorithm \ref{algoParitionGamma} are the same for $I$ and $I_{< d_i}$. Consequently, the subset of agents containing agent $i$ in the departing agent excluded partition $\gamma$ is the same for both instances $I$ and $I_{< d_i}$.
\end{proof}

\begin{example}\label{exTTC}
    Consider the instance $I$ depicted in the left part of Figure \ref{figOHMexample2}, and the three allocations described in the right part of the same figure. It is easy to verify that both Algorithms \ref{algoPickingSequence} and \ref{algoOnlinePickingSequence} using either the ascending departure permutation $\delta$ or the ascending arrival permutation $\alpha$ return allocation $M''$, which is not $IR$ since agent 5 receives item $e_1$.

In a first step, let us run Algorithm \ref{algoParitionGamma} on this instance in order to compute the departing agent excluded partition $\gamma(I)$. At $d_1$, $N_{<d_1}$ contains agents 1, 2, and 3. Therefore, $\{1\}$ and $\{2, 3\}$ are added to the partition. At $d_2$ and $d_3$, nothing occurs since agents 2 and 3 already belong to the partition. At $d_4$, $N_{<d_4}$ contains the five agents, but agents 1, 2, and 3 already belong to the partition. Therefore, $\{4\}$ and $\{5\}$ are added to the partition, and Algorithm \ref{algoParitionGamma} halts, returning the departing agent excluded partition $\gamma(I) = \{\{1\}, \{2, 3\}, \{4\}, \{5\}\}$.

    Let us now run Algorithm \ref{algoGreedyTTC} with the departing agent excluded partition $\gamma(I)$ as input to compute the allocation for this instance. The TTC procedure is applied only once to a subset of more than one agent, which is $\{2, 3\}$. As depicted in Figure \ref{figTTCGraph}, the TTC graph for the subset of agents $\{2, 3\}$ contains a single cycle covering the two agents, and they swap their goods. The resulting allocation, denoted by $M$, is illustrated in Figure~\ref{figOHMexample2}. Note that the same allocation is returned by Algorithm \ref{algoIROnlinePickingSequence} using the ascending departure permutation $\delta$, and therefore $M$ is $\mathcal{S}(I)$-$PO$.

    \begin{figure}[t]
\includegraphics[width=\textwidth]{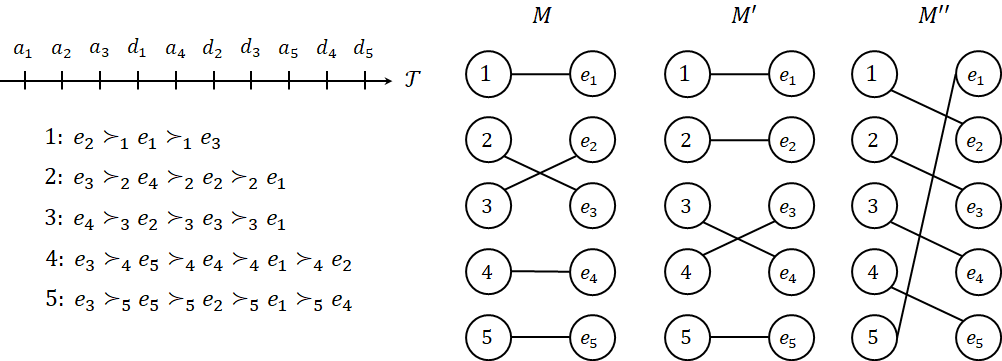}
\caption{Online housing market example with five agents.}\label{figOHMexample2}
\end{figure}

\begin{figure}[t]
\includegraphics{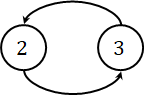}\centering
\caption{TTC-graph for the subset of agents $\{2, 3\}$.}\label{figTTCGraph}
\end{figure}
\end{example}

The idea under the departing agent excluded partition $\gamma$ is to ``punish'' agents, who force the assignment to take place because they leave earlier, by leaving them alone in the partition, with the consequence for Algorithm \ref{algoGreedyTTC} that they keep their initial endowment without exchange. The following proposition shows more formally that no agent has an incentive to misreport her departure time when Algorithm \ref{algoGreedyTTC} is used with the departing agent excluded partition $\gamma$.

\begin{proposition}\label{propGammaDIC}
    Algorithm \ref{algoGreedyTTC} using the departing agent excluded partition $\gamma$ is $d$-$IC$.
\end{proposition}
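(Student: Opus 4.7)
The plan is to split the possible manipulation into a preference part, which is already handled by existing results, and a pure departure-time part, which is the real content of the argument. Since $\gamma$ is $PP$ by Proposition~\ref{propGammaOnline}, Proposition~\ref{propGreedyIC} already tells us that Algorithm~\ref{algoGreedyTTC} using $\gamma$ is $WIC$. Introducing the intermediate instance $I''$ obtained from $I'$ by replacing the reported $\succ'_i$ with $i$'s truthful $\succ_i$, $WIC$ then gives $\mathcal{A}_i(I'')\succeq_i\mathcal{A}_i(I')$. It therefore suffices to show that $\mathcal{A}_i(I)\succeq_i\mathcal{A}_i(I'')$, i.e., that reporting an earlier departure time $d'_i<d_i$ while keeping arrival and preferences truthful cannot help agent~$i$.

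The next step is to describe how $i$ enters the partition that Algorithm~\ref{algoParitionGamma} produces on $I$. The algorithm iterates in $\delta$-order; each iteration either does nothing (when the current agent is already in $B$) or adds two new blocks to the partition: a singleton $\{k\}$ for the triggering agent $k$ and the rest block $N_{<d_k}\setminus(B\cup\{k\})$. So either \emph{(A)} agent~$i$ herself triggers at $d_i$, with $\gamma_i(I)=\{i\}$ and $\mathcal{A}_i(I)=e_i$, or \emph{(B)} $i$ belongs to the rest block of a trigger caused by some agent~$k$ with $a_i<d_k<d_i$, and $\mathcal{A}_i(I)$ is the item she receives from the TTC subroutine applied to that block. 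Let $T^*$ denote the largest trigger time strictly before the partitioning event that contains~$i$ (or $t^-$ if none). A short bookkeeping argument yields two invariants I would use throughout: $a_i>T^*$ (otherwise $i$ would be swept into the earlier rest block), and every agent processed in $I$ strictly between $T^*$ and that partitioning event is already in $B$ at the time of its processing.

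Because only $d_i$ differs between $I$ and $I''$, the two runs of Algorithm~\ref{algoParitionGamma} agree on every iteration that happens strictly before $d'_i$, so all triggers and blocks created before $d'_i$ are identical. In Case~A, the invariants give that agents processed between $T^*$ and $d'_i$ in $I''$ are already in $B$, so $i$ still triggers at $d'_i$, becomes a singleton, and receives $\mathcal{A}_i(I'')=e_i=\mathcal{A}_i(I)$. In Case~B I split according to $d'_i$. If $d'_i\geq d_k$, the trigger of~$k$ at $d_k$ still occurs in $I''$, and since $a_i<d_k\leq d'_i$ with $i$ still unpartitioned, $i$ lies in the same rest block; TTC is then run on identical inputs and returns $\mathcal{A}_i(I'')=\mathcal{A}_i(I)$. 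If $d'_i<d_k$, the same reasoning as in Case~A shows that $i$ triggers in $I''$ at $d'_i$, so $\gamma_i(I'')=\{i\}$ and $\mathcal{A}_i(I'')=e_i$; combining with the individual rationality of the classical TTC procedure applied to $i$'s block in $I$ gives $\mathcal{A}_i(I)\succeq_i e_i=\mathcal{A}_i(I'')$.

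The step I expect to be the main technical obstacle is the last sub-case, where $d'_i<d_k$, and one must argue that no other agent can sneak in and trigger between $T^*$ and $d'_i$ in $I''$. The point is that any agent processed in this window in $I''$ has departure in $(T^*,d'_i)\subseteq(T^*,d_k)$, was therefore already processed in $I$ without triggering, and hence was in $B$ at that moment in $I$; because the state of $B$ at $T^*$ is identical in $I$ and $I''$ (earlier iterations coincide), the same agent is in $B$ at her processing time in $I''$ as well. Once this invariant is in place the three cases combine to yield $\mathcal{A}_i(I)\succeq_i\mathcal{A}_i(I'')\succeq_i\mathcal{A}_i(I')$, establishing $d$-$IC$.
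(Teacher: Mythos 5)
Your proof is correct and follows essentially the same route as the paper's: invoke $WIC$ to dispose of preference misreports, then case-split on whether agent~$i$ is a trigger (singleton) or belongs to a rest block, and show that an earlier reported departure either leaves the partition containing~$i$ unchanged or demotes her to a singleton, where individual rationality of TTC rules out any gain. Your key invariant (no new trigger can appear in the window before $d'_i$, because all earlier triggers have departure before $a_i$ and the prefix runs of Algorithm~\ref{algoParitionGamma} coincide) is exactly the paper's central argument, just stated with slightly more explicit bookkeeping, including the intermediate instance $I''$ that the paper leaves implicit.
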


\begin{proof}
    Note first that we know by Propositions \ref{propGreedyIC} and \ref{propGammaOnline} that Algorithm \ref{algoGreedyTTC} using the departing agent excluded partition $\gamma$ is $WIC$. It remains to show that no agent has an incentive to misreport her departure time. Let $S$ denote the subset of agents for which the ``if'' clause of Algorithm \ref{algoParitionGamma} is true, i.e., agents that are not partitioned until their departure time.
    
    Consider first the case of an agent $i$ of $S$. Note first that agent $i$ belongs to a singleton in the departing agent excluded partition $\gamma$. Let us show that even if agent $i$ declares an earlier departure time $d'_{i}$ such that $a_{i} < d'_{i} < d_{i}$, then there is no other agent $j$ of $S$ such that $d'_{i} < d_{j} < d_{i}$ holds. Indeed, since $d_j < d_{i}$, we know that agent $j$ is considered before agent $i$ during the ``for'' loop of Algorithm \ref{algoGreedyTTC}. In the last instruction of this ``for'' loop when agent $j$ is considered, the set of agents $N_{< d_{j}}$ is added to the set of partitioned agents $B$. Since agent $i$ is not part of $B$ at iteration $d_i$ of the ``for'' loop, it means that she does not belong to $N_{< d_{j}}$, or in other words $a_{i} > d_{j}$ holds. This implies, together with $a_{i} < d'_{i}$, that $d_{j} < d'_{i}$ holds. 

We now know that any agent of $S$ that leaves earlier than agent $i$ will still leave earlier than $i$ even after $i$ declares a new departure time $d'_{i}$. This implies that agent $i$ will still pass the ``if'' test in the ``for'' loop of Algorithm \ref{algoParitionGamma} even if she changes her departure time to $d'_{i}$ because no agent of $S$ leaving earlier will include agent $i$ in a partition. Therefore, agent $i$ will still be part of a singleton in the departing agent excluded partition $\gamma$ after misreporting her departure time and will keep her own item during Algorithm \ref{algoGreedyTTC}. Consequently, agent $i$ has no incentive to misreport her departure time.

    Consider now the case of an agent $i$ that is not part of $S$. Note first that the fact that $i$ is not part of $S$ means that there exists an agent $j$ of $S$ such that $d_j<d_{i}$ and $i\in N_{< d_{j}}$ hold. We assume that $j$ is the agent of $S$ with the lowest rank in the ascending departure permutation $\delta$ for which these inequalities are true. In other words, the subset of agents containing agent $i$ is selected by Algorithm \ref{algoParitionGamma} during the iteration of the ``for'' loop corresponding to $d_j$. By applying the same reasoning as above, it is easy to check that there is no agent $k$ of $S$ such that $d_k<d_j$ and $d'_{i}<d_{k}$. Therefore, the only misreport of agent $i$ that could change the outcome of Algorithm \ref{algoParitionGamma} is a departure time such that $d_{j}>d'_{i}$ holds. In that case, agent $i$ will belong to the set of agents for which the ``if'' condition of Algorithm \ref{algoParitionGamma} is true, and therefore agent $i$ will be part of a singleton in the departing agent excluded partition $\gamma$, and will receive her own item in Algorithm \ref{algoGreedyTTC}. Since Algorithm \ref{algoGreedyTTC} is $IR$, agent $i$ has no incentive to misreport her departure time since she cannot receive a less preferred item than her own.  
\end{proof}

Unfortunately, Algorithm \ref{algoGreedyTTC} using the departing agent excluded partition $\gamma$ does not incentivize agents to reveal their true arrival time.

\begin{proposition}
   Algorithm \ref{algoGreedyTTC} using the departing agent excluded partition $\gamma$ as input is not $a$-$IC$.
\end{proposition}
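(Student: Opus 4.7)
The plan is to refute $a$-$IC$ by exhibiting a concrete instance in which one agent strictly benefits from reporting a later arrival. The key structural observation is that although delaying an agent's reported arrival can only make her more likely to land alone in a singleton of $\gamma$ (since her appearance in $N_{<d_{j'}}$ for earlier departers $j' \in S$ is only diminished), it can also change \emph{which} $j \in S$ her group is attached to, and therefore change the set of potential trading partners clustered with her.

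Concretely, I would take a 4-agent instance with the time ordering $a_1 < a_3 < d_1 < a_2 < a_4 < d_2 < d_3 < d_4$. Under truthful reporting, the iteration of Algorithm \ref{algoParitionGamma} at $d_1$ yields the pair $\{1\}$ and $\{3\}$, since $N_{<d_1} = \{1,3\}$; the iteration at $d_2$ then yields $\{2\}$ and $\{4\}$, since by then $B = \{1,3\}$ and $N_{<d_2} = \{1,2,3,4\}$. Consequently $\gamma(I) = \{\{1\},\{3\},\{2\},\{4\}\}$, and Algorithm \ref{algoGreedyTTC} assigns agent 3 her endowment $e_3$. Now consider the misreport $a'_3$ with $d_1 < a'_3 < a_2$; the iteration at $d_1$ now produces $\{1\}$ together with the empty set (since agent 3 is no longer in $N_{<d_1}$ and no other agent is present), and the iteration at $d_2$ produces $\{2\}$ together with $\{3,4\}$. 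Choosing preferences so that $e_4 \succ_3 e_3$ and $e_3 \succ_4 e_4$, the TTC subroutine applied to the subset $\{3,4\}$ produces the 2-cycle and gives agent 3 the item $e_4$, which she strictly prefers to $e_3$. Therefore agent 3 has a profitable deviation, contradicting $a$-$IC$.

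The main obstacle is choosing the timeline so that the partition actually jumps: agent 3 must move from being the sole non-singleton occupant triggered by agent 1's departure to being grouped with agent 4 at agent 2's departure. The required constraints---$a_2 > d_1$ to force both 1 and 2 into $S$, $a_3 < d_1 < a'_3 < a_2$ to move agent 3 across the two regimes, and $d_1 < a_4 < d_2$ so that agent 4 first appears in $N_{<d_2}\setminus N_{<d_1}$---carve out a narrow but non-empty window. Once the timing is fixed, verifying both runs of Algorithm \ref{algoParitionGamma} and the TTC on $\{3,4\}$ is entirely mechanical.
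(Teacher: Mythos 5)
Your counterexample is correct and the argument follows the same strategy as the paper's own proof: exhibit a concrete instance where delaying an arrival shifts the manipulator into a different cell of $\gamma$ containing a more attractive trading partner (the paper uses six agents and moves agent~1 from the group $\{1,2\}$ to $\{1,4\}$, whereas you use four agents and move agent~3 from a singleton into $\{3,4\}$, which is a slightly leaner instance of the same idea). The only cosmetic issues are that your labels violate the paper's convention that indices follow increasing arrival times (a relabeling fixes this) and that the ``empty set'' produced at $d_1$ under the misreport is a harmless artifact of Algorithm~\ref{algoParitionGamma} that you correctly identify as irrelevant.
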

    
\begin{proof}
    Consider the following instance $I$ with 6 agents such that $a_1<a_2<a_3<d_3<d_2<a_4<a_5<d_5<d_4<d_1$ hold. It is easy to check that $\gamma_1(I)=\{1, 2\}$. Assume that agent 1's most favorite item is $e_4$, whereas both agents 2 and 4's most favorite item is $e_1$. If agent 1 reveals her true arrival time $a_1$, then Algorithm \ref{algoGreedyTTC} using the departing agent excluded partition $\gamma$ will assign her either item $e_1$ or $e_2$, depending on agent 1's preferences for $e_2$. On the other hand, if she misreports her arrival time and reports $a'_1$ such that $d_2<a'_1<a_4$, then assuming that $I'$ denotes the copy of $I$ where agent 1 reveals $a'_1$ instead of $a_1$, it is easy to check that $\gamma_1(I')=\{1,4\}$. Therefore, according to the preferences of agent 1 and 4, item $e_4$ is assigned by Algorithm \ref{algoGreedyTTC} to agent 1. Since $e_4$ is the most favorite item of agent 1, she has an incentive to misreport her arrival time.
\end{proof}

\subsection{Partition Functions Based on Schedules}

Let us now examine if $a$-$IC$ can be achieved by Algorithm \ref{algoGreedyTTC} using another partition function. To do so, we need to introduce an additional element $\xi=\{\xi_0, \xi_1, \ldots, \xi_k\}$, that we call scheduling, which is a subset of non-overlapping time intervals. More formally, we assume that for any $j\in \{ 0, 1, \ldots, k\}$, $\xi_j\subseteq \mathcal{T}$ holds, and for any $j'\neq j$, $\xi_j\cap \xi_{j'}=\emptyset$ holds. We assume that each interval is contiguous, potentially with the exception of the first one $\xi_0$, which covers all time intervals not contained in the other intervals of the scheduling. This scheduling will be used to initialize the partition by grouping agents whose departure times belong to the same time interval of $\xi$.

The construction of the scheduled departure partition $\theta$, which is described in Algorithm \ref{algoParitionTheta}, depends on both the arrival and departure times of the agents, and the input scheduling $\xi$. Intuitively, all agents whose departure times belong to the same time interval of $\xi$ and whose arrival times are no later than the earliest departure time of one of these agents are grouped together, whereas the other agents are left alone in the partition. The first interval of $\xi$, denoted $\xi_0$, plays a special role since all the agents whose departure times belong to this interval are left alone in the partition. Note that this first interval may be left empty. To simplify notation, we denote by $\theta$ the scheduled departure partition resulting from Algorithm \ref{algoParitionTheta}, without specifying the scheduling used. This is not a big issue here since the properties proved later will not depend on the scheduling used as input. The following proposition shows that the partition function resulting from Algorithm \ref{algoParitionTheta} is progress-preserving.

\begin{algorithm}[tb]
\caption{The scheduled departure partition $\theta$ derived from scheduling $\xi$.}
\textbf{Input}:  Instance $I=\{ (i, e_i, a_i, d_i, \succ_i)\}_{i\in N}$, and scheduling $\xi=\{\xi_0, \xi_1, \ldots \xi_k\}$.

\begin{algorithmic}[1]\label{algoParitionTheta}
\STATE $A\leftarrow \{ 0\}$. \COMMENT{Set of time interval indices from $\xi$ that have already been considered.}
\STATE $B\leftarrow \emptyset$. \COMMENT{Set of agents already belonging to the partition.}
 \FOR{$i=1$ \textbf{to} $n$}
   \STATE \COMMENT{Iteration occurring at $d_{\delta(i)}$.}
   \STATE Let $\xi_j$ denote the interval of $\xi$ containing $d_{\delta(i)}$.
   \IF{$j\not\in A$}
     \STATE Let $S$ be the subset of agents of $N_{< d_{\delta(i)}}$ whose departure times belong to $\xi_j$.
     \STATE $\theta \leftarrow \theta \cup \{ S\}$.
     \STATE $A\leftarrow A \cup \{j\}$.
     \STATE $B \leftarrow B \cup S$.
   \ELSIF{$\delta(i)\not\in B$}
     \STATE $\theta \leftarrow \theta \cup \{\{ \delta(i)\}\}$.
     \STATE $B \leftarrow B \cup \{ \delta(i)\}$.
   \ENDIF
 \ENDFOR
\STATE \textbf{return} $\theta$.
\end{algorithmic}
\end{algorithm}

\begin{proposition}\label{propThetaOnline}
     The scheduled departure partition $\theta$ is $PP$.
\end{proposition}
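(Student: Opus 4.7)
My plan is to show, for every instance $I$ and every agent $i$, that the block containing $i$ in $\theta(I)$ coincides with the block containing $i$ in $\theta(I_{<d_i})$. I will split the argument on the interval $\xi_j$ of the scheduling that contains $d_i$.

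The easy case is $j=0$. Since $0 \in A$ from the initialization, the \texttt{if}-clause at line~6 of Algorithm~\ref{algoParitionTheta} never fires for an agent with departure in $\xi_0$; hence agent $i$ is placed into the singleton $\{i\}$ via the \texttt{elsif}-branch at the iteration corresponding to $d_i$. The same reasoning applies verbatim when the algorithm is run on $I_{<d_i}$, because $a_i<d_i$ guarantees $i\in N_{<d_i}$ and agent $i$'s times are unchanged.

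For the main case $j\neq 0$, I will focus on the agent $k$ with smallest departure time among those of $I$ whose departure falls in $\xi_j$. Since $d_i\in\xi_j$ we have $d_k\leq d_i$, so $a_k<d_k\leq d_i$ and therefore $k\in N_{<d_i}$; moreover $k$ remains the first agent with departure in $\xi_j$ in the restricted instance, because any agent of $I\setminus I_{<d_i}$ has arrival time at least $d_i$ and thus departure strictly greater than $d_k$. At the iteration corresponding to $d_k$, the block added to $\theta$ is $S = \{m \in N_{<d_k} : d_m \in \xi_j\}$. The crux is that $N_{<d_k}$ computed on $I$ equals $N_{<d_k}$ computed on $I_{<d_i}$: any agent $m$ with $a_m<d_k\leq d_i$ already belongs to $I_{<d_i}$. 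Hence $S$ is identical in the two runs, and after this iteration the index $j$ is inserted into $A$ in both.

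It then remains to handle two sub-cases. If $a_i<d_k$, then $i\in S$, so $\theta_i(I)=S=\theta_i(I_{<d_i})$. If instead $a_i\geq d_k$, then $i\notin S$ and $i\notin B$ after the iteration for $d_k$; since $j$ is now in $A$, the iteration for $d_i$ must enter the \texttt{elsif}-branch and place $i$ in the singleton $\{i\}$, again in both instances. The main thing to watch, and essentially the only obstacle, is to verify that no agent of $I\setminus I_{<d_i}$ can perturb the iterations preceding $d_i$ (either by beating $k$ to the first departure in $\xi_j$, or by entering $N_{<d_k}$); both are immediate from the fact that such agents arrive no earlier than $d_i$.
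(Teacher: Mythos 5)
Your proof is correct and rests on the same key observation as the paper's: every agent of $I\setminus I_{<d_i}$ arrives after $d_i$, so none of the iterations of Algorithm~\ref{algoParitionTheta} up to and including the one at $d_i$ can differ between $I$ and $I_{<d_i}$, and the block containing agent $i$ is fixed by then. The paper states this in one step ("the first $i'$ iterations of the for loop are the same"), whereas you unpack it into an explicit case analysis on the interval $\xi_j$ containing $d_i$ and on the first departure within that interval; this is a more detailed verification of the same argument, not a different one.
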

    
\begin{proof}
    We need to show that for any instance $I$ and for any agent $i$, $\theta_i(I)=\theta_i(I_{< d_i})$ holds. Let $i'$ denote the rank of agent $i$ in the ascending departure permutation $\delta$, i.e., such that $\delta(i')=i$ holds. Note that for any position $j\leq i'$, agents whose arrival times are no later than $d_{\delta(j)}$ are all contained in $I_{< d_i}$. Therefore, the first $i'$ iterations of the ``for'' loop are the same in $I$ and $I_{< d_i}$. Thus, the subset of agents containing agent $i$ in the scheduled departure partition $\theta$ is the same for both instances $I$ and $I_{< d_i}$.
\end{proof}

\begin{example}\label{exSchedulePartition}
    Consider once again instance $I$ described in the left part of Figure \ref{figOHMexample2}, and scheduling $\xi$ such that $[a_3, d_2] \subseteq \xi_1$ and $[d_3, d_5] \subseteq \xi_2$ hold. The timeline is reproduced in Figure \ref{figScheduling} to provide a graphical representation of the scheduling $\xi$. Note that $\xi_1$ contains the departure time of agents 1 and 2, and $\xi_2$ contains the departure time of agents 3, 4, and 5. Let us run Algorithm \ref{algoParitionTheta} to compute the scheduled departure partition $\theta(I)$. At $d_1$, both agents 1 and 2 are in $N_{<d_1}$. Therefore, $\{1, 2\}$ is added to the partition. At $d_2$, nothing happens since agent 2 already belongs to the partition. At $d_3$, both agents 3 and 4 are in $N_{<d_3}$. Therefore, $\{3, 4\}$ is added to the partition. At $d_4$, nothing happens since agent 4 is already part of the partition. At $d_5$, the interval $\xi_2$ that contains $d_5$ has already been considered. Therefore, agent 5 remains alone, and $\{5\}$ is added to the partition. Finally, the algorithm halts and returns the scheduled departure partition $\theta(I) = \{\{1, 2\}, \{3, 4\}, \{5\}\}$.

\begin{figure}[t]
\includegraphics[width=\textwidth]{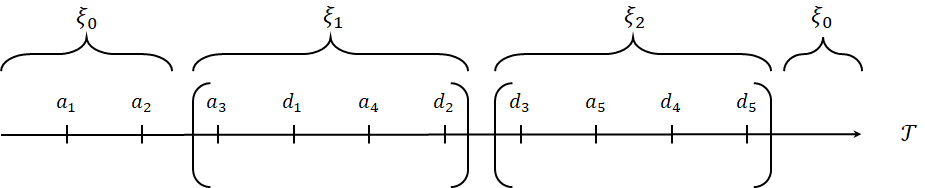}
\caption{Graphical representation of scheduling $\xi$.}\label{figScheduling}
\end{figure}
    
    Let us now run Algorithm \ref{algoGreedyTTC} using the scheduled departure partition $\theta(I)$ as input to compute the allocation. The TTC procedure is applied independently to $\{1, 2\}$ and $\{3, 4\}$. The TTC-graphs during the first iteration of the TTC procedure are represented in Figure \ref{figTTCGraph2}. In the left part of the figure, we can see the TTC-graph for the subset of agents $\{1, 2\}$, which contains a single cycle, the self-loop involving agent 2. Therefore, agent 2 keeps her good, and agent 1, who remains alone during the second iteration, also keeps her good. The TTC-graph shown in the right part of the figure corresponds to the subset of agents $\{3, 4\}$. This TTC-graph contains a single cycle that includes both agents, who swap their goods. The resulting matching $M'$ is depicted in the right part of Figure \ref{figOHMexample2}.

    \begin{figure}[t]
\includegraphics{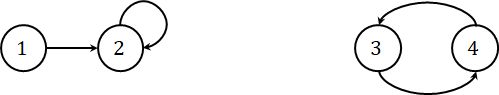}\centering
\caption{TTC-graphs for the subsets of agents $\{1, 2\}$ (left part) and $\{3, 4\}$ (right part).}\label{figTTCGraph2}
\end{figure}
\end{example}

Note that if two agents $i$ and $j$ have both their departure times belonging to the same time interval, say $\xi_k$, and one of them, say $i$, leaves before the arrival of the other, then agent $j$ is left alone in the scheduled departure partition $\theta$. That was the case for agent 5 in the preceding example. The idea behind this is that the set containing agent $i$ in the scheduled departure partition $\theta$ must be decided before her departure time, and agent $j$, who arrives too late, cannot be included in the same set. Additionally, agent $j$ cannot be included in another set of agents to avoid incentivizing her to strategically delay her arrival time. The following proposition shows more formally that no agent has an incentive to misreport her arrival time when Algorithm \ref{algoGreedyTTC} is used with a partition function returned by Algorithm \ref{algoParitionTheta}.

\begin{proposition}\label{propThetaAIC}
    Algorithm \ref{algoGreedyTTC} using the scheduled departure partition $\theta$ is $a$-$IC$.
\end{proposition}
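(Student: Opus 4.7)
The plan is to leverage the two preceding results: by Propositions~\ref{propGreedyIC} and~\ref{propThetaOnline} the algorithm is already $WIC$, so no agent can benefit by misreporting preferences. It remains to show that no agent benefits by delaying her arrival. So I fix an agent $i$, an instance $I$, and a modified instance $I'$ identical to $I$ except that $i$ reports $a'_i$ with $a_i \leq a'_i < d_i$. I will argue that $\mathcal{A}_i(I) \succeq_i \mathcal{A}_i(I')$.

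The key structural observation is that Algorithm~\ref{algoParitionTheta} iterates over departure times, which are identical in $I$ and $I'$, and at each iteration the group $S$ formed (when a new interval is encountered) is determined by the set $N_{<d_{\delta(i')}}$, which is independent of preferences and, among arrival times, only depends on whether each $a_j < d_{\delta(i')}$. Consequently, the only way $\theta(I)$ and $\theta(I')$ can differ is in the membership of agent $i$: every partition cell not containing $i$ is identical in both instances, and thus every agent $j \neq i$ runs TTC on the same group in both $I$ and $I'$ up to $i$'s possible removal from her cell.

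I would then split by cases on where $d_i$ falls. If $d_i \in \xi_0$, then $i$ is a singleton in both $\theta(I)$ and $\theta(I')$ (since $0 \in A$ initially and no group is ever created for $\xi_0$), so $\mathcal{A}_i(I) = e_i = \mathcal{A}_i(I')$. Otherwise, let $\xi_k$ be the interval containing $d_i$ and $t^\star = \min\{d_j : d_j \in \xi_k\}$. If $t^\star = d_i$, the group for $\xi_k$ is formed at $i$'s own departure, so $i$ is always in it (as $a'_i < d_i = t^\star$), and the partitions coincide, giving $\mathcal{A}_i(I) = \mathcal{A}_i(I')$. If $t^\star < d_i$, the group is formed at time $t^\star$: $i$ belongs to it iff her reported arrival is less than $t^\star$. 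If both $a_i$ and $a'_i$ are less than $t^\star$, or both are at least $t^\star$, the partitions coincide. The only remaining subcase is $a_i < t^\star \leq a'_i$: here $i$ is in a group under $I$ but a singleton under $I'$, so $\mathcal{A}_i(I') = e_i$; since TTC within her group under $I$ is individually rational, $\mathcal{A}_i(I) \succeq_i e_i = \mathcal{A}_i(I')$.

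The trickiest bookkeeping is in that last subcase, where I need to be sure that moving $i$ to a singleton does not secretly improve $i$'s assignment in $I'$ via some indirect feedback: this is handled by the fact that the partition function $\theta$ is defined purely from arrival and departure data (and the fixed scheduling $\xi$), independent of preferences and independent of the TTC outputs on other cells, so $i$'s singleton status forces $\mathcal{A}_i(I') = e_i$ regardless of what happens to the agents in $S \setminus \{i\}$. Combined with the $WIC$ property, this rules out every beneficial deviation and establishes $a$-$IC$.
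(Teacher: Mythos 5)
Your proposal is correct and follows essentially the same route as the paper: establish $WIC$ via Propositions~\ref{propGreedyIC} and~\ref{propThetaOnline}, handle the $\xi_0$ case separately, and then compare the reported arrival against the earliest departure time $t^\star$ in the agent's interval (the paper's $d_j$), using individual rationality of TTC to dispose of the case where a delayed arrival demotes the agent to a singleton. The only difference is cosmetic: you make the subcase $t^\star = d_i$ and the invariance of all other partition cells explicit, which the paper leaves implicit.
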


\begin{proof}
    Note first that we know by Propositions \ref{propGreedyIC} and \ref{propThetaOnline} that Algorithm \ref{algoGreedyTTC} using the scheduled departure partition $\theta$ as input is $WIC$. It remains to show that no agent has an incentive to misreport her arrival time. Note first that since the departure time should be revealed truthfully by each agent, no agent can change the time interval of $\xi$ that is associated with her. Note also that if an agent's departure time belongs to $\xi_0$, then whatever her reported arrival time is, she will be alone in the scheduled departure partition $\theta$ and keep her own item during Algorithm \ref{algoGreedyTTC}. Therefore, she has no incentive to misreport her true arrival time. We can now focus on agents whose arrival time does not belong to $\xi_0$.
    
    For each agent $i$, there are two possible cases. Let $j$ denote the agent whose departure time is the earliest among the ones whose arrival times belong to the same interval of $\xi$ as agent $i$. The two possible cases are either \textit{(i)} $d_j<a_i$ or \textit{(ii)} $d_j>a_i$ holds. In case \textit{(i)}, according to Algorithms \ref{algoGreedyTTC} and \ref{algoParitionTheta}, agent $i$ will keep its own item since she does not belong to $I_{< d_j}$. It is easy to check that if she misreports her arrival time to later than $a_i$ then the situation will remain the same for her, and therefore she has no incentive to misreport her arrival time in that case. In case \textit{(ii)}, according to Algorithms \ref{algoGreedyTTC} and \ref{algoParitionTheta}, agent $i$ will be able to exchange her item with a subset of agents of $I_{< d_j}$, that is potentially not a singleton, during a TTC procedure applied to this subset of agents. If agent $i$ misreports her arrival time to no later than $d_j$ then the outcome is the same for her. However, if she misreports to a later time than $d_j$, then she will be left alone in the scheduled departure partition $\theta$ and keep her own item. Since the TTC procedure is $IR$, agent $i$ has no incentive to misreport her arrival time in this case. Therefore, in both cases \textit{(i)} and \textit{(ii)}, agent $i$ has no incentive to misreport her arrival time, and the procedure is $a$-$IC$.
 
\end{proof}

Unfortunately, Algorithm \ref{algoGreedyTTC} using the scheduled departure partition $\theta$ does not incentivize agents to reveal their true departure times.

\begin{proposition}
   Algorithm \ref{algoGreedyTTC} using the scheduled departure partition $\theta$, based on a scheduling containing not only $\xi_0$, is not $d$-$IC$.
\end{proposition}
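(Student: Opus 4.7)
The plan is to exhibit a concrete counterexample with three agents and a scheduling $\xi=\{\xi_0,\xi_1\}$ where $\xi_1=[t_1,t_2)$ is a genuine interval (this is the only hypothesis beyond "not only $\xi_0$"). I will arrange the timeline so that a single agent can, by shortening her reported departure, migrate from a singleton block of $\theta$ into a larger block where a profitable trading cycle becomes available.

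First I choose the timeline $a_1<a_2<a_3<t_1\le d_2<d_1<t_2<d_3$. Thus agents $1$ and $2$ have true departures inside $\xi_1$ while agent $3$'s true departure lies in $\xi_0$ (after $\xi_1$). Running Algorithm~\ref{algoParitionTheta} on $I$: at $d_2$ the interval is $\xi_1$ and $N_{<d_2}=\{1,2,3\}$, but only agents $1$ and $2$ have departures in $\xi_1$, so $\{1,2\}$ is added; the iteration at $d_1$ does nothing; at $d_3$ the interval $\xi_0$ has already been seen, so $\{3\}$ is added. Hence $\theta(I)=\{\{1,2\},\{3\}\}$.

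Next I pick preferences forming a Condorcet-like $3$-cycle: $e_3\succ_1 e_1\succ_1 e_2$, $e_1\succ_2 e_2\succ_2 e_3$, and $e_2\succ_3 e_3\succ_3 e_1$. Under truthful reporting, Algorithm~\ref{algoGreedyTTC} runs TTC on $\{1,2\}$: both agents point to $1$, so the only cycle is the self-loop at agent $1$; both agents keep their endowments, and agent $3$ keeps $e_3$. Now consider the misreport $d'_3\in(d_1,t_2)\subseteq\xi_1$, which is legal because $a_3<t_1<d'_3\le d_3$ and preserves $a_3=a'_3$. In the modified instance $I'$, when Algorithm~\ref{algoParitionTheta} reaches $d_2$ the three agents are all in $N_{<d_2}$ and all three have departures in $\xi_1$, so $\{1,2,3\}$ is added as a single block and $\theta(I')=\{\{1,2,3\}\}$. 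Algorithm~\ref{algoGreedyTTC} then runs TTC on $\{1,2,3\}$; the arrows $1\to 3\to 2\to 1$ form a cycle, and agent $3$ is assigned $e_2$, which she strictly prefers to $e_3$. This witnesses the failure of $d$-IC.

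The only real issue is ensuring that the earlier-reported departure $d'_3$ falls exactly in $\xi_1$: this is why I placed $a_3<t_1$ and $d_3>t_2$, so the interval $(a_3,d_3]\cap\xi_1$ is nonempty and agent $3$ has a legal misreport inside $\xi_1$. Everything else is a routine check of the two algorithms. Note that the same construction works for any $\xi$ that contains at least one non-$\xi_0$ interval: simply translate the timeline so that $\xi_1$ in the argument above is replaced by any chosen $\xi_j$ with $j\ge 1$.
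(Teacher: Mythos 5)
Your proposal is correct and takes essentially the same approach as the paper: an agent whose true departure lies in $\xi_0$ after $\xi_1$ (and is thus placed in a singleton block) profitably misreports an earlier departure inside $\xi_1$ so as to join a non-singleton block where TTC gives her a strictly better item. The paper's version is just more minimal, using two agents and a direct swap instead of your three-agent trading cycle.
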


\begin{proof}
 Consider the following instance $I$ with 2 agents such that $a_1<a_2<\xi_1^-<d_2<\xi_1^+<d_1$ hold, where $\xi_1^-$ and $\xi_1^+$ denote the starting and ending times, respectively, of $\xi_1$\footnote{Such starting and ending times should exist since we have assumed that time intervals are contiguous, with the exception of $\xi_0$.}. Furthermore, the departure time $d_1$ belongs to interval $\xi_{0}$. It is easy to check that $\theta_1(I)=\{1\}$. Assume that agent 1 prefers $e_2$ to $e_1$, and agent 2 prefers $e_1$ to $e_2$. If agent 1 reveals her true departure time $d_1$, then Algorithm \ref{algoGreedyTTC} using the scheduled departure partition $\theta$ will assign her item $e_1$. On the other hand, if she misreports her departure time and reports $d'_1$ such that $\xi_1^-<d'_1<\xi_1^+$, then assuming that $I'$ denotes the copy of $I$ where agent 1 reveals $d'_1$ instead of $d_1$, it is easy to check that $\theta_1(I')=\{1,2\}$. Therefore, according to the preferences of agents 1 and 2, item $e_2$ is assigned by Algorithm \ref{algoGreedyTTC} to agent 1. Since $e_2$ is the most favorite item of agent 1, she has an incentive to misreport her departure time.
\end{proof}

\subsection{The Threshold-Triggered Partition}

We conclude this section by introducing a partition function that incentivizes agents to truthfully reveal both their arrival and departure times. First, note that if the scheduling consists of a single interval $\xi_0 = T$, then the scheduled departure partition $\theta$ corresponds to the partition in which each agent forms a singleton set. In this case, Algorithm~\ref{algoGreedyTTC} simply assigns each agent her own item. This algorithm is trivially $SIC$ since its outcome does not depend on the agents’ preferences or their arrival and departure times. However, it is not particularly interesting, as no exchanges are performed and the resulting allocation is not improved.

By combining the properties of Algorithms~\ref{algoParitionGamma} and~\ref{algoParitionTheta}, we construct a more meaningful partition function that still ensures Algorithm~\ref{algoGreedyTTC} is $SIC$. This function, denoted $\zeta$ and referred to as the \emph{threshold-triggered partition}, is defined by Algorithm~\ref{algoPartitionZeta}. The algorithm is equivalent to Algorithm~\ref{algoParitionTheta}, except that line~8 is replaced by the assignment $\theta \leftarrow \theta \cup \{\{\delta(i)\}\} \cup \{S \setminus \{\delta(i)\}\}$ (mirroring line~6 of Algorithm~\ref{algoParitionGamma}), and the input is restricted to a scheduling consisting of exactly two intervals, namely $\xi_0 = [t^-,\tau]$ and $\xi_1 = [\tau, t^+]$ for a given threshold $\tau \in \mathcal{T}$.\footnote{These two intervals are unambiguously determined by the threshold $\tau$, which therefore replaces in Algorithm \ref{algoPartitionZeta} the scheduling as the input.} Let $j$ denote the first agent whose departure time exceeds the threshold $\tau$. Then, the threshold-triggered partition $\zeta$ groups together all agents in $N_{< d_j}$, excluding agent~$j$ and any agents who departed before $\tau$. All other agents are placed in singleton sets. Finally, note that $\zeta$ is clearly $PP$, as it is derived from a variant of the algorithms defining the departing agent excluded partition $\theta$ and the scheduled departure partition $\gamma$.

\begin{algorithm}[tb]
\caption{Construction of the threshold-triggered partition $\zeta$.}
\textbf{Input}:  Instance $I=\{ (i, e_i, a_i, d_i, \succ_i)\}_{i\in N}$, and a threshold $t\in \mathcal{T}$.

\begin{algorithmic}[1]\label{algoPartitionZeta}
\STATE $thresholdPassed\leftarrow \mathtt{false}$. \COMMENT{Boolean that becomes \texttt{true} when $d_{\delta(i)}$ exceeds $t$.}
\STATE $B\leftarrow \emptyset$. \COMMENT{Set of agents already belonging to the partition.}
 \FOR{$i=1$ \textbf{to} $n$}
   \STATE \COMMENT{Iteration occurring at $d_{\delta(i)}$.}
   \IF{$\neg thresholdPassed$ and $d_{\delta(i)}\geq t$}
     \STATE $thresholdPassed\leftarrow \mathtt{true}$.\COMMENT{Threshold exceeded for the first time.}
     \STATE $\theta \leftarrow \theta \cup \{\{\delta(i)\}\} \cup \{ N_{<d_{\delta(i)}}\setminus (B\cup \{\delta(i)\})\}$.
     \STATE $B \leftarrow N_{<d_{\delta(i)}}$.
   \ELSE
     \STATE $\theta \leftarrow \theta \cup \{\{ \delta(i)\}\}$.
     \STATE $B \leftarrow B \cup \{ \delta(i)\}$.
   \ENDIF
 \ENDFOR
\end{algorithmic}
\end{algorithm}

\begin{example}
    Consider one last time instance $I$ described in the left part of Figure \ref{figOHMexample2}. Let us compute the threshold-triggered partition $\zeta(I)$ by running Algorithm \ref{algoPartitionZeta} with threshold $\tau=t^-$. Note that this means the first agent to leave after the threshold will be the first to exit the market. At $d_1$, during the earliest departure time of an agent, $N_{<d_1}$ contains agents 1, 2, and 3. Therefore, $\{1\}$ and $\{2, 3\}$ are added to the partition. All the other agents will remain alone in the partition. Thus, the algorithm halts and returns the threshold-triggered partition $\zeta(I) = \{\{1\}, \{2, 3\}, \{4\}, \{5\}\}$, which is identical to the departing agent excluded partition $\gamma(I)$. As a consequence, the matching returned by Algorithm \ref{algoGreedyTTC} using the threshold-triggered partition $\zeta$ is the matching $M$ depicted in Figure \ref{figOHMexample2}.

    Note that the outcome of the algorithm may vary depending on the threshold used. For example, consider a new threshold $\tau'$ such that $a_4 < \tau' < d_2$. In this case, the first agent to leave after the threshold is agent~2 at time $d_2$. The resulting partition is then $\zeta(I) = \{\{1\}, \{2\}, \{3, 4\}, \{5\}\}$. This partition is almost identical to the scheduled departure partition of Example~\ref{exSchedulePartition}, except that agents~1 and~2 do not belong to the same subset. However, since these two agents did not exchange their houses during the execution of Algorithm~\ref{algoGreedyTTC}, the outcome returned by Algorithm~\ref{algoGreedyTTC} for this partition $\zeta(I)$ remains $M'$.

\end{example}

\begin{proposition}\label{propZetaSIC}
   Algorithm \ref{algoGreedyTTC} using the threshold-triggered partition $\zeta$ is $SIC$.
\end{proposition}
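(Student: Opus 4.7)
My plan is to combine the $WIC$ guarantee of Proposition~\ref{propGreedyIC} (applicable because, as noted just before the statement, $\zeta$ is $PP$) with a case analysis on the agent's position relative to the threshold~$\tau$ and the triggering agent. Proposition~\ref{propGreedyIC} already rules out profitable preference-only manipulations, so what remains is to rule out profitable arrival/departure manipulations, possibly combined with a preference misreport.

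I would first make the structure of $\zeta(I)$ explicit by inspecting Algorithm~\ref{algoPartitionZeta}. Let $j$ denote the unique agent with the smallest departure time satisfying $d_j \geq \tau$, which I call the triggering agent. Then $\zeta(I)$ consists of a single distinguished subset $S = \{\, k \in N : a_k < d_j \text{ and } d_k > d_j\,\}$ together with a singleton for every other agent (note that $j \notin S$ since $d_j > d_j$ fails, and any $k$ with $d_k < d_j$ necessarily has $d_k < \tau$ by the minimality of $d_j$, hence was already inserted as a singleton). Consequently, only agents in $S$ may be assigned an item different from their endowment, and since the offline TTC procedure is $IR$, each agent in $S$ receives an item $\succeq_i e_i$, while every other agent receives exactly $e_i$.

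Next I would fix an arbitrary agent $i$ and a misreport $(a'_i, d'_i, \succ'_i)$ with $a_i \leq a'_i < d'_i \leq d_i$, producing an instance $I'$, and split into four cases. If $d_i < \tau$, then $d'_i \leq d_i < \tau$, so $i$ is a singleton in both $\zeta(I)$ and $\zeta(I')$ and receives $e_i$ in both. If $i = j$, then since other agents' departure times are unchanged and $d'_j \leq d_j$, agent $j$ either remains the triggering agent of $I'$ (when $d'_j \geq \tau$) or falls into the previous case (when $d'_j < \tau$); in both sub-cases she stays in a singleton and receives $e_j$. If $a_i > d_j$, then $a'_i \geq a_i > d_j$ and hence $d'_i > a'_i > d_j$, so $j$ remains the triggering agent of $I'$ while $a'_i \geq d_j$ excludes $i$ from $N_{<d_j}$; agent $i$ is therefore a singleton in $\zeta(I')$ and receives $e_i$. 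The only non-trivial case is $i \in S$, i.e.\ $a_i < d_j$ and $d_i > d_j$.

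In this remaining case I would consider three exhaustive sub-deviations. If $a'_i < d_j$ and $d'_i > d_j$, then $j$ is still triggering in $I'$ and the distinguished subset of $\zeta(I')$ coincides with $S$ (other agents' parameters and endowments are unchanged), so $i$ participates in the exact same invocation of Algorithm~\ref{ttc}; by strategy-proofness of offline TTC she cannot improve her allocation through any preference misreport. If $a'_i \geq d_j$, then $i$ is excluded from $N_{<d_j}$ while $j$ remains triggering, so $i$ is placed in a singleton and receives $e_i$. Finally, if $d'_i < d_j$, then either $d'_i < \tau$ (collapsing into the first main case) or $\tau \leq d'_i < d_j$, in which case $i$ preempts $j$ and becomes the new triggering agent of $I'$, again placed in a singleton and receiving $e_i$. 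In all three sub-cases the outcome is either the truthful TTC allocation (for which no preference manipulation helps) or $e_i$, which is $\preceq_i$ the truthful TTC allocation by $IR$. The main obstacle is precisely this last case, since one must verify that joint manipulation of timing and preferences does not open a new avenue; but this obstacle dissolves because any timing deviation either preserves the TTC instance in which $i$ sits, where preference manipulation is futile, or demotes $i$ to a singleton with outcome $e_i$ already dominated by her truthful allocation.
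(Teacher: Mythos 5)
Your proof is correct and follows essentially the same route as the paper's: both invoke Proposition~\ref{propGreedyIC} (via $\zeta$ being $PP$) to dispose of preference misreports, identify the unique non-singleton coalition $S$ determined by the triggering agent, and then argue by cases that any timing deviation either leaves the agent's TTC instance unchanged (where manipulation is futile by strategy-proofness of offline TTC) or demotes her to a singleton, which is dominated by her truthful allocation thanks to the individual rationality of TTC. Your treatment of the joint timing-plus-preference manipulation in the $i \in S$ case is slightly more explicit than the paper's, but the underlying argument is the same.
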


\begin{proof}
    Note first that, since the threshold-triggered partition $\zeta$ is $PP$, we know from Proposition~\ref{propGreedyIC} that Algorithm~\ref{algoGreedyTTC}, when using $\zeta$ as input, is $WIC$. It remains to show that no agent has an incentive to misreport either her arrival or departure time. Let $S$ denote the only coalition in the threshold-triggered partition $\zeta$ that may contain more than one agent, and let $k$ be the agent whose departure time is the earliest among those occurring after the threshold~$\tau$. According to Algorithm~\ref{algoPartitionZeta}, we have $S = N_{< d_k} \setminus (B \cup \{k\})$, where $B$ is the set of agents whose departure times are earlier than~$\tau$. For any agent $j \notin S$, one of the following must hold:
\begin{enumerate}[label=(\roman*)]
    \item $j = k$;
    \item $a_j > d_k$;
    \item $d_j < \tau$.
\end{enumerate}

In cases~(i) and~(iii), agent~$j$ cannot alter her partition membership by misreporting her departure time $d_j$ as $d'_j$, since this would require $d'_j < d_j$. Indeed, this would imply $d'_j < d_j < \tau$ in case~(iii); and in case~(i), either $d'_k < \tau$, in which case agent~$k$ would leave before the threshold and be excluded from the coalition, or $d'_k \geq \tau$ and $d'_k$ would remain the earliest departure after~$\tau$, so the partition would remain unchanged. In case~(ii), agent~$j$ cannot manipulate her arrival time to join the coalition, since $a'_j > a_j$ would imply $a'_j > d_k$, and therefore her arrival would occur after agent~$k$ has departed. Finally, misreporting her arrival time in cases~(i) and~(iii), or her departure time in case~(ii), has no impact on her coalition, which remains a singleton. Thus, agents not in~$S$ have no incentive to misreport: they keep their own item under Algorithm~\ref{algoGreedyTTC}, regardless of their report.

Now consider an agent $j \in S$. The only way for her to alter the partition formed by the threshold-triggered partition $\zeta$ is to declare a departure time $d'_j < d_k$ (either $d'_j < \tau$, or $d'_j \in [\tau, d_k)$), or an arrival time $a'_j > d_k$. In both cases, agent~$j$ would be placed in a singleton set in the partition and would therefore keep her own item under Algorithm~\ref{algoGreedyTTC}. Since TTC is individually rational ($IR$), agent~$j$ has no incentive to misreport: she would receive an item at least as preferred as her own under truthful reporting. Hence, no agent has an incentive to misreport either her arrival or departure time, and the procedure is $SIC$.
\end{proof}

The main drawback of using the threshold-triggered partition $\zeta$ as input to Algorithm~\ref{algoGreedyTTC} is that only one coalition in $\zeta$ may contain more than one agent, and the exchanges performed during the TTC procedure may therefore involve very few agents. The number of agents participating in the TTC procedure depends heavily on the set of agents present in the market at the earliest departure time occurring after the threshold, and this number can range from $1$ to $n-1$.

\section{Conclusion and Future Works}

We extended the serial dictatorship and top trading cycle procedures to an online setting, aiming to develop mechanisms that are Pareto-efficient, individually rational, and incentivize agents to truthfully reveal their preferences, as well as their actual arrival and departure times. Several variants of these mechanisms were proposed and are summarized in Table \ref{tab:algorithm_comparison} along with their respective properties. It is important to note that Table \ref{tab:algorithm_comparison} alone may not suffice to fully compare different mechanisms. For instance, the static and dynamic versions of the serial dictatorship procedure using the ascending arrival permutation $\alpha$ (corresponding to rows 2 and 3 in Table \ref{tab:algorithm_comparison}) appear incomparable based solely on their listed properties. However, the allocation returned by the dynamic version is never Pareto-dominated by the allocation produced by the static version, while the reverse may hold, as demonstrated in Example \ref{exDynSerDict}. This observation suggests that the dynamic version is more efficient in practice than the static version. Another example involves the static top trading cycle procedure (Algorithm \ref{algoGreedyTTC}) using the threshold-triggered partition $\zeta$, which seems to dominate other partition functions since it satisfies a broader set of properties. However, $\zeta$ typically identifies only a single subset of agents where exchanges can occur. The size of this subset is highly dependent on the departure time of the earliest agent to leave after the threshold, making the mechanism less predictable and potentially inefficient in markets where agents frequently arrive and depart.

\begin{table}[ht]
\centering
\caption{Summary of the different mechanisms and their properties.}
\label{tab:algorithm_comparison}
\begin{tabularx}{\textwidth}{@{}ccccccc@{}} % Flexible column widths
\toprule
 Mechanism & $\mathcal{M}$-$PO$ & $\mathcal{S}$-$PO$ & $IR$ & $WIC$ & $a$-$IC$ & $d$-$IC$ \\
 \midrule
 {\small Algorithm \ref{algoPickingSequence} with ascending} & $\surd$ & & & $\surd$ & & \\
{\small departure permutation $\delta$}  & {\small (Prop. \ref{propParetoOpt})} & & & {\small (Prop. \ref{propOnlineWIC})} & &\\\midrule
 {\small Algorithm \ref{algoPickingSequence} with ascending} & & & & $\surd$ & & $\surd$ \\
 {\small arrival permutation $\alpha$} & & & & {\small (Prop. \ref{propOnlineWIC})} & & {\small (Prop. \ref{propAlphaDIC})}\\\midrule
 {\small Algorithm \ref{algoOnlinePickingSequence} with ascending} & & & & $\surd$ & & $\surd$ \\
 {\small arrival permutation $\alpha$} & & & & {\small (Prop. \ref{ICprop})} & & {\small (Prop. \ref{propdICalphadyn})}\\\midrule
 {\small Algorithm \ref{algoIROnlinePickingSequence} with ascending} & & $\surd$ & $\surd$ & & & \\
 {\small departure permutation $\delta$} & & {\small (Prop. \ref{propSafePO})} & & & &\\\midrule
 {\small Algorithm \ref{algoGreedyTTC} with departing} & & & $\surd$ & $\surd$ & & $\surd$ \\
  {\small agent excluded partition $\gamma$} & & & & {\small (Prop. \ref{propGreedyIC})} & & {\small (Prop. \ref{propGammaDIC})}\\\midrule
 {\small Algorithm \ref{algoGreedyTTC} with scheduled} & & & $\surd$ & $\surd$ & $\surd$ & \\
  {\small departure partition $\theta$} & & & & {\small (Prop. \ref{propGreedyIC})} & {\small (Prop. \ref{propThetaAIC})} &\\\midrule
 {\small Algorithm \ref{algoGreedyTTC} with threshold-} & & & $\surd$ & $\surd$ & $\surd$ & $\surd$ \\ 
  {\small triggered partition $\zeta$} & & & & {\small (Prop. \ref{propGreedyIC})} & {\small (Prop. \ref{propZetaSIC})} & {\small (Prop. \ref{propZetaSIC})}\\
\bottomrule
\end{tabularx}
\end{table}

The paper also presents additional results not summarized in Table \ref{tab:algorithm_comparison}. For example, we proved that the only mechanism guaranteeing a $\mathcal{M}(I)$-Pareto-optimal ($\mathcal{S}(I)$-Pareto-optimal, respectively) allocation for any instance $I$ is the static serial dictatorship procedure (safe serial dictatorship procedure, respectively) using the ascending departure permutation $\delta$ as input (Propositions \ref{propOnlyPO} and \ref{propSafePO}, respectively). As a consequence, $\mathcal{M}(I)$-$PO$ is incompatible with $IR$, $a$-$IC$, and $d$-$IC$ (Corollary \ref{corPOvsOthers}). Additionally, we demonstrated that the serial dictatorship procedure is inherently incompatible with $a$-$IC$, regardless of the input permutation function (Propositions \ref{propSerDictIncomAIC} and \ref{propDynSerDictIncompAIC}). Moreover, we established that the ascending departure permutation is the only input permutation function that ensures the serial dictatorship procedure satisfies $d$-$IC$ (Propositions \ref{alphaOnlyDIC} and \ref{propdICalphadyn}). Finally, we showed that no permutation function can make the safe serial dictatorship procedure $WIC$ (Proposition \ref{propSafeSerDictNoIC}).

An intriguing extension of this work would be to develop dynamic versions of the TTC procedure, allowing agents to participate multiple times during the execution of Algorithm \ref{algoGreedyTTC}, each time with their currently allocated item. Another promising avenue is to explore the design of strongly incentive-compatible mechanisms that facilitate more exchanges than Algorithm \ref{algoGreedyTTC} under the threshold-triggered partition $\zeta$. Additionally, relaxing our assumptions about the design of permutation and partition functions—specifically allowing them to depend on parameters beyond just the arrival and departure times of agents—opens another interesting research direction. Finally, investigating relaxed forms of strategy-proofness could provide valuable insights, particularly by accounting for agents' limited foresight regarding future arrivals of items and agents. Such an approach could incorporate probabilistic assumptions about arrival patterns, enabling mechanisms to achieve improved allocations. If the decision-maker has knowledge of these probabilities, optimal decisions could be crafted based on this information. These types of probabilistic assumptions are closely related to the online stochastic matching problem \cite{Feldman09,Huang21}. A similar approach has already been considered for the fair assignment of public goods \cite{banerjee2022online,Banerjee23}.

\vskip 0.2in
\bibliography{biblio}
\bibliographystyle{theapa}

\end{document}